\theoremstyle{plain}
\newtheorem{theorem}{Theorem}[section]
\newtheorem{proposition}[theorem]{Proposition}
\newtheorem{lemma}[theorem]{Lemma}
\theoremstyle{definition}
\newtheorem{definition}[theorem]{Definition}
\theoremstyle{remark}
\newtheorem{remark}[theorem]{Remark}
\numberwithin{equation}{section} 
\newcommand{\vect}[1]{\mathbf{#1}}
\newcommand{\bk}{\vect{k}}
\newcommand{\bu}{\vect{u}}
\newcommand{\bv}{\vect{v}}
\newcommand{\bw}{\vect{w}}
\newcommand{\bx}{\vect{x}}
\newcommand{\by}{\vect{y}}
\newcommand{\bg}{\vect{g}}
\newcommand{\bbf}{\vect{f}}
\newcommand{\field}[1]{\mathbb{#1}}
\newcommand{\nN}{\field{N}}
\newcommand{\nZ}{\field{Z}}
\newcommand{\nQ}{\field{Q}}
\newcommand{\nR}{\field{R}}
\newcommand{\cA}{\mathcal A}
\newcommand{\cD}{\mathcal D}
\newcommand{\nT}{\mathbb T}
\newcommand{\vphi}{\varphi}
\newcommand{\maps}{\rightarrow}
\newcommand{\sand}{\quad\text{and}\quad}
\newcommand{\pd}[2]{\frac{\partial #1}{\partial #2}}
\newcommand{\abs}[1]{\left\lvert#1\right\rvert}
\newcommand{\set}[1]{\left\{#1\right\}}
\newcommand{\ip}[2]{\left<#1,#2\right>}
\newcommand{\iip}[2]{\left<\left<#1,#2\right>\right>}
\newcommand{\pnt}[1]{\left(#1\right)}
\newcommand{\diff}[1]{\widetilde{#1}}
\newcommand{\bud}{\diff{\bu}}
\newcommand{\thetad}{\diff{\theta}}
\newcommand{\initialData}[1]{#1_0}
\newcommand{\buInit}{\initialData{\bu}}
\newcommand{\thetaInit}{\initialData{\theta}}
\newcommand{\comments}[1]{}
\newcommand{\nn}{\nonumber}
\newcommand{\ra}{\rightarrow}
\newcommand{\lra}{\longrightarrow}
\newcommand{\h}{\mathbb H}
\title{On the Attractor for the Semi-Dissipative Boussinesq Equations}
\date{\today}
\author[umbc]{Animikh Biswas\corref{cor1}}
\address[umbc]{Department of Mathematics \& Statistics,
        University of Maryland, Baltimore County,
        1000 Hilltop Circle.
        Baltimore, MD 21250, USA}
\ead{abiswas@umbc.edu}
\author[tamu]{Ciprian Foias}
\address[tamu]{Department of Mathematics,
                Texas A\&M University,
        College Station, TX 77843, USA}
\ead{foias@math.tamu.edu}
\author[unl]{Adam Larios}
\address[unl]{Department of Mathematics, 
                University of Nebraska--Lincoln,
        Lincoln, NE 68588-0130, USA}
\ead{alarios@unl.edu}
\begin{document}

\begin{keyword}
 Boussinesq equations \sep global attractor \sep semi-dissipative system \sep Navier-Stokes equations \sep  turbulence

\MSC 
35B41 \sep 
35K51 \sep 
35K55 \sep 
35Q30 \sep 
35Q35 \sep 
35Q86 \sep 
76D09 \sep 
76F05 \sep 
76F25 
\end{keyword}

\begin{abstract}
In this article, we study the long time behavior of solutions of a variant of the Boussinesq system in which the equation for the velocity is parabolic while the equation for the temperature is hyperbolic. We prove that the system has a global attractor which retains some of the properties of the global attractors for the 2D and 3D Navier-Stokes equations. Moreover, this attractor contains infinitely many invariant manifolds in which several universal properties of the Batchelor, Kraichnan, Leith theory of turbulence are potentially present.

\comments{We propose a notion of a global attractor for the 2D semi-dissipative Boussinesq equations, which is a mixed parabolic-hyperbolic system.  This new notion of attractor generalizes the usual definition given in the case of the 2D Navier-Stokes equations.  We prove that this attractor retains some of the properties of attractor for the 2D Navier-Stokes equations, albeit in a weaker sense.  We also prove that smooth solutions to the 2D semi-dissipative Boussinesq equations have the property of backward uniqueness.}

\noindent
 {\sc R\'esum\'e.} Dans cet article nous \'etudions le comportment au temps infini des solutions d'une partiellement dissipative system du  Boussinesq, dont une est parabolique et l'autre est hyperbolique. Dans ce but nous introduire un attracteur universel qui retient plusieurs propriet\'es des attracteurs universels des
 \'equations de Navier-Stokes en dimension deux ou trois, qui, in particulier, contient une infinit\'e de 
 variet\'es invariantes dans lesquelles plusieurs propriet\'es universelles de la th\'eorie de la turbulence bidimensionnelle de Batchelor, Kraichnan et Leith, sont potentiellement pr\'esentes.
\end{abstract}

\maketitle
\thispagestyle{empty}

\noindent
\tableofcontents
\section{Introduction}\label{sec:Int}

\noindent
A critical part of understanding turbulence in geophysical flows lies in understanding the large-time dynamics of the system.  A tool that provides deep insight in studying large-time dynamics of certain systems is the study of global attractors, which typically requires the underlying dynamical system to be dissipative; see for instance \cite{CFT85,  Constantin_Foias_1988, Robinson_2001, Temam_1997_IDDS} and references therein.  For instance, in case of the fully dissipative Boussinesq system, the study of the long term dynamics via global attractors has been accomplished in \cite{FMT}.
 However, in certain physical regimes, the dynamical system governing geophysical flows is modeled to be only semi-dissipative.  That is, the system is dissipative in some variables, but not in others.  This means that the usual notion of global attractors no longer applies.  In this work, we propose a new type of global attractor---which we call a weak sigma-attractor---that captures the  large-time dynamics of a well-known semi-dissipative model for ocean flows known as the semi-dissipative 2D Boussinesq system
 \cite{Larios_Lunasin_Titi_2013}. The notion of the {\it weak sigma-attractor}, which in some respects 
 is a hybrid of the attractors of the 2D and 3D Navier-Stokes equations
 \cite{FRT2010,FT1987}, exploits the fact that  while only part of the system is dissipative, the other part has a hyperbolic structure that gives rise to certain conserved quantities which in turn correspond to invariant sets in the phase space.  
 We expect that this approach, which is new to the best of our knowledge,  will be useful in studying the long-term dynamics of many other semi-dissipative systems.  


The global attractor for a dissipative dynamical system is usually defined as either the maximal compact invariant set, the minimal set which uniformly attracts all bounded sets, or the set of points on complete bounded trajectories.  For many dissipative systems (e.g., the heat equation, the 2D Navier-Stokes equations, the 1D Kuramoto-Sivashinsky equations, and many others), these definitions are equivalent.  For the semi-dissipative Boussinesq system, this equivalence fails, and therefore the definition must be modified.  In particular, compactness must be sacrificed.  However, as we will show, it can be recovered in a certain local sense, as the attractor we propose will be shown to be a countable union of weakly compact invariant sets that retain several properties of the attractor for the 2D Navier-Stokes equations.  Moreover, the attraction properties we describe will be in the sense of the weak topology rather than the strong topology.  This aspect is similar to the situation for the 3D Navier-Stokes equations \cite{FRT2010,FT1987}, where, due to the lack of knowledge about the global well-posedness of the system, one studies a weak attractor rather than a strong global attractor.  However, the system we study is known to be globally well-posed, 
and thus, the use of the weak topology is intrinsic to the semi-dissipative nature of the system under consideration. In contrast to this, if the global well-posedness for the 3D Navier-Stokes equations  were to be established, the corresponding weak attractor will immediately become an  attractor in the usual (strong) sense.

A curious feature of the 2D semi-dissipative Boussinesq system \eqref{Bouss}  pertaining to the study of 2D turbulence deserves mention here.   In particular, the equations potentially combine a wealth of turbulent dynamics in a single system.  This is due to the fact that the right-hand side of the momentum equation is a time-dependent dynamical variable, but that its $L^p$ norms are time-independent ($1\leq p\leq\infty$).  Thus, the velocity component of  \eqref{Bouss} will indeed contain turbulent dynamics corresponding to all Grashof numbers, so long as one can establish that arbitrarily large  \textit{effective Grashof number}--defined as the magnitude of the divergence-free part of the force that drives the velocity equation--can be achieved for large time.  We discuss this issue in greater detail in Section \ref{sec:2dturb}.
On the invariant  subsets mentioned above, the system is expected to have time statistics consistent with the Batchelor-Kraichnan-Leith \cite{Bat, Kr} empirical theory of 2D turbulence, albeit with different Grashof numbers varying from subset to subset. We can therefore think of the 2D semi-dissipative Boussinesq system as a platform which hosts an enormous variety of 2D turbulent dynamics.  It appears to be a model system for the study of attractors, in the same way that C. Elegans and Drosophilia are model species for study in biology.  (Similar statements may be made in the 3D case.)  We emphasize that this system  was not created in an \textit{ad-hoc} fashion to have these particular properties, but instead arises naturally as a relevant system in geophysics.  We will show that for system \eqref{Bouss}, the weak sigma-attractor is not only non-trivial, but it is an extremely rich proper subset of the phase space. We hope that the study of the long-term dynamics of this system will inspire further research leading to greater insights into 2D turbulence, and perhaps suggest new approaches to this area.

The paper is organized as follows. In Section \ref{system}, we describe the semi-dissipative Boussinesq system, which is the main subject of our study.
In Section \ref{sec_prelim}, we lay out notation and preliminary material.
In Section \ref{ssec_well_posedness}, we recall well-known well-posedness results which will be useful in the sequel.
In Section \ref{sec_Semigroup}, we prove that smooth solutions to system \eqref{Bouss} have the backward-uniqueness property.  \comments{The proof uses the invention of a non-standard quantity combining the variables of the hyperbolic and parabolic parts of the system. }
In Section \ref{sec_notion_of_attractor}, we propose the notion of a weak sigma-attractor for system \eqref{Bouss}, describe several new classes of steady states (as well as time variant solutions) of the system, and state the main theorem on the properties of the weak sigma-attractor.  
In Section \ref{sec_main_proof}, we prove the properties discussed in Section \ref{sec_notion_of_attractor}.
In Section \ref{sec:2dturb}, we discuss connections to turbulence as evidenced in the 2D Navier-Stokes equations with forcing in {\it all} Fourier modes, while in Section \ref{sec_Energy_Enstrophy}, we discuss the projection of  the attractor onto the energy-enstrophy plane.  We find certain points in this setting which correspond to steady columnar flows which are in hydrostatic balance.
In Section \ref{sec_open_questions}, we discuss some open questions related to this study.  
Section \ref{sec_appendix} is an appendix, where some standard facts are collected for reference of the reader.

\subsection{Semi-dissipative Boussinesq system}  \label{system}

The semi-dissipative 2D Boussinesq system (without rotation) in the
periodic  domain $\Omega:=[0,L]^2$ for time $t \ge 0$ is given by
\begin{subequations}\label{Bouss}
\begin{alignat}{2}
\label{Bouss_mo}
\partial_t\bu + (\bu\cdot\nabla)\bu +\nabla p&= \nu\triangle\bu+ \theta \bg,
\quad&& \text{in }\Omega\times \nR_+,\\
\label{Bouss_den}
\partial_t\theta + (\bu\cdot\nabla)\theta \;\;\,\quad\quad &=0,
\quad&& \text{in }\Omega\times \nR_+,\\
\label{Bouss_div}
\nabla \cdot \bu &=0,
\quad&& \text{in }\Omega\times\nR_+,\\
\label{Bouss_init}
\bu(\bx,0)=\buInit(\bx),
\quad\theta(\bx,0)&=\thetaInit(\bx),
\quad&& \text{in }\Omega,
\end{alignat}
\end{subequations}
equipped with periodic boundary conditions in space \cite{Larios_Lunasin_Titi_2013}.  
Here $\nu>0$ is the fluid viscosity.  
The spatial variable is denoted
$\bx=(x_1,x_2)\in\Omega$, and the unknowns are the fluid velocity field
$\bu\equiv\bu(\bx,t)\equiv(u_1(\bx,t),u_2(\bx,t))$, the fluid pressure
$p(\bx,t)$, and the function $\theta\equiv\theta(\bx,t)$, which may be
interpreted physically, e.g., as the temperature variable.  
We write
$\bg=(0,g)^T$ for the constant, upward-pointing gravity vector, where $g$
is the (scalar) acceleration due to gravity.  It is straight-forward to show the Galilean invariance and mean-preserving property of \eqref{Bouss} (see Section \ref{ssec_Galilean_Invariance}).  Therefore, we henceforth assume that $\int_\Omega\bu\,d\bx = 0$ and $\int_\Omega\theta\,dx=0$, and similarly for the initial data.  Furthermore, all function spaces used in this work will be assumed to contain only mean-free elements, unless otherwise indicated.
We remark that, with minimal adjustments, the results here are also valid in the presence of a
Coriolis rotational term.

%

 \section{Notation and Some Specific Preliminaries}\label{sec_prelim}
\noindent 
 \comments{ In fairness to the reader, we must discuss the abbreviated functional notation we will use throughout the proofs. Most of those proofs are based on calculations with long formulas and estimates. Using the classical notation would substantially increase the space necessary for those proofs. Therefore we introduce the functional notation in the preliminary section and, in order to help the reader become familiar with the notation, we use it simultaneously with the classical notation in that section.}
Let $\mathcal{F}$ be the set of all vector-valued trigonometric polynomials with periodic domain $\Omega=\nT^2:=\nR^2/(L\nZ^2)\equiv[0,L]^2$.  We define a space of test functions
\[\mathcal{V}:=\set{\vphi\in\mathcal{F}:\nabla\cdot\vphi=0\text{ and  }\int_{\nT^2}\vphi(x)\,dx=0}.\]
We denote by $L^2$ and $\h^m$ the usual Lebesgue and Sobolev spaces over $\nT^2$, and define $H$ and $V$ to be the closures of $\mathcal{V}$ in $L^2$ and $\h^1$ respectively.  We will often use the notation $(\cdot,\cdot)$ to denote pairs, so to avoid confusion, we denote the inner products on $H$ and $V$ respectively by
\[\ip{\bu}{\bv}:=\sum_{i=1}^2\int_{\nT^2} u_iv_i\,dx
\sand
\iip{\bu}{\bv}:=\sum_{i,j=1}^2\int_{\nT^2}\pd{u_i}{x_j}\pd{v_i}{x_j}\,dx,
\]
and the associated norms, respectively by
\[
\|\bu\|_{L^2}:=\ip{\bu}{\bu}^{1/2},\quad
\|\bu\|_{\h^1}:=\iip{\bu}{\bu}^{1/2}. 
\]
 This notation will also be extended to apply to tensors, such as $\nabla\bu$, in the natural way, which should not be a source of confusion.  Note that $\|\cdot\|$ is a norm due to the Poincar\'e inequality, \eqref{poincare}, below.
We denote by $V'$ the dual space of $V$.  The action of $V'$ on $V$ is denoted by $\ip{\cdot}{\cdot}_{V',V}$.  Note that we have the continuous embeddings $V\hookrightarrow H\hookrightarrow V'$.
Moreover, by the Rellich-Kondrachov compactness theorem (see, e.g., \cite{Adams_Fournier_2003,Evans_2010}), these embeddings are compact due to the boundedness of the domain $\nT^2$.

We denote by $P_\sigma:L^2\maps H$ the Leray-Helmholtz projection operator (i.e., the orthogonal projection onto divergence-free vector spaces), and the Stokes operator $A:=-P_\sigma\triangle$ with domain $\mathcal{D}(A):=\h^2\cap V$.   In our case of periodic boundary conditions, it is well-known that $A=-\triangle$  (see, e.g., \cite{Constantin_Foias_1988,Temam_1995_Fun_Anal}).  $A^{-1}:H\maps H$ is a positive-definite, self-adjoint, compact operator, and therefore has
an orthonormal basis of eigenfunctions $\vphi_k$ corresponding to a
non-increasing sequence of eigenvalues (see, e.g.,
\cite{Constantin_Foias_1988,Temam_1995_Fun_Anal}).  We label the
eigenvalues $\lambda_k$ of $A$ so that
$0<\lambda_1\leq\lambda_2\leq\lambda_3\leq\cdots$.  Furthermore, for all $\bw\in V$, we have
the Poincar\'e inequality,
\begin{align}\label{poincare}
 \kappa_0\|\bw\|_{L^2}\leq \|\bw\|_{\h^1},\qquad \kappa_0:= 2\pi/L.
\end{align}
Due to \eqref{poincare}, for $\bw\in \mathcal{D}(A)$, we have the norm equivalences
\begin{equation}\label{elliptic_reg}
   \|A \bw\|_{L^2}\cong\| \bw\|_{\h^2}
   \sand
  \|\nabla \bw\|_{L^2}\cong\|\bw\|_{\h^1}.
\end{equation}
In fact, for any $s\in\nR$, one can show that
\begin{equation}
   \|\bw\|_{\h^s} \cong \|A^{s/2} \bw\|_{L^2}
\end{equation}
for $\bw\in\mathcal{D}(A^{s/2})$, where $\mathcal{D}(A^{s/2})$ and $A^{s/2}$ are defined in a natural way by the eigenvalues and eigenfunctions of $A$ (see, e.g.,
\cite{Constantin_Foias_1988,Temam_1995_Fun_Anal}).

We also note the Ladyzhenskaya inequality, which in 2D reads
\begin{align}\label{ladyzhenskaya}
 \|B(\bu,\bu)\|_{L^2} \le C\|\bu\|_{L^2}^{1/2}\|A\bu\|_{L^2}^{1/2}\|\bu\|_{\h^1},
\end{align}
and the Agmon inequlaity, which in 2D reads
\begin{align}\label{agmon}
\|\bu\|_{L^{\infty}} \le C\|\bu\|_{L^2}^{1/2}\|A\bu\|_{L^2}^{1/2}.
\end{align}

We use the standard notation
\begin{equation}\label{Bdef}
 B(\bw_1,\bw_2):=P_\sigma((\bw_1\cdot\nabla)\bw_2)
\end{equation}
for $\bw_1,\bw_2\in\mathcal{V}$.  
We note several important properties of $B$ which are proven, e.g., in \cite{Constantin_Foias_1988, Foias_Manley_Rosa_Temam_2001, Temam_1995_Fun_Anal, Temam_2001_Th_Num}.
The operator $B$ defined in \eqref{Bdef} is a bilinear form which can be extended as a continuous map $B:V\times V\maps V'$.  Furthermore, for $\bw_1$, $\bw_2$, $\bw_3\in V$,
\begin{subequations}\label{symmetry}
\begin{align}
 \ip{B(\bw_1,\bw_2)}{\bw_3}_{V',V}&=-\ip{B(\bw_1,\bw_3)}{\bw_2}_{V',V},
\\\label{Bzero}
 \ip{B(\bw_1,\bw_2)}{\bw_2}_{V',V}&=0.
\end{align}
\end{subequations}
The following inequalities hold for smooth functions:
\begin{subequations}
\begin{align}
 \label{B442}
 |\ip{B(\bw_1,\bw_2)}{\bw_3}|
 &\leq
 C\|\bw_1\|_{L^2}^{1/2}\|\bw_1\|_{\h^1}^{1/2}
\|\bw_2\|_{L^2}^{1/2}\|\bw_2\|_{\h^1}^{1/2}
 \|\bw_3\|_{\h^1}
 \\
  \label{B424}
 |\ip{B(\bw_1,\bw_2)}{\bw_3}|
 &\leq
 C\|\bw_1\|_{L^2}^{1/2}\|\bw_1\|_{\h^1}^{1/2}
 \|\bw_2\|_{\h^1}
 \|\bw_3\|_{L^2}^{1/2}\|\bw_3\|_{\h^1}^{1/2}
  \\
  \label{Bi22}
 |\ip{B(\bw_1,\bw_2)}{\bw_3}|
 &\leq
 C\|\bw_1\|_{L^2}^{1/2}\|A\bw_1\|_{L^2}^{1/2}
 \|\bw_2\|_{\h^1}
\|\bw_3\|_{L^2}
   \\
  \label{B2i2}
 |\ip{B(\bw_1,\bw_2)}{\bw_3}|
 &\leq
 C\|\bw_1\|_{L^2}
 \|\bw_2\|_{\h^1}^{1/2}\|A^{3/2}\bw_2\|_{L^2}^{1/2}
 \|\bw_3\|_{L^2}.
\end{align}
\end{subequations}
\noindent
These results also hold if $B$ is extended to allow $\bw_2$ and $\bw_3$ to be scalar-valued functions in the indicated spaces (we still require $\nabla\cdot\bw_1=0$), and $A$ is replaced by $-\triangle$.

\subsection{Known Well-Posedness Results}\label{ssec_well_posedness}

The global well-posedness of system \eqref{Bouss} in the case $(\buInit,\thetaInit)\in (\h^3\cap V)\times \h^3$ was proven in \cite{Chae_2005}, and in the case $(\buInit,\thetaInit)\in (\h^2\cap V)\times \h^3$ in \cite{Hou_Li_2005}.  The requirements on the initial data were weakened in  \cite{Danchin_Paicu_2008_French}, where the following theorem was proven (see also \cite{Hmidi_Keraani_2007,Larios_Lunasin_Titi_2013} for related results and additional discussion).
\begin{definition}  \label{def:Bouss_wk}
We say that a  pair $(\bu,\theta)$ solves \eqref{Bouss} if  $(\bu(t),\theta(t)) \, \in \, H \times L^2$ for each $t >0$,  $\bu$ and $\theta$ are both periodic and mean-free, and they satisfy
\begin{subequations}\label{Bouss_wk}
\begin{align}
&\quad\label{Bouss_wk_mo}
-\int_0^T\ip{\bu(s)}{\Phi'(s)}\,ds
+\int_0^T\ip{\bu(s)\otimes\bu(s)}{\nabla\Phi} \,ds
 \\
 &\notag
 =
\ip{\buInit}{\Phi(\cdot,0)}
-\nu\int_0^T\iip{\bu(s)}{\Phi(s)}\,ds
+\int_0^T\ip{\theta(s)\bg}{\Phi(s)}\,ds,
\\ 
&\quad
-\int_0^T\ip{\theta(s)}{\phi'(s)}\,ds
+\label{Bouss_wk_den}
\int_0^T\ip{\bu(s)\theta(s)}{\nabla\phi(s)}\,ds
=
\ip{\thetaInit}{\phi(\cdot,0)}
\end{align}
\end{subequations}
for all mean-free, space-periodic scalar test functions $\phi(x,t) \in C^\infty(\Omega \times [0,T])$, such that $\phi(x,T) =0$; and for all mean-free, space-periodic vector-valued test functions $\Phi(x,t)\in  [ C^\infty(\Omega \times [0,T])]^2$ such that $\nabla\cdot\Phi(\cdot,t) =0$, $\Phi(\cdot,T) =0$.

A pair $(\bu(t),\theta(t)) \in H\times L^2, t \in \nR$ is a global solution if 
\begin{subequations}\label{Bouss_wk_global}
\begin{align}
&\quad\label{Bouss_wk_mo_global}
-\int_{-\infty}^\infty\ip{\bu(s)}{\Phi'(s)}\,ds
+\int_{-\infty}^\infty \ip{\bu(s)\otimes\bu(s)}{\nabla\Phi} \,ds
 \\&\notag
=
-\nu\int_{-\infty}^\infty \iip{\bu(s)}{\Phi(s)}\,ds
\int_{-\infty}^{\infty}\ip{\theta(s)\bg}{\Phi(s)}\,ds,
\\ 
&\quad
-\int_{-\infty}^\infty \ip{\theta(s)}{\phi'(s)}\,ds
+\label{Bouss_wk_den_global}
\int_{-\infty}^\infty \ip{\bu(s)\theta(s)}{\nabla\phi(s)}\,ds
= 0,
\end{align}
\end{subequations}
where $\phi \in C^\infty(\Omega \times (-\infty,\infty))$ and $\Phi \in [C^\infty(\Omega \times (-\infty,\infty))]^2$
are exactly as before, except that they are assumed to be compactly supported in time.
As mentioned before, the pressure $p$ is recovered via \eqref{pressuredef}.

\end{definition}
\begin{theorem}[\cite{Danchin_Paicu_2008_French}]\label{thm_Danchin_Paicu}
 Suppose $(\buInit,\thetaInit)\in H\times L^2$.  Then, for any $T>0$, the system\footnote{in the appropriate weak formulation, \eqref{Bouss_wk} below} \eqref{Bouss} has a unique solution $(\bu,\theta)$ such that
 \begin{align*}
 &  \bu\in C([0,T],H)\cap L^2([0,T],V),
  &&  \theta\in C([0,T],L^2),
  \\
 & \pd{\bu}{t}\in L^2([0,T],V'), 
 && \pd{\theta}{t}\in L^4([0,T],\h^{-3/2}).
 \end{align*}
If additionally, $\buInit\in V$, then 
\begin{align*}
 &  \bu\in C([0,T],V)\cap L^2([0,T],\cD(A)),
  && 
  \\
 & \pd{\bu}{t}\in L^2([0,T],H), 
 && \pd{\theta}{t}\in L^4([0,T],\h^{-1}).
 \end{align*}
\end{theorem}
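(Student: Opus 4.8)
The plan is to follow the standard Galerkin-plus-compactness scheme, treating the parabolic velocity equation and the hyperbolic transport equation for $\theta$ on a slightly different footing, and to isolate uniqueness (which is delicate because the transport equation carries no smoothing) as the crux. First I would construct approximate solutions: project \eqref{Bouss_mo}--\eqref{Bouss_div} onto the span of the first $n$ eigenfunctions of $A$, and regularize the transport equation \eqref{Bouss_den} by adding an artificial diffusion $\epsilon\triangle\theta$ (or by Galerkin-truncating it in the same basis). The resulting finite-dimensional ODE has quadratic, locally Lipschitz nonlinearity and hence a unique local solution; the a priori bounds below make it global and give the limit.

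For the a priori estimates I would pair the (regularized) temperature equation with $\theta$ and use \eqref{Bzero} to obtain $\frac{1}{2}\frac{d}{dt}\|\theta\|_{L^2}^2\le 0$, i.e. $\|\theta(t)\|_{L^2}\le\|\thetaInit\|_{L^2}$ (equality in the inviscid limit). Pairing \eqref{Bouss_mo} with $\bu$, using $\ip{B(\bu,\bu)}{\bu}=0$, Poincar\'e \eqref{poincare} and Young, yields
\[
\tfrac{1}{2}\tfrac{d}{dt}\|\bu\|_{L^2}^2 + \tfrac{\nu}{2}\|\bu\|_{\h^1}^2 \le C|g|^2\|\thetaInit\|_{L^2}^2,
\]
whence $\bu\in L^\infty([0,T],H)\cap L^2([0,T],V)$ uniformly in $n$ and $\epsilon$. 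The time-derivative bounds follow by duality: from $\partial_t\bu=-B(\bu,\bu)-\nu A\bu+P_\sigma(\theta\bg)$ and the estimate $\|B(\bu,\bu)\|_{V'}\le C\|\bu\|_{L^2}\|\bu\|_{\h^1}$ (a consequence of \eqref{B442}) one gets $\partial_t\bu\in L^2([0,T],V')$; testing the transport term against $\psi\in\h^{3/2}$ and rewriting it as $\ip{\bu\theta}{\nabla\psi}$, then using Ladyzhenskaya \eqref{ladyzhenskaya} together with the 2D embedding $\h^{1/2}\hookrightarrow L^4$, gives $\|(\bu\cdot\nabla)\theta\|_{\h^{-3/2}}\le C\|\bu\|_{L^2}^{1/2}\|\bu\|_{\h^1}^{1/2}\|\theta\|_{L^2}$, hence $\partial_t\theta\in L^4([0,T],\h^{-3/2})$.

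To pass to the limit I would invoke the Aubin--Lions lemma: the bounds on $\bu$ and $\partial_t\bu$ give strong convergence of $\bu_n$ in $L^2([0,T],H)$, while the bounds on $\theta$ and $\partial_t\theta$ give compactness in $C([0,T],\h^{-1})$ and weak-$*$ convergence in $L^\infty([0,T],L^2)$. The only subtlety is the transport nonlinearity, which I would pair with a smooth test function and rewrite as $\ip{\theta_n}{\bu_n\cdot\nabla\phi}$, combining strong convergence of $\bu_n\cdot\nabla\phi$ in $L^2([0,T],L^2)$ with weak convergence of $\theta_n$; the momentum nonlinearity $\bu_n\otimes\bu_n\to\bu\otimes\bu$ is handled in the usual way. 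For the higher-regularity statement (when $\buInit\in V$) I would run the enstrophy estimate: pairing \eqref{Bouss_mo} with $A\bu$, bounding $|\ip{B(\bu,\bu)}{A\bu}|$ via \eqref{ladyzhenskaya} and Young, gives $\frac{d}{dt}\|\bu\|_{\h^1}^2+\nu\|A\bu\|_{L^2}^2\le C\|\bu\|_{L^2}^2\|\bu\|_{\h^1}^4+C\|\thetaInit\|_{L^2}^2$; since $\int_0^T\|\bu\|_{\h^1}^2<\infty$, the uniform Gr\"onwall lemma yields $\bu\in C([0,T],V)\cap L^2([0,T],\cD(A))$, whence $\partial_t\bu\in L^2([0,T],H)$ and, via Agmon \eqref{agmon}, $\partial_t\theta\in L^4([0,T],\h^{-1})$.

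The main obstacle is uniqueness, precisely because the temperature equation has no dissipation and $\theta$ lives only in $L^2$: an $L^2$ estimate on the difference $\eta:=\theta_1-\theta_2$ fails, as it would require a derivative on one of the non-smooth temperatures. The remedy is to measure $\eta$ in the negative norm $\h^{-1}$ and $\bw:=\bu_1-\bu_2$ in $L^2$, and to close a combined Gr\"onwall inequality for $\|\bw\|_{L^2}^2+\|\eta\|_{\h^{-1}}^2$. In the $\bw$-equation the buoyancy coupling is absorbed by the viscous dissipation, since $|\ip{\eta\bg}{\bw}|\le|g|\,\|\eta\|_{\h^{-1}}\|\bw\|_{\h^1}\le\frac{\nu}{4}\|\bw\|_{\h^1}^2+C\|\eta\|_{\h^{-1}}^2$, and the nonlinear difference $B(\bw,\bu_1)+B(\bu_2,\bw)$ is controlled by \eqref{B442} after using \eqref{Bzero}. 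Applying $A^{-1}$ to the $\eta$-equation and pairing with $\eta$ produces a transport commutator $\ip{\nabla A^{-1}\eta}{\bu_1\eta}$ and a cross term $\ip{\nabla A^{-1}\eta}{\bw\theta_2}$; controlling the latter in terms of $\|\bw\|_{L^2}$, $\|\eta\|_{\h^{-1}}$ and $\|\theta_2\|_{L^2}$ is the delicate point, and it is here that the sharp product and embedding estimates (and ultimately the refined functional framework of \cite{Danchin_Paicu_2008_French}) are required. Once this term is tamed, Gr\"onwall closes the argument and delivers uniqueness together with continuous dependence on the data.
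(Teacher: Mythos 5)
You should first be aware that the paper contains no proof of Theorem \ref{thm_Danchin_Paicu}: it is imported verbatim from \cite{Danchin_Paicu_2008_French} and used as a black box, so there is no in-paper argument to compare yours against. Judged on its own merits, your existence-and-regularity scheme is sound and standard: the Galerkin/artificial-viscosity construction, the energy and enstrophy estimates, the duality bounds giving $\partial_t\theta\in L^4([0,T],\h^{-3/2})$ (resp.\ $\h^{-1}$), and the passage to the limit in the transport nonlinearity by pairing weak convergence of $\theta_n$ against strong convergence of $\bu_n\cdot\nabla\phi$ are all correct. (A minor omission: your compactness only yields $\theta\in C([0,T],\h^{-1})$ with $\theta(t)$ weakly continuous in $L^2$; upgrading to $\theta\in C([0,T],L^2)$ requires the DiPerna--Lions renormalization argument showing $\|\theta(t)\|_{L^2}$ is conserved, cf.\ Proposition \ref{prop_theta_dist_conserved}.)

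The genuine gap is in the uniqueness step, which you correctly identify as the crux but then do not actually close. With $\bw:=\bu_1-\bu_2$, $\eta:=\theta_1-\theta_2$ and $y:=\|\bw\|_{L^2}^2+\|\eta\|_{\h^{-1}}^2$, the cross term $\ip{\bw\theta_2}{\nabla A^{-1}\eta}$ admits, at the regularity $\theta_2\in L^2$, no better bound than $\|\bw\|_{L^4}\|\theta_2\|_{L^2}\|\nabla A^{-1}\eta\|_{L^4}\le C\|\bw\|_{L^2}^{1/2}\|\bw\|_{\h^1}^{1/2}\|\eta\|_{\h^{-1}}^{1/2}\|\eta\|_{L^2}^{1/2}$ (one cannot put $\nabla A^{-1}\eta$ in $L^\infty$, since $\h^1\not\hookrightarrow L^\infty$ in 2D), and the transport commutator $\ip{\eta}{\bu_1\cdot\nabla A^{-1}\eta}$ behaves the same way. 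After absorbing $\|\bw\|_{\h^1}$ into the viscous dissipation and using that $\|\eta\|_{L^2}$ is bounded by the data, one is left with a differential inequality of the form $\dot y\le C(t)\,y+C(t)\,y^{2/3}$, whose right-hand side is not Osgood at $y=0$; hence $y(0)=0$ does not force $y\equiv0$ and the Gr\"onwall argument fails. This is precisely why Danchin and Paicu do not run a fixed-norm energy estimate: their uniqueness proof for $\thetaInit\in L^2$ rests on \emph{losing} a priori estimates for the transport equation in negative-index Besov spaces (the regularity index of the space in which $\eta$ is measured decreases in time at a rate governed by $\int\|\nabla\bu\|$), combined with a logarithmic interpolation inequality. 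Deferring this to ``sharp product and embedding estimates'' therefore leaves out the one idea that makes the theorem true at this level of regularity.
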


Recently, results on well-posedness and persistence of regularity were improved including the treatment of the bounded domain setting \cite{Hu_Kukavica_Ziane_2014,Hu_Kukavica_Ziane_2013}.  The following two theorems collects some of the results of the aforementioned works.

\begin{theorem}[Combined results from \cite{Chae_2005, Danchin_Paicu_2008_French,Hou_Li_2005,Hu_Kukavica_Ziane_2014,Hu_Kukavica_Ziane_2013}]
\label{thm_Well_Posedness1}  Let $s\geq0$. Suppose $\buInit\in \h^{1+s}\cap V$, and $\thetaInit\in \h^s$.  Then $\bu\in C([0,\infty),\h^{1+s})\cap L_{\text{loc}}^2([0,\infty),\h^{2+s})$ and $\theta\in C([0,\infty),\h^s)$.
\end{theorem}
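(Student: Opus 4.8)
The plan is to obtain the stated regularity through a hierarchy of a priori energy estimates, carried out on a Galerkin approximation and passed to the limit by Aubin--Lions compactness, with the time-continuity statements recovered at the end. Two facts anchor everything. First, because \eqref{Bouss_den} transports $\theta$ along the divergence-free field $\bu$, every $L^p$ norm of $\theta$ is conserved; in particular $\|\theta(t)\|_{L^2}=\|\thetaInit\|_{L^2}$ for all $t$. Second, Theorem \ref{thm_Danchin_Paicu} already delivers the case $s=0$, giving $\bu\in C([0,\infty),V)\cap L^2_{\mathrm{loc}}([0,\infty),\cD(A))$ and $\theta\in C([0,\infty),L^2)$. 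The enstrophy bound behind that case sets the pattern for the induction: pairing \eqref{Bouss_mo} with $A\bu$, the buoyancy term is controlled by $|g|\,\|\theta\|_{L^2}\|A\bu\|_{L^2}$ using conservation of $\|\theta\|_{L^2}$, and the nonlinear term $\ip{B(\bu,\bu)}{A\bu}$ by the Ladyzhenskaya inequality \eqref{ladyzhenskaya}; both are absorbed into $\tfrac{\nu}{2}\|A\bu\|_{L^2}^2$ by Young's inequality. This leaves a Riccati-type inequality $\frac{d}{dt}\|\bu\|_{\h^1}^2\le C\|\bu\|_{L^2}^2\|\bu\|_{\h^1}^4+C'$; regarding $C\|\bu\|_{L^2}^2\|\bu\|_{\h^1}^2$ as the coefficient of the factor $\|\bu\|_{\h^1}^2$, which is time-integrable by the basic energy estimate, Gronwall's inequality closes it.

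For the inductive step to general $s>0$ I would run a coupled estimate for $(\bu,\theta)$ at the level $\h^{1+s}\times\h^s$. Applying $A^{(1+s)/2}$ to \eqref{Bouss_mo} and $A^{s/2}$ to \eqref{Bouss_den}, testing against $A^{(1+s)/2}\bu$ and $A^{s/2}\theta$, and using $\nabla\cdot\bu=0$ to kill the top-order transport piece, one arrives at
\begin{align*}
\tfrac12\frac{d}{dt}\big(\|\bu\|_{\h^{1+s}}^2+\|\theta\|_{\h^s}^2\big)+\nu\|\bu\|_{\h^{2+s}}^2
&=-\ip{A^{(1+s)/2}B(\bu,\bu)}{A^{(1+s)/2}\bu}
+\ip{A^{(1+s)/2}(\theta\bg)}{A^{(1+s)/2}\bu}\\
&\quad-\ip{[A^{s/2},\bu\cdot\nabla]\theta}{A^{s/2}\theta}.
\end{align*}
The structural point is the buoyancy term: shifting one power of $A^{1/2}$ onto $\bu$ by self-adjointness bounds it by $|g|\,\|\theta\|_{\h^s}\|\bu\|_{\h^{2+s}}$, so the parabolic gain on $\bu$ exactly compensates the one-derivative regularity gap, and this term splits under Young's inequality into a piece absorbed by $\tfrac{\nu}{4}\|\bu\|_{\h^{2+s}}^2$ and a harmless $C\|\theta\|_{\h^s}^2$. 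The velocity nonlinearity is handled by commutator and product estimates in the Sobolev scale, again leaning on $\tfrac{\nu}{4}\|\bu\|_{\h^{2+s}}^2$ and the $s=0$ bounds. The remaining commutator term carries no dissipation to absorb it, so it must reduce, via a Kato--Ponce estimate (in its refined Besov form for the borderline exponents), to $C\|\nabla\bu\|_{L^\infty}\|\theta\|_{\h^s}^2$. Writing $Y:=\|\bu\|_{\h^{1+s}}^2+\|\theta\|_{\h^s}^2$, one is left with
\begin{equation*}
\frac{d}{dt}\,Y+\nu\|\bu\|_{\h^{2+s}}^2\le C\big(1+\|\nabla\bu\|_{L^\infty}\big)\,Y,
\end{equation*}
which Gronwall closes on each $[0,T]$ provided $\int_0^T\|\nabla\bu\|_{L^\infty}\,dt<\infty$.

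The main obstacle is precisely this integrability of $\|\nabla\bu\|_{L^\infty}$, and it is where the absence of thermal diffusion is felt. In vorticity form the system reads $\partial_t\omega+\bu\cdot\nabla\omega=\nu\triangle\omega+g\,\partial_1\theta$ together with $\partial_t\theta+\bu\cdot\nabla\theta=0$, and the forcing $g\,\partial_1\theta$ cannot be smoothed because $\theta$ is merely transported, so no elementary maximum principle bounds $\|\omega\|_{L^\infty}$. This is the genuine content of the cited works: one first establishes an a priori bound on $\|\omega\|_{L^\infty}$ exploiting the conservation of $\|\theta\|_{L^p}$ and the precise vorticity--temperature coupling, and then invokes a logarithmic Sobolev inequality of Beale--Kato--Majda/Br\'ezis--Gallouet type, $\|\nabla\bu\|_{L^\infty}\le C\big(1+\|\omega\|_{L^\infty}\log(e+\|\bu\|_{\h^{1+s}})\big)$, so that the inequality for $Y$ becomes a log-Gronwall inequality still yielding a finite bound on each $[0,T]$. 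Once $Y\in L^\infty_{\mathrm{loc}}$ and $\bu\in L^2_{\mathrm{loc}}(\h^{2+s})$ are secured on the approximants, I would pass to the limit, read off $\partial_t\bu\in L^2_{\mathrm{loc}}(\h^{s})$ from the equation and conclude $\bu\in C([0,\infty),\h^{1+s})$ by the Lions--Magenes interpolation lemma, while $\theta\in C([0,\infty),\h^s)$ follows because $t\mapsto\|\theta(t)\|_{\h^s}^2$ is continuous (its derivative being the integrable commutator term), which upgrades to strong continuity the weak continuity supplied by the transport equation.
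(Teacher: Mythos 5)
First, a point of comparison: the paper does not actually prove Theorem \ref{thm_Well_Posedness1} --- it is stated as a combination of results imported from the cited references, with no internal argument. So your proposal can only be measured against those works. Your architecture does match theirs: coupled higher-order energy estimates for $(\bu,\theta)$ at the level $\h^{1+s}\times\h^s$, a commutator estimate for the transport equation, the observation that the buoyancy term costs only $g\|\theta\|_{\h^s}\|\bu\|_{\h^{2+s}}$ and is absorbed by the dissipation, and the reduction of everything to $\int_0^T\|\nabla\bu\|_{L^\infty}\,dt<\infty$. Those are the right anchors.

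The gap is in the step you describe as ``the genuine content of the cited works.'' You propose to first establish an a priori bound on $\|\omega\|_{L^\infty}$ and then invoke a Beale--Kato--Majda inequality phrased in terms of $\|\omega\|_{L^\infty}$. In this system the vorticity equation is forced by $g\,\partial_{x_1}\theta$, a distributional derivative of a quantity that is merely transported, so no maximum principle applies; and for $0\le s\le 1$ the hypothesis $\thetaInit\in\h^s$ does not even place $\theta$ in $L^\infty$, so the Duhamel/heat-kernel route (which does give $\omega\in L^\infty_{\mathrm{loc}}(L^\infty)$ when $\thetaInit\in L^\infty$, via conservation of $\|\theta\|_{L^\infty}$) is unavailable on part of the stated range $s\ge 0$. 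What the cited works actually use is the enstrophy inequality obtained by integrating the buoyancy term by parts, $\ip{g\partial_{x_1}\theta}{\omega}=-\ip{g\theta}{\partial_{x_1}\omega}\le g\|\thetaInit\|_{L^2}\|\nabla\omega\|_{L^2}$, which yields $\omega\in L^\infty_{\mathrm{loc}}(L^2)\cap L^2_{\mathrm{loc}}(\h^1)\hookrightarrow L^2_{\mathrm{loc}}(\mathrm{BMO})$ in 2D, and then a logarithmic inequality in terms of $\|\omega\|_{\mathrm{BMO}}$ (Kozono--Taniuchi) or of $\|\nabla\bu\|_{\h^1}$ (Br\'ezis--Gallouet), not of $\|\omega\|_{L^\infty}$; the Osgood/log-Gr\"onwall argument then closes because these quantities are square-integrable in time thanks to the viscosity. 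A second, smaller overstatement: asserting that the commutator term ``must reduce to $C\|\nabla\bu\|_{L^\infty}\|\theta\|_{\h^s}^2$'' drops the Kato--Ponce companion term $\|\bu\|_{\h^s}\|\nabla\theta\|_{L^\infty}$, which for $1<s\le 2$ is not controlled by $Y$ and has to be handled by interpolating against the dissipation $\nu\|\bu\|_{\h^{2+s}}^2$ or by a different paraproduct splitting. Neither defect is fatal to the strategy, but as written the argument does not go through on the full range $s\ge 0$.
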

\noindent
\begin{theorem}[Combined results from \cite{Chae_2005, Danchin_Paicu_2008_French,Hou_Li_2005,Hu_Kukavica_Ziane_2014,Hu_Kukavica_Ziane_2013}]
\label{thm_Well_Posedness2}  Let $s>1$. Suppose $\buInit\in \h^{s}\cap V$, and $\thetaInit\in \h^s$.  Then $\bu\in C([0,\infty),\h^{s})\cap L_{\text{loc}}^2([0,\infty),\h^{2+s})$ and $\theta\in C([0,\infty),\h^s)$.
\end{theorem}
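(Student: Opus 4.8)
\noindent\emph{Proof proposal.} The plan is to bootstrap off the already-stated Theorem~\ref{thm_Well_Posedness1}, which conveniently decouples the difficulty. Since $s>1$, we may apply Theorem~\ref{thm_Well_Posedness1} with its parameter equal to $s-1\ge 0$: the hypotheses $\buInit\in\h^{1+(s-1)}\cap V=\h^{s}\cap V$ and $\thetaInit\in\h^{s-1}$ (which follows from $\thetaInit\in\h^{s}$) are satisfied, so we obtain at once
\[
 \bu\in C([0,\infty),\h^{s})\cap L^2_{\mathrm{loc}}([0,\infty),\h^{s+1}),
 \qquad \theta\in C([0,\infty),\h^{s-1}).
\]
Thus the velocity already has the claimed continuity in $\h^{s}$, and only two upgrades remain: (i) improve $\theta$ from $C(\h^{s-1})$ to $C(\h^{s})$, and (ii) improve $\bu$ from $L^2_{\mathrm{loc}}(\h^{s+1})$ to $L^2_{\mathrm{loc}}(\h^{s+2})$.

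For (i), which is the heart of the matter, I would pair the temperature equation with $A^{s}\theta$ (equivalently, apply $A^{s/2}$ and pair in $L^2$). Using $\nabla\cdot\bu=0$ one has $\ip{\bu\cdot\nabla A^{s/2}\theta}{A^{s/2}\theta}=0$, so only the commutator survives:
\[
 \tfrac12\frac{d}{dt}\norm{\theta}_{\h^{s}}^2
 = -\ip{[A^{s/2},\bu\cdot\nabla]\theta}{A^{s/2}\theta}
 \le \norm{[A^{s/2},\bu\cdot\nabla]\theta}_{L^2}\,\norm{\theta}_{\h^{s}}.
\]
A Kato--Ponce commutator estimate, with the low-regularity factor placed in a Lebesgue space through Sobolev embedding rather than in $L^\infty$ (to avoid requiring $\nabla\theta\in L^\infty$ when $s\le 2$), bounds $\norm{[A^{s/2},\bu\cdot\nabla]\theta}_{L^2}$ by $C\norm{\bu}_{\h^{s+1}}\norm{\theta}_{\h^{s}}$; here $s>1$ is used decisively, since the $2$D embedding $\h^{s}\hookrightarrow L^\infty$ also gives $\norm{\nabla\bu}_{L^\infty}\lesssim\norm{\nabla\bu}_{\h^{s}}=\norm{\bu}_{\h^{s+1}}$. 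Hence $\frac{d}{dt}\norm{\theta}_{\h^{s}}^2\le C\norm{\bu}_{\h^{s+1}}\norm{\theta}_{\h^{s}}^2$, and Gr\"onwall yields
\[
 \norm{\theta(t)}_{\h^{s}}^2\le\norm{\thetaInit}_{\h^{s}}^2
 \exp\Big(C\!\int_0^t\!\norm{\bu(\tau)}_{\h^{s+1}}\,d\tau\Big).
\]
The crucial point is that the Gr\"onwall coefficient is time-integrable: $\int_0^t\norm{\bu}_{\h^{s+1}}\,d\tau\le\sqrt{t}\,\norm{\bu}_{L^2([0,t],\h^{s+1})}<\infty$ by the reduction above. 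This gives $\theta\in L^\infty_{\mathrm{loc}}(\h^{s})$, and promoting it to $\theta\in C(\h^{s})$ is standard, combining $\theta\in C(\h^{s-1})$ with the uniform $\h^{s}$ bound and the transport structure (interpolation gives continuity below $\h^{s}$, which the equation then upgrades to $\h^{s}$).

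For (ii), with $\theta\in C(\h^{s})$ now in hand, I would read $\partial_t\bu+\nu A\bu=-B(\bu,\bu)+P_\sigma(\theta\bg)$ as a forced Stokes system and invoke maximal parabolic regularity. The forcing lies in $L^2_{\mathrm{loc}}(\h^{s})$: indeed $\norm{\theta\bg}_{\h^{s}}=|g|\,\norm{\theta}_{\h^{s}}\in L^\infty_{\mathrm{loc}}$, while the tame product estimate together with $\h^{s}\hookrightarrow L^\infty$ ($s>1$) gives $\norm{B(\bu,\bu)}_{\h^{s}}\lesssim\norm{\bu}_{\h^{s}}\norm{\bu}_{\h^{s+1}}\in L^2_{\mathrm{loc}}$. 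With $\buInit\in\h^{s}$, maximal regularity then yields $\bu\in L^2_{\mathrm{loc}}(\h^{s+2})$, completing the proof. The main obstacle is genuinely Step~(i): the temperature equation is purely hyperbolic, so there is no smoothing to control $\norm{\theta}_{\h^{s}}$, and all such control must be extracted through $\int_0^t\norm{\nabla\bu}_{L^\infty}$. What rescues the argument is that Theorem~\ref{thm_Well_Posedness1} already furnishes \emph{global} velocity regularity, so this integral is finite on every bounded interval and no Riccati-type finite-time blow-up can occur; the remaining effort is the commutator estimate and the careful bookkeeping of exponents, where the threshold $s>1$ enters.
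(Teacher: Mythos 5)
You should first be aware that the paper itself contains no proof of Theorem~\ref{thm_Well_Posedness2}: it is stated as a combination of results imported from the cited references, so there is no in-paper argument to compare yours against, and I can only assess your proposal on its own merits. Your reduction to Theorem~\ref{thm_Well_Posedness1} with parameter $s-1$ is correct, and your Step~(i) is the standard and essentially correct persistence-of-regularity argument for the transport part: pairing with $A^{s}\theta$, killing the transport term by incompressibility, and bounding the commutator by $C\norm{\bu}_{\h^{s+1}}\norm{\theta}_{\h^{s}}$ with the low-regularity factor placed in a Sobolev-conjugate Lebesgue space (so that $\nabla\theta\in L^{2/(2-s)}$ rather than $L^\infty$ is needed when $1<s\le 2$) is exactly how this is done; the Gr\"onwall coefficient is integrable because $\bu\in L^2_{\mathrm{loc}}(\h^{s+1})$ is already known. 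You should add a sentence justifying the formal pairing before $\theta(t)\in\h^{s}$ is known (mollify the velocity or truncate frequencies and pass to the limit), but that is routine.

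The genuine gap is in Step~(ii). Maximal $L^2$-in-time parabolic regularity landing in $L^2(0,T;\h^{s+2})$ requires the initial datum to lie in the trace space $\h^{s+1}=[\h^{s},\h^{s+2}]_{1/2}$, not merely in $\h^{s}$. With $\buInit\in\h^{s}$ only, even the free evolution fails: $\int_0^T\norm{e^{\nu t\triangle}\buInit}_{\h^{s+2}}^2\,dt\cong\norm{\buInit}_{\h^{s+1}}^2$, which is infinite in general, so no control on the forcing can produce $\bu\in L^2_{\mathrm{loc}}([0,\infty),\h^{s+2})$ on intervals containing $t=0$. What your argument actually delivers is $\bu\in L^2_{\mathrm{loc}}([0,\infty),\h^{s+1})\cap L^2_{\mathrm{loc}}((0,\infty),\h^{s+2})$, the latter by instantaneous smoothing (for a.e.\ $t_0>0$ one has $\bu(t_0)\in\h^{s+1}$, and maximal regularity with forcing in $L^2(\h^{s})$ then applies on $[t_0,T]$). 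This defect arguably traces back to the statement itself: comparing with Theorem~\ref{thm_Well_Posedness1}, where the exponent $2+s$ is paired with velocity data in $\h^{1+s}$, the exponent $2+s$ here is presumably a misprint for $1+s$. But as the theorem is written, your invocation of maximal regularity does not close this step, and you should either prove the weaker (correct) space-time integrability or restrict the conclusion to times bounded away from zero.
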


Next, we recall the well-known result (see, e.g., \cite{DiPerna_Lions_1989}) from which it follows that the transport equation has unique solution which preserves all $L^p$ norms as well as  distribution functions.   Recall that for a function  $\vphi\in L^2(\Omega)$,  the corresponding distribution function
$F_{\vphi}:\nR\maps[0,1]$ defined by
\begin{align}\label{dist_def}
 F_{\vphi}(\rho):=\frac{1}{|\Omega|}\int_{\Omega}\chi_{(-\infty,\rho]}(\vphi(x))\,dx 
 = 
 \frac{1}{|\Omega|}\abs{\set{x\in\Omega:\vphi(x)\leq\rho}}.
\end{align}
\begin{proposition}\label{prop_theta_dist_conserved}
Let $\bv\in L^2(0,T;V)$ and $\thetaInit\in L^2(\Omega)$, the Cauchy
problem
\begin{subequations}\label{transport}
 \begin{empheq}[left=\empheqlbrace]{align}\label{transport_eqn}
\partial_t\theta + \nabla\cdot(\bv\theta) &=0,\\
 \theta(\bx,0) &= \thetaInit(\bx),
\end{empheq}
\end{subequations}
has a unique solution $\theta=\theta(\bx,t)$ such that for a.e. $t\geq0$ and all
$\rho\in\nR$,
 \begin{align}\label{dist_funcs_constant}
  F_{\theta(\cdot,t)}(\rho) = F_{\thetaInit}(\rho).
 \end{align}
 In particular, for all $p\in[1,\infty]$, and a.e. $t\geq0$, 
 \begin{align}\label{Lp_conserved}
  \|\theta(t)\|_{L^p} = \|\thetaInit\|_{L^p}.
 \end{align}
 \end{proposition}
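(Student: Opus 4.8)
The statement is essentially a reformulation of the DiPerna--Lions theory of renormalized solutions \cite{DiPerna_Lions_1989}, so the plan is to extract the renormalization identity and then read off both the conservation of the distribution function \eqref{dist_funcs_constant} and of the $L^p$ norms \eqref{Lp_conserved}.

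First I would observe that, since $\bv(t)\in V$ for a.e.\ $t$, the drift is divergence-free, so $\nabla\cdot(\bv\theta)=\bv\cdot\nabla\theta$ and \eqref{transport} is a linear transport equation with a Sobolev, divergence-free velocity $\bv\in L^1(0,T;\h^1)$ and $\thetaInit\in L^2$. This is exactly the framework of \cite{DiPerna_Lions_1989}, which supplies a unique solution $\theta\in C([0,T];L^2)$; I would invoke that result for existence and uniqueness, since the only property I genuinely need beyond it is renormalization.

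Next comes renormalization and conservation of integrals. For every $\beta\in C^1(\nR)$ that is bounded together with $\beta'$, the composition $\beta(\theta)$ is again a distributional solution of the transport equation,
\[
\partial_t\beta(\theta)+\nabla\cdot\pnt{\bv\,\beta(\theta)}=0,
\]
the formal extra term $(\nabla\cdot\bv)\pnt{\beta(\theta)-\theta\beta'(\theta)}$ vanishing because $\nabla\cdot\bv=0$. Testing this identity against a function depending only on $t$ --- legitimate on the torus, where constants are smooth and periodic --- annihilates the flux term, since $\int_\Omega\nabla\cdot(\bv\,\beta(\theta))\,dx=0$ by periodicity. Hence $t\mapsto\int_\Omega\beta(\theta(\cdot,t))\,dx$ is constant, so
\[
\int_\Omega\beta(\theta(\cdot,t))\,dx=\int_\Omega\beta(\thetaInit)\,dx
\]
for a.e.\ $t\geq0$ and every admissible $\beta$.

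Then I would pass to the distribution function. Fixing $\rho\in\nR$, choose smooth $\beta_n$ with $\chi_{(-\infty,\rho]}\le\beta_n\le 1$ and $\beta_n\downarrow\chi_{(-\infty,\rho]}$ pointwise on all of $\nR$. Applying the conservation identity to each $\beta_n$ and letting $n\to\infty$ (dominated convergence, with bound $\abs{\Omega}$) yields $\int_\Omega\chi_{(-\infty,\rho]}(\theta(\cdot,t))\,dx=\int_\Omega\chi_{(-\infty,\rho]}(\thetaInit)\,dx$, which is precisely \eqref{dist_funcs_constant}. Finally, \eqref{Lp_conserved} follows as a corollary: for $p<\infty$ one has $\|\theta(t)\|_{L^p}^p=\abs{\Omega}\int_\nR\abs{\rho}^p\,dF_{\theta(\cdot,t)}(\rho)$, and $\|\theta(t)\|_{L^\infty}$ is recovered from $F_{\theta(\cdot,t)}$ as an essential supremum, so equality of distribution functions forces equality of all $L^p$ norms (both sides allowed to be $+\infty$ when $\thetaInit\notin L^p$). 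The genuine difficulty in all of this is the renormalization property itself, whose proof for a merely Sobolev drift rests on the DiPerna--Lions commutator lemma; I would cite it rather than reprove it, so that the only work left is the elementary measure-theoretic approximation of $\chi_{(-\infty,\rho]}$ by admissible $\beta_n$.
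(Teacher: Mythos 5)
Your proof is correct, but it takes a genuinely different route from the paper's. The paper argues in Lagrangian terms: since $\nabla\cdot\bv=0$, the solution $\theta$ is constant along streamlines, and the measure-preserving ``back to labels'' change of variables yields \eqref{dist_funcs_constant} directly; the endpoint case $p=\infty$ of \eqref{Lp_conserved} is then treated separately by a Hopf--Stampacchia truncation argument applied to $\theta-\|\thetaInit\|_{L^\infty}$. You instead stay entirely Eulerian: the DiPerna--Lions renormalization property gives conservation of $\int_\Omega\beta(\theta)\,dx$ for admissible $\beta$, and a monotone smooth approximation of $\chi_{(-\infty,\rho]}$ (with dominated convergence on the bounded torus) recovers the distribution function, from which every $L^p$ norm, including $p=\infty$, is read off at once. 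The trade-off is instructive: your route derives everything from a single conserved family of integrals and avoids invoking the flow map, which for a drift that is merely $L^2(0,T;V)$ is itself a nontrivial object (a regular Lagrangian flow in the DiPerna--Lions sense), whereas the paper's sketch is shorter to state but implicitly leans on that flow and still needs the separate $L^\infty$ argument. Both proofs ultimately rest on the same commutator lemma, which you correctly identify as the genuine content and cite rather than reprove---exactly as the paper does. One small point worth keeping in mind: the conservation identity for each fixed $\beta_n$ holds for a.e.\ $t$, so you should discard a countable union of null sets before passing to the limit in $n$; since the proposition only claims \eqref{dist_funcs_constant} for a.e.\ $t\geq 0$, this is harmless.
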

 The proof is well-known, but we sketch it here for completeness.
 \begin{proof}
 The existence and uniqueness is well-known; see, e.g.,
\cite{DiPerna_Lions_1989} and the references therein.  
Equation 
\eqref{dist_funcs_constant} is also well-known, and follows by noting that, 
since $\bv\in L^2(0,T;V)$ and $\theta$ satisfies \eqref{transport_eqn}, $\theta$ is
constant along streamlines.  A change of variables (sometimes called the ``back to labels'' map), using the fact the $\nabla\cdot\bv=0$, 
then yields \eqref{dist_funcs_constant} (see, e.g., \cite{Chorin_Marsden_1993,DiPerna_Lions_1989}
for more details).  Furthermore, \eqref{Lp_conserved} holds for $p\in[1,\infty)$ by applying \eqref{dist_funcs_constant} to the well-known identity (see, e.g., \cite{Knapp_Basic_Real_Analysis_2005}, VI.10)
\begin{align}\label{distribution_identity}
 \|\vphi\|_{L^p}^p = \int_{-\infty}^\infty\rho^p\,dF_\vphi(\rho).
\end{align}
The case $p=\infty$ is also classical, and can be proven, e.g.,  by using the Hopf-Stampaccia technique (see, e.g., \cite{Kinderlehrer_Stampacchia_1980}).  Namely, one notices that $\Theta:=\theta-\|\thetaInit\|_{L^\infty}$ also satisfies \eqref{transport_eqn}, and then takes the inner product of the resulting equation with $\Theta^+$ and $\Theta^-$, the positive and negative parts of $\Theta$.  Integrating by parts and using the divergence-free condition yields $\|\Theta\|_L^2=0$.
 \end{proof}


\subsection{A Functional Form of the equation}
As is customary for the Navier-Stokes equations, 
applying $P_\sigma$ to \eqref{Bouss_mo} (see Section \ref{sec_prelim} for notation) yields the functional equation
\begin{align}   \label{functmom}
 \frac{d\bu}{dt} + \nu A\bu + B(\bu,\bu) = P_\sigma(\theta\bg).
\end{align}
It is straight-forward to check that 
\begin{align}
 P_\sigma(\theta\bg) 
 = \theta\bg-g\nabla(\partial_{x_2}\triangle^{-1}\theta)
 =
\binom{-\partial_{x_1}\partial_{x_2}\triangle^{-1}}{I-\partial_{x_2}^2\triangle^
{-1}}\theta g.
\end{align}
As this is an orthogonal decomposition,
\begin{align}\|\theta\bg\|_{L^2}^2=\|P_\sigma(\theta\bg)\|_{L^2}^2+g^2\|\nabla(\partial_{x_2}\triangle^{-1}\theta)\|_{L^2}^2.\end{align}  
Thus,
\begin{align}\label{Riesz}
 g^{-2}\|P_\sigma(\bg\theta)\|_{L^2}^2
 &=\notag
 \|\theta\|_{L^2}^2-\pnt{\nabla\triangle^{-1}\partial_{x_2}\theta,\nabla\triangle^{-1}
\partial_{x_2}\theta}
 \\&=\notag
 \|\theta\|_{L^2}^2+\pnt{\triangle\triangle^{-1}\partial_{x_2}\theta,
 \triangle^{-1}\partial_{x_2}\theta}
 \\&=\notag
 \|\theta\|_{L^2}^2-\|(-\triangle)^{-1/2}\partial_{x_2}\theta\|_{L^2}^2
  \\&=
 \|\theta\|_{L^2}^2-\|R_2\theta\|_{L^2}^2=\|R_1\theta\|_{L^2}^2,
\end{align}
where $R_i:=(-\triangle)^{-1/2}\partial_{x_i}, i=1,2$ are the  Riesz-transforms, which can also be defined by their  Fourier symbols $\hat{R}_i(\bk):=ik_i/|\bk|, i=1,2$ at wave number $\bk\equiv(k_1,k_2)$.

 Due to the mean-free assumption, if one obtains a pair $(\bu,\theta)$  which solves \eqref{functmom} and \eqref{Bouss_den}--\eqref{Bouss_init}, then one can obtain the pressure $p$ by taking divergence in \eqref{Bouss_mo} and noting that 
\begin{gather}  \label{pressuredef}
p= -  \sum_{i,j} R_{ij}(u_iu_j)+ g\triangle^{-1}(\partial_{x_2} \theta), 
\quad
R_{ij}=\triangle^{-1}(\partial_{x_i}\partial_{x_j}).
\end{gather}
where $\triangle^{-1}$ is taken with respect to the periodic boundary conditions and the mean-free condition.
Thus, we will also refer to a pair $(\bu(t),\theta(t))$ solving \eqref{functmom} and \eqref{Bouss_den}--\eqref{Bouss_init} as a solution to the system \eqref{Bouss}.


 
\section{Backward Uniqueness}\label{sec_Semigroup}
Our main result in this section is  that the system \eqref{Bouss} has the property of backward uniqueness for sufficiently smooth solutions; namely, that if two (sufficiently smooth) solutions agree at some point in time, then they agree for all previous times.   While this is known for the 2D Navier-Stokes equations and for the transport equation, the result for the coupled system \eqref{Bouss} does not seem to follow in a straightforward manner from the corresponding results for the uncoupled system. Our technique is inspired by that  of \cite{Bardos_Tartar_1971}. The backward uniqueness result will be useful in studying 
the asymptotic dynamics of \eqref{Bouss}. For instance, together with well-posedness, it implies that if any two global trajectories intersect, then they must be identical.

Let us recall that the method in \cite{Bardos_Tartar_1971} (which
applies to a wide class of fully dissipative equations, including the 2D
Navier-Stokes equations) is to consider two solutions $\bu_1$ and $\bu_2$ which
are not identical at some time, and then to show that the log of the norm of
their difference, $\bud:=\bu_1-\bu_2$, remains bounded for all future times, so
that the solutions can never coincide in the future (this is what is meant by
backward uniqueness).  The estimates in \cite{Bardos_Tartar_1971}  
typically require one to show that the Dirichlet quotient $\lambda(t):=\|\bud\|_{\h^1}^2/\|\bud\|_{L^2}^2$  grows at most exponentially in time.  
The lack of diffusion in \eqref{Bouss_den} poses a hurdle in showing such a bound in our case.
However, we show below that by considering instead the hybrid quantity 
$I(t):=\frac{\|\bud\|_{\h^1}^2}{\|\bud\|_{L^2}^2+g^2\|\thetad\|_{L^2}^2}$, the
diffusion in \eqref{Bouss_mo} alone is enough to gain the necessary control over
both variables $\bu$ and $\theta$.  Note that $I(t)$ is not a Dirichlet quotient.

\begin{theorem}\label{thm_backward_uniqueness}
 For $i=1,2$, let $(\bu_i,\theta_i)$ be solutions to \eqref{Bouss}, such
that $(\bu_i,\theta_i) \in L^\infty((T_1,T_2);H\cap \h^3)\times
L^\infty((T_1,T_2);\h^3)$.  If there exists a time $T^* \in (T_1,T_2)$ such
that $(\bu_1(T^*),\theta_1(T^*)) = (\bu_2(T^*),\theta_2(T^*))$, then 
for a.e. $t\in (T_1,T_2)$, we have $(\bu_1(t),\theta_1(t)) = (\bu_2(t),\theta_2(t))$.
\end{theorem}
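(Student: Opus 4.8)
The plan is to adapt the log-convexity method of \cite{Bardos_Tartar_1971} to the hybrid quantity $I(t)$, using the viscous dissipation in \eqref{Bouss_mo} to control \emph{both} $\bud:=\bu_1-\bu_2$ and $\thetad:=\theta_1-\theta_2$ at once. Since forward uniqueness is already contained in Theorem~\ref{thm_Danchin_Paicu}, it suffices to prove that the solutions coincide on $(T_1,T^*]$. Subtracting the two copies of the functional momentum equation \eqref{functmom} and of \eqref{Bouss_den}, and using the bilinearity of $B$, I would first record the difference system
\[
\frac{d\bud}{dt}+\nu A\bud = -B(\bud,\bu_1)-B(\bu_2,\bud)+P_\sigma(\thetad\bg), \qquad \frac{d\thetad}{dt} = -(\bud\cdot\nabla)\theta_1-(\bu_2\cdot\nabla)\thetad.
\]
Writing $N(t):=\|\bud\|_{L^2}^2+g^2\|\thetad\|_{L^2}^2$ and $P(t):=\|\bud\|_{\h^1}^2=\ip{A\bud}{\bud}$, so that $I=P/N$, I would pair the first equation with $\bud$ and the second with $g^2\thetad$. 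Using \eqref{Bzero} (which annihilates $\ip{B(\bu_2,\bud)}{\bud}$ and $\ip{(\bu_2\cdot\nabla)\thetad}{\thetad}$) this gives $\dot N = -2\nu P + 2R$, with $R:=-\ip{B(\bud,\bu_1)}{\bud}+\ip{\thetad\bg}{\bud}-g^2\ip{(\bud\cdot\nabla)\theta_1}{\thetad}$.

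Arguing by contradiction, suppose $N$ does not vanish identically on $(T_1,T^*]$. Since $N$ is continuous and $N(T^*)=0$, there is a maximal subinterval $(a,b)\subseteq(T_1,T^*]$ on which $N>0$ with $N(b)=0$. On $(a,b)$ the quotient $I$ is finite and, by the assumed $\h^3$-regularity, absolutely continuous, and I would compute $\dot I = N^{-1}(\dot P - I\dot N)$ from $\dot P = -2\nu\|A\bud\|_{L^2}^2 -2\ip{A\bud}{B(\bud,\bu_1)+B(\bu_2,\bud)}+2\ip{A\bud}{P_\sigma(\thetad\bg)}$. The heart of the argument is the algebraic identity
\[
-2\nu\|A\bud\|_{L^2}^2+2\nu IP = -2\nu\|A\bud-I\bud\|_{L^2}^2 - 2\nu IP\,\frac{g^2\|\thetad\|_{L^2}^2}{N} \le -2\nu\|A\bud-I\bud\|_{L^2}^2,
\]
which shows that, although $I$ is not a Dirichlet quotient, the leading terms still produce the good ``variance'' term $-2\nu\|A\bud-I\bud\|_{L^2}^2$, the hyperbolic contribution $g^2\|\thetad\|_{L^2}^2$ only improving the sign. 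Next I would split each remaining pairing against $A\bud$ as $\ip{A\bud}{X}=\ip{A\bud-I\bud}{X}+I\ip{\bud}{X}$; the projections $I\ip{\bud}{B(\bud,\bu_1)}$ and $I\ip{\bud}{P_\sigma(\thetad\bg)}$ cancel exactly against the corresponding terms in $-2IR$, which is precisely the mechanism by which the velocity dissipation tames the undamped temperature. The surviving terms $\ip{A\bud-I\bud}{\,\cdot\,}$ are absorbed into $-2\nu\|A\bud-I\bud\|_{L^2}^2$ by Young's inequality after estimating $\|B(\bud,\bu_1)\|_{L^2}$, $\|B(\bu_2,\bud)\|_{L^2}$ and $\|P_\sigma(\thetad\bg)\|_{L^2}$ via \eqref{ladyzhenskaya}, \eqref{agmon} and \eqref{Riesz}, while the transport term $g^2 I\ip{(\bud\cdot\nabla)\theta_1}{\thetad}$ is controlled using $\|\nabla\theta_1\|_{L^\infty}$, finite because $\h^3\hookrightarrow W^{1,\infty}$ in two dimensions. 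All resulting bounds have the form $\kappa(t)(1+I)N$ with $\kappa\in L^\infty(T_1,T_2)$, yielding the differential inequality $\dot I\le \kappa(t)(1+I)$ on $(a,b)$.

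By Gr\"onwall's inequality $I$ then stays bounded on $[t_1,b)$ for any $t_1\in(a,b)$. Finally, from $\dot N=-2\nu P+2R$ one gets $\frac{d}{dt}\log N = -2\nu I + 2R/N \ge -C$ on $[t_1,b)$, since $|R|/N\le C(1+I)$ is bounded there; integrating up to $b$ forces $\log N(b)>-\infty$, contradicting $N(b)=0$. Hence $N\equiv 0$ on $(T_1,T^*]$, i.e.\ $(\bu_1,\theta_1)=(\bu_2,\theta_2)$ there, and forward uniqueness from Theorem~\ref{thm_Danchin_Paicu} extends this to all of $(T_1,T_2)$. The main obstacle is exactly the difficulty flagged by the authors: because \eqref{Bouss_den} carries no dissipation, no estimate is available for the Dirichlet quotient of $\bud$ alone, so the entire argument hinges on choosing the hybrid denominator $N$ so that simultaneously (i) the algebraic identity above still yields a good sign and (ii) the coupling $\ip{A\bud}{P_\sigma(\thetad\bg)}$ cancels against the temperature contribution in $R$; arranging these two structural facts to hold together, rather than any single estimate, is the crux.
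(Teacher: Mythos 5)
Your proposal is correct and follows essentially the same route as the paper: the same hybrid quantity $I=\|\bud\|_{\h^1}^2/(\|\bud\|_{L^2}^2+g^2\|\thetad\|_{L^2}^2)$, the same key identity producing the variance term $-\nu\|A\bud-I\bud\|_{L^2}^2$ (the paper states it as the inequality \eqref{viscous_term_bound}), the same cancellation of the $I\ip{\bud}{\cdot}$ projections, and the same Gr\"onwall-plus-logarithm contradiction with $N(T^*)=0$. The only cosmetic differences are that you track $\log N$ directly where the paper uses $L=-\tfrac12\log N$, and you split the nonlinearity as $B(\bud,\bu_1)+B(\bu_2,\bud)$ rather than $B(\bu_1,\bud)+B(\bud,\bu_2)$, both of which are equivalent.
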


\begin{proof}
 Denote $\bud :=\bu_1-\bu_2$, $\thetad:=\theta_1-\theta_2$,
$\widetilde{p}:= p_1-p_2$.  
\begin{subequations}\label{Bouss_eps_diff}
\begin{align}
\label{Bouss_eps_diff_mo}
\partial_t\bud + (\bu_1\cdot\nabla)\bud+ (\bud\cdot\nabla)\bu_2 +\nabla
\widetilde{p}&= \nu\triangle\bud+ \thetad \bg,
\\
\label{Bouss_eps_diff_den}
\partial_t\thetad 
+(\bu_1\cdot\nabla)\thetad +(\bud\cdot\nabla)\theta_2 
 &=0.
\end{align}
\end{subequations}

Suppose that there exists some $T^*>0$ for which both $\bud(T^*)\equiv0$ and
$\thetad(T^*)\equiv0$, but that there also exists some $t_0\in (T_1,T^*)$ such that
$\bud(t_0)\not\equiv0$ or $\thetad(t_0)\not\equiv0$.  Due to the continuity in
time given by Theorem \ref{thm_Well_Posedness2}, there is an interval
$[t_0,t_0+\delta)$ on which $\bud\not\equiv0$ or $\thetad\not\equiv0$.  Let us
denote by $[t_0,t_1)$ the largest such interval, noting that we must have
$\bud(t_1)\equiv0$ and $\thetad(t_1)\equiv0$.  For the remainder of this
section, we work on this interval.  Taking the inner product of
\eqref{Bouss_eps_diff_mo} with $\bud$, and with $A\bud$, and of
\eqref{Bouss_eps_diff_den} with $\thetad$, integrating by parts, and using \eqref{Bzero} yields
\begin{subequations}\label{Bouss_eps_diff_L2}
\begin{align}
\label{Bouss_eps_diff_mo_L2}
\frac12\frac{d}{dt}\|\bud\|_{L^2}^2 &= -\nu\|\bud\|_{\h^1}^2-\ip{(\bud\cdot\nabla)\bu_2}{\bud}+
\ip{\thetad \bg}{\bud},
\\
\label{Bouss_eps_diff_mo_V}
\frac12\frac{d}{dt}\|\bud\|_{\h^1}^2 &= 
-\nu\|A\bud\|_{L^2}^2  -\ip{(\bud\cdot\nabla)\bu_2}{A\bud}+ \ip{\thetad \bg}{A\bud},
\\
\label{Bouss_eps_diff_den_L2}
\frac12\frac{d}{dt}\|\thetad\|_{L^2}^2
 &=-\ip{(\bud\cdot\nabla)\theta_2}{\thetad}.
\end{align}
\end{subequations}

Consider the quantity $L$ defined on $[t_0,t_1)$ by
\begin{align}
 L(t):=\log\pnt{\frac{1}{\sqrt{\|\bud\|_{L^2}^2+g^2\|\thetad\|_{L^2}^2}}} = -\frac{1}{2}\log\pnt{\|\bud\|_{L^2}^2+g^2\|\thetad\|_{L^2}^2}
\end{align}
Using \eqref{Bouss_eps_diff_mo_L2} and \eqref{Bouss_eps_diff_den_L2}, we compute
\begin{align}
\label{dL_eqn}
 \frac{dL}{dt}
 &=\frac{-1}{\|\bud\|_{L^2}^2+g^2\|\thetad\|_{L^2}^2}\pnt{\frac12\frac{d}{dt}\|\bud\|_{L^2}^2+\frac{g^2}{2}\frac{d}{dt}\|\thetad\|_{L^2}^2}
 \\&= \notag
 \frac{\nu\|\bud\|_{\h^1}^2+\ip{(\bud\cdot\nabla)\bu_2}{\bud}- \ip{\thetad \bg}{\bud}+g^2\ip{(\bud\cdot\nabla)\theta_2}{\thetad}}{\|\bud\|_{L^2}^2+g^2\|\thetad\|_{L^2}^2}
 \\&\leq\notag
 \nu \frac{\|\bud\|_{\h^1}^2}{\|\bud\|_{L^2}^2+g^2\|\thetad\|_{L^2}^2}
  +\frac{\ip{(\bud\cdot\nabla)\bu_2}{\bud}}{\|\bud\|_{L^2}^2+g^2\|\thetad\|_{L^2}^2}
  +\frac12
  +\frac{g^2\ip{(\bud\cdot\nabla)\theta_2}{\thetad}}{\|\bud\|_{L^2}^2+g^2\|\thetad\|_{L^2}^2}
  \\&:= \notag
  \nu I+II+\frac12+III
\end{align}
where we have used Young's inequality. Thus, in order to find a bound for $dL/dt$, we search for a bound on the quotients $I$, $II$, and $III$ in the right-hand side of \eqref{dL_eqn}.  First, note that, due to \eqref{B424} and \eqref{Bi22}, 
\begin{align}
 \ip{(\bud\cdot\nabla)\bu_2}{\bud}
 &\leq 
 C\|\bu_2\|_{\h^1}\|\bud\|_{L^2}\|\bud\|_{\h^1}
 \leq 
 C\|\bu_2\|_{\h^1}^2\|\bud\|_{L^2}^2+\|\bud\|_{\h^1}^2,
\\
\label{u_theta2_thetad_ineq}
 g^2\ip{(\bud\cdot\nabla)\theta_2}{\thetad}
 &\leq
 Cg^2\|\triangle\theta_2\|_{L^2}\|\bud\|_{\h^1}\|\thetad\|_{L^2}
 \leq 
 Cg^4\|\triangle\theta_2\|_{L^2}^2\|\thetad\|_{L^2}^2+\|\bud\|_{\h^1}^2.
\end{align}
Thus,
\begin{align}
\label{II_bound}
 II &:= \frac{\ip{(\bud\cdot\nabla)\bu_2}{\bud}}{\|\bud\|_{L^2}^2+g^2\|\thetad\|_{L^2}^2}
 \leq
 \frac{ C\|\bu_2\|_{\h^1}^2\|\bud\|_{L^2}^2}{\|\bud\|_{L^2}^2+g^2\|\thetad\|_{L^2}^2}
 +
 \frac{\|\bud\|_{\h^1}^2}{\|\bud\|_{L^2}^2+g^2\|\thetad\|_{L^2}^2}
 \leq
  C\|\bu_2\|_{\h^1}^2
 +
 I,
\end{align}
and
\begin{align}
\label{III_bound}
 III &:= 
 \frac{g^2\ip{(\bud\cdot\nabla)\theta_2}{\thetad}}{\|\bud\|_{L^2}^2+g^2\|\thetad\|_{L^2}^2}
 \leq
 \frac{Cg^4\|\triangle\theta_2\|_{L^2}^2\|\thetad\|_{L^2}^2+\|\bud\|_{\h^1}^2}{\|\bud\|_{L^2}^2+g^2\|\thetad\|_{L^2}^2}
 \leq
  Cg^2\|\triangle \theta_2\|_{L^2}^2+I
\end{align}

In order to estimate $I$, we compute $dI/dt$.  \comments{A similar approach was first employed in \cite{Bardos_Tartar_1971} in the context of the 2D Navier-Stokes equations, which formally corresponds to setting $g=0$ in $I$.}  First, we derive an inequality which will be useful in estimating $dI/dt$.  
\begin{align}
\label{viscous_term_bound}
&\quad
 \|A\bud\|_{L^2}^2  -I\|\bud\|_{\h^1}^2
 =
 \|A\bud\|_{L^2}^2  -\frac{\|\bud\|_{\h^1}^4}{\|\bud\|_{L^2}^2+g^2\|\thetad\|_{L^2}^2}
   \\&\geq\notag
 \|A\bud\|_{L^2}^2  -2\frac{\|\bud\|_{\h^1}^4}{\|\bud\|_{L^2}^2+g^2\|\thetad\|_{L^2}^2}
 +\frac{\|\bud\|_{\h^1}^4}{\|\bud\|_{L^2}^2+g^2\|\thetad\|_{L^2}^2}
 \cdot
 \frac{\|\bud\|_{L^2}^2}{\|\bud\|_{L^2}^2+g^2\|\thetad\|_{L^2}^2}
 \\&=\notag
 \|A\bud\|_{L^2}^2  -2I\ip{A\bud}{\bud}+I^2\|\bud\|_{L^2}^2
 \\&=\notag
 \|A\bud  -I\bud\|_{L^2}^2.
\end{align}
Using \eqref{Bouss_eps_diff_L2}, we compute
{\allowdisplaybreaks
\begin{align}
 \frac12\frac{dI}{dt}
 &:=
 \frac{d}{dt}\frac{\frac12\|\bud\|_{\h^1}^2}{\|\bud\|_{L^2}^2+g^2\|\thetad\|_{L^2}^2}
 = 
 \frac{\frac12\frac{d}{dt}\|\bud\|_{\h^1}^2
 }{\|\bud\|_{L^2}^2+g^2\|\thetad\|_{L^2}^2}
 -
  \frac{\|\bud\|_{\h^1}^2\frac12\frac{d}{dt}(\|\bud\|_{L^2}^2+g^2\|\thetad\|_{L^2}^2)
 }{(\|\bud\|_{L^2}^2+g^2\|\thetad\|_{L^2}^2)^2}
 \\\notag&= 
 \frac{-\nu\|A\bud\|_{L^2}^2  -\ip{(\bud\cdot\nabla)\bu_2}{A\bud}+ \ip{\thetad \bg}{A\bud}
 }{\|\bud\|_{L^2}^2+g^2\|\thetad\|_{L^2}^2}
 \\\notag&\notag\quad+
  \frac{\|\bud\|_{\h^1}^2\pnt{
  \nu\|\bud\|_{\h^1}^2+\ip{(\bud\cdot\nabla)\bu_2}{\bud}- \ip{\thetad \bg}{\bud}
  +g^2\ip{(\bud\cdot\nabla)\theta_2}{\thetad}
  }
 }{(\|\bud\|_{L^2}^2+g^2\|\thetad\|_{L^2}^2)^2}
  \\\notag&= 
 \frac{-\nu\|A\bud\|_{L^2}^2  -\ip{(\bud\cdot\nabla)\bu_2}{A\bud}+ \ip{\thetad \bg}{A\bud}
 }{\|\bud\|_{L^2}^2+g^2\|\thetad\|_{L^2}^2}
 \\&\notag\quad+
  I\cdot\frac{
  \nu\|\bud\|_{\h^1}^2+\ip{(\bud\cdot\nabla)\bu_2}{\bud}
  -\ip{\thetad \bg}{\bud}
  +g^2\ip{(\bud\cdot\nabla)\theta_2}{\thetad}
 }{\|\bud\|_{L^2}^2+g^2\|\thetad\|_{L^2}^2}
   \\\notag&= 
 \frac{-\nu(\|A\bud\|_{L^2}^2  -I\|\bud\|_{\h^1}^2)
 -\ip{(\bud\cdot\nabla)\bu_2}{A\bud-I\bud}
 + \ip{\thetad \bg}{A\bud-I\bud}
 }{\|\bud\|_{L^2}^2+g^2\|\thetad\|_{L^2}^2}
 \\&\notag\quad+
  I\frac{
  g^2\ip{(\bud\cdot\nabla)\theta_2}{\thetad}
 }{\|\bud\|_{L^2}^2+g^2\|\thetad\|_{L^2}^2}.
 \end{align}
} 
Using this with \eqref{B442}, \eqref{viscous_term_bound} and \eqref{B2i2}, we find,
 {\allowdisplaybreaks
\begin{align}
    \frac12\frac{dI}{dt}&\leq
 \frac{-\nu\|A\bud  -I\bud\|_{L^2}^2
 +\|\bud\|_{\h^1}\|A\bu_2\|_{L^2}\|A\bud  -I\bud\|_{L^2} 
 + g\|\thetad\|_{L^2}\|A\bud  -I\bud\|_{L^2}
 }{\|\bud\|_{L^2}^2+g^2\|\thetad\|_{L^2}^2}
 \\&\notag\quad+
 I\frac{
  g^2\|\bud\|_{L^2}\|\theta_2\|_{\h^3}\|\thetad\|_{L^2}
 }{\|\bud\|_{L^2}^2+g^2\|\thetad\|_{L^2}^2}
     \\\notag&\leq
 \frac{-\nu\|A\bud  -I\bud\|_{L^2}^2/2
 +\|\bud\|_{\h^1}^2\|A\bu_2\|_{L^2}^2/\nu
 + g^2\|\thetad\|_{L^2}^2/\nu 
 }{\|\bud\|_{L^2}^2+g^2\|\thetad\|_{L^2}^2}
+ I\frac{g\|\theta_2\|_{\h^3}}{2}
     \\\notag&\leq
 I\frac{\|A\bu_2\|_{L^2}^2}{\nu}
  + \frac{1}{\nu}
+ I\frac{g\|\theta_2\|_{\h^3}}{2},
\end{align}
} 
where we have used Young's inequality for the last estimate.  Applying Gr\"onwall's inequality on $[t_0,t)$, $t\in[t_0,t_1)$ now yields
\begin{align}
 I(t)
&\leq 
 I(0) e^{\int_0^tK(s)\,ds}+\frac{1}{\nu}\int_0^t e^{\int_\tau^tK(s)\,ds}\,d\tau
 \\&\leq \notag
 I(0) e^{\int_0^{T^*}K(s)\,ds}+\frac{1}{\nu}\int_0^{T^*} e^{\int_\tau^{T^*}K(s)\,ds}\,d\tau,
\end{align}
where $K(s):= \frac{\|A\bu_2(s)\|_{L^2}^2}{\nu} + \frac{g\|\theta_2(s)\|_{\h^3}}{2}$
Thus, $I$ is bounded on $[t_0,t)$.  By \eqref{II_bound} and \eqref{III_bound}, this in turn shows that $II$ and $III$ are also bounded, so that \eqref{dL_eqn} implies that $dL/dt$ is bounded on $[t_0,t)$. Integrating in $dL/dt$ in $t$ over $[t_0,T^*)$, we find that $L$ is bounded on $[t_0,T^*)$.  In particular, 
\begin{align}
 \pnt{\|\bud\|_{L^2}^2+g^2\|\thetad\|_{L^2}^2}^{-1/2}
\end{align}
is bounded on $[t_0,T^*)$, contradicting the assumptions that $\bud(T^*)=0$ and $\thetad(T^*)=0$.
\end{proof}

 The global well-posedness results (Theorem \ref{thm_Danchin_Paicu}--\ref{thm_Well_Posedness2})
 imply that \eqref{Bouss} has a well-defined semigroup  operator $S(t)$ defined for $t\geq0$ by 
\begin{align}\label{def_semigroup_operator}
S(t)(\buInit,\thetaInit) = (\bu(t),\theta(t)).
\end{align}
The backward uniqueness result implies that the semigroup is injective on at least the smooth portion of the set which we will define to be the attractor.  This in turn implies that $S(t)$ can be extended to hold for negative times on the smooth trajectories in that set.


 
\section{An Adequate Notion of an Attractor}\label{sec_notion_of_attractor}
We look for a notion of an attractor which applies to
\eqref{Bouss}.  While some analogies with the attractor, $\cA_{\text{NS}}$, of the 2D
Navier-Stokes equations can be made, there are also striking differences.  Therefore, for comparison, let us recall that for the two-dimensional Navier-Stokes
equations,
\begin{align}\label{NSE}
 \frac{d\bu}{dt}+\nu A\bu + B(\bu,\bu) = f
\end{align}
(where $f\in H$ is time-independent), the global attractor $\cA_{\text{NS}}$ has
the following equivalent definitions (see, e.g., \cite{Constantin_Foias_1988,Robinson_2001}):
\begin{enumerate}[(I)]
 \item\label{A_attracting} $\cA_{\text{NS}}$ is the minimal set which uniformly
attracts all bounded sets in $H$ as $t\maps\infty$.
 \item\label{A_bounded}$\cA_{\text{NS}}$ is the maximal compact set of all $\buInit\in H$ such that
the solution  of \eqref{NSE} satisfying $\bu(0)=\buInit$ exists for all $t\in\nR$
(i.e., backward and forwards in time) and satisfies
 \begin{align}
  \sup_{t\in\nR}\|\bu(t)\|_{L^2}<\infty
 \end{align}
 \item\label{A_invariant} $\cA_{\text{NS}}$ is the maximal bounded set $X\subset
H$ such that $S_{\text{NS}}(t)X=X$ for all $t\in\nR$, where $S_{\text{NS}}(t)$ is the semigroup operator
for \eqref{NSE}.
\end{enumerate}
 Moreover, it is well-known that the attractor for the two-dimensional Navier-Stokes equations has finite fractal dimension (and therefore finite Hausdorff dimension) (see, e.g., \cite{Constantin_Foias_1988, CFT85}). 

Clearly, any reasonable notion of attractor of a system must include the steady
states of the system.   While the above definitions are equivalent ways to define the global
attractors for many dissipative systems, this is not the
case for system \eqref{Bouss} due to the lack of dissipation in
\eqref{Bouss_den}.  Indeed, we will describe below (see Remark \ref{rem:steadystates}) a set of steady states which are neither bounded nor 
finite-dimensional. Next we describe special classes of solutions of \eqref{Bouss} which play an important role in our study.

\subsection{Special Classes of Solutions and Steady States}  \label{steadystates}
Here we will provide three families of explicit solutions of \eqref{Bouss} corresponding to purely vertical (V),  purely horizontal (H) and plane wave solutions.

\begin{itemize}
\item[(i)] {\bf Vertical solutions: }
Let $a^V,\theta^V\in L^2([0,L])$ be an arbitrary periodic, mean-zero function depending only on the vertical variable: $\theta^V=\theta^V(x_2)$.  Set $u^V_2=0$, and define $u^V_1=u^V_1(x_2,t)$ by the following unforced diffusion problem:
\begin{equation}\label{vertical_heat_eqn}
\begin{cases}
  \partial_t u^V_1 -\nu\partial_{x_2}^2 u^V_1 &= 0,
 \\
 u^V_1(x_2,0) &= a^V(x_2),
\end{cases}
\end{equation}
along with periodic boundary conditions on $[0,L]$ and the mean-free condition.  Let $p^V$ be defined up to an arbitrary constant by\footnote{Property \eqref{p_V_def} is often referred to as hydrostatic balance in the geophysical literature.}
\begin{align}\label{p_V_def}
 \partial_{x_2}p^V = g\theta^V.
\end{align}
It is easy to check that $(u_1,u_2,\theta,p)=((u^V_1,u^V_2),\theta^V,p^V)$ is a solution to \eqref{Bouss_mo}-\eqref{Bouss_div} with initial data $u_1(\bx,0)=a^V(x_2)$, $u_2(\bx,0) = 0$, and $\theta(\bx,0)=\theta^V$.

\noindent
{\bf Steady State:} It is easy to see that the vertical solution $(\bu^V,\theta^V)$ defined above converges to the steady state $({\mathbf 0},\theta^V)$ as $t \ra \infty$.

\item[(ii)] {\bf Horizontal Solutions:}  Let $a^H,\theta^H\in L^2([0,L])$ be  arbitrary periodic, mean-zero functions depending only on the horizontal variable: $\theta^H=\theta^H(x_1)$.  Set $u^H_1=0$, $p^H=0$, and let $u^H_2=u^H_2(x_1,t)$ be the unique solution of the following forced diffusion problem:
\begin{equation}\label{horizontal_heat_eqn}
\begin{cases}
  \partial_t u^H_2 -\nu\partial_{x_1}^2 u^H_2 &= g\theta^H,
 \\
 u^H_2(x_1,0) &= a^H(x_1),
\end{cases}
\end{equation}
 with periodic boundary condition on $[0,L]$ and the mean-free condition.  
It is easy to check that 
\(
(\bu^H,\theta^H,p^H):=((u^H_1,u^H_2),\theta^H,p^H)
\)
 is a solution to \eqref{Bouss_mo}-\eqref{Bouss_div} with initial data 
\begin{align}\label{horizontal_ic}
 u_1(\bx,0)=0, u_2(\bx,0) = a^H(x_1)\ \mbox{and}\  \theta(\bx,0)=\theta^H.
\end{align}

Since these flows are independent of $x_2$, they can be thought of in a geophysical context as \textit{columnular flows}.  In Section \eqref{sec_Energy_Enstrophy}, we show that the steady states corresponding to these flows also arise as solutions lying on the boundary of a certain set containing the attractor after projecting into the Energy-Enstrophy plane.

\noindent
{\bf Steady State:} Let $u_2=u_2(x_1)$ be the unique, mean-free periodic solution of the equation
\begin{align}
\nu \frac{d^2}{dx_1^2}u_2(x_1)= - g \theta^H(x_1).
\end{align}
One can check that the horizontal solution $(\bu^H,\theta^H)$ defined above converges, as $t \ra \infty$,  to the steady state
\begin{align}
 u_1=0, u_2, \theta= \theta^H.
\end{align}

\item[(iii)] {\bf Plane Wave solutions:} We also consider certain plane-wave (PW) solutions, first introduced in the context of the Navier-Stokes equations in \cite{FS}.  
Specifically, let $f = f(z,t)$ and $h = h(z)$ be smooth mean-free functions\footnote{We may let $h=h(z,t)$ \textit{a priori}, but our ansatz for the form of solutions will force $\partial_t h \equiv0$, so we assume that $h$ is time-independant from the outset.} which are $L-$periodic in $z$, and fix a non-zero vector $\bk=(k_1,k_2)\in\nZ^2$ where $k_1+k_2=0$. Below, in \eqref{f_heat_equation}, we will require $f$ to satisfy a particular heat equation.  Define $\bu$ and $\theta$ by 
\begin{subequations}\label{plane_wave_ansatz}
\begin{align}
\label{plane_wave_ansatz_u}
 \bu^{\text{PW}}(\bx,t)&:=(f(\bk\cdot \bx,t),f(\bk\cdot \bx,t)), 
 \\\label{plane_wave_ansatz_theta}
 \theta^{\text{PW}}(\bx,t)&:=h(\bk\cdot \bx).
\end{align}
\end{subequations}
Note  that
\begin{align*}
 \nabla\bu^{\text{PW}} = 
 \begin{pmatrix} k_1 & k_2 \\ k_1 & k_2
     \end{pmatrix}f_z(\bk\cdot \bx,t),
     \
     \nabla\theta^{\text{PW}} = (k_1,k_2)h_z(\bk\cdot \bx).
\end{align*}
This implies $\nabla\cdot\bu^{\text{PW}} \equiv\text{tr}(\nabla\bu^{\text{PW}}) = (k_1+k_2)f_z = 0$, i.e.,
$\bu^{\text{PW}}$ is divergence free. Furthermore,  
\begin{align}
& (\bu^{\text{PW}}\cdot\nabla)\bu^{\text{PW}} = (1,1)(k_1+k_2)ff_z = (0,0)\  \mbox{ and} 
\\\ \notag
& (\bu^{\text{PW}}\cdot\nabla)\theta^{\text{PW}} = (k_1+k_2)fh_z = 0.  
\end{align}
Assume that $\bu^{\text{PW}}$ and $\theta^{\text{PW}}$ defined as in \eqref{plane_wave_ansatz} give a solution for some pressure $p$.  Substituting these relations into \eqref{Bouss_mo} and 
\eqref{Bouss_den}, with $z = \bk \cdot \bx$ and $f(x_1,x_2,t)=f(z,t)$, we obtain 
\begin{subequations}\label{plane_wave}
\begin{align}
\label{plane_wave_mo_1}
\partial_tf + p_{x_1}&= \nu|\bk|^2f_{zz},
\\
\label{plane_wave_mo_2}
\partial_tf + p_{x_2}&= \nu|\bk|^2f_{zz}+ gh.
\end{align}
\end{subequations}
Applying $\partial_{x_2}$ to \eqref{plane_wave_mo_1} and $\partial_{x_1}$ to \eqref{plane_wave_mo_2} and subtracting the results, we obtain
\begin{align}\label{curl_f_equation}
 (k_1-k_2)\partial_tf_z &= \nu|\bk|^2(k_1-k_2)f_{zzz}+ k_1gh_z,
\end{align}
Since $k_1+k_2=0$, it follows that $k_1-k_2 = 2k_1$.  Dividing \eqref{curl_f_equation} by $2k_1$ and integrating in $z$ implies
 $\partial_tf = \nu|\bk|^2f_{zz}+ \tfrac{1}{2}gh+\psi$, for some undetermined $\psi=\psi(t)$.
Since  $f$ and $h$ are mean-free and periodic, we conclude that $\psi\equiv0$.  Thus, $f$ satisfies the heat equation
\begin{align}\label{f_heat_equation}
 \partial_tf &= \nu|\bk|^2f_{zz}+ \tfrac{1}{2}gh.
\end{align}

As usual, we can solve for the pressure by applying $\partial_{x_1}$ to \eqref{plane_wave_mo_1} and $\partial_{x_2}$ to \eqref{plane_wave_mo_2} and adding the results to obtain
\begin{align}
 \triangle p (\bx,t)= gk_2h_z(\bk\cdot \bx,t).
\end{align}
This, along with the mean-free and periodic boundary conditions, defines the pressure uniquely.  In fact, we can get a more explicit formula for the pressure gradient by subtracting \eqref{plane_wave_mo_2} from \eqref{plane_wave_mo_1} to obtain
$p_{x_2}-p_{x_1}= gh$.
Adding \eqref{plane_wave_mo_1} to \eqref{plane_wave_mo_2} and comparing with \eqref{f_heat_equation} yields $p_{x_1}+p_{x_2}=0$.
Thus, 
\begin{align}\label{f_pressure}
p_{x_1} =-\tfrac{1}{2}gh
\quad\text{and}\quad
p_{x_2} =\tfrac{1}{2}gh.
\end{align}
Conversely, let $h=h(z)$ be any mean free, periodic function and let 
 $f$ satisfy  \eqref{f_heat_equation}. Define $(\bu,\theta)$ by \eqref{plane_wave_ansatz} and the pressure by
 \begin{gather}  \label{pressure_def}
 p\equiv p^{\text{PW}}:= - \frac{1}{2k_1}gH(\bk \cdot \bx),\  \mbox{ where}\  H'(z)=h(z).
 \end{gather}
 It is easy to see that $(\bu^{\text{PW}},\theta^{\text{PW}})$ is a solution of \eqref{Bouss}.
 We note that the solutions of the form $(\bu^{\text{PW}},\theta^{\text{PW}})$ thus defined 
  are distinct from $(\bu^V,\theta^V)$ and $(\bu^H,\theta^H)$, e.g., since both components of $\bu^{\text{PW}}$ are non-zero. Moreover, they are steady states if $f$ is time-independent and is given by
  $f=f_{steady}$, where $f_{steady}=f_{steady}(z)$ is the unique, mean-free periodic solution of the equation 
  \begin{gather}  \label{fsteady}
  \frac{d^2}{dz^2} f_{steady}(z)= -\frac{g}{2\nu |\bk|^2} h(z).
  \end{gather}
  
  \noindent
  {\bf Steady State:} Due to \eqref{pressure_def} and \eqref{f_heat_equation}, it is easy to see that
  $(\bu^{\text{PW}},\theta^{\text{PW}})$ approach a steady state which is a plane-wave solution corresponding to the same $h$ and $f=f_{steady}$ where $f_{steady}$ is as in \eqref{fsteady}.

\end{itemize}

\begin{remark}  \label{rem:steadystates}
  The solution families $(\bu^V,\theta^V)$, $(\bu^H,\theta^H)$ and $(\bu^{\text{PW}},\theta^{\text{PW}})$ given above have unique steady states which they approach exponentially fast in $L^2$. Since these steady states are determined by $\theta^V$, $\theta^H$, and $h$, respectively, which are arbitrary smooth, mean-free, periodic functions, this show that the set of steady states of system \eqref{Bouss} is fairly rich, and in particular, it is both infinite-dimensional (in the sense that it contains infinite dimensional subspaces) and unbounded. 
\end{remark}

\subsection{Weak Sigma-Attractor} \label{Weak Sigma-Attractor}

\begin{definition}\label{def_attractor}
 Let $S(t)$ be the semi-group operator associated with \eqref{Bouss} via \eqref{def_semigroup_operator}.  We define the \textit{weak sigma-attractor $\cA$, of the semi-dissipative system} \eqref{Bouss} to be the set of all $(\buInit,\thetaInit)\in H\times L^2$ with the property that
 \begin{enumerate}[(i)]
   \item There exists a global trajectory $(\bu(t),\theta(t))$ defined for all $ t \in \nR$ such that $(\bu(t),\theta(t))$ 
   belongs to $H \times L^2$ and solves  \eqref{Bouss} for all $t \in \nR$
     and moreover $(\bu(0),\theta(0))=(\buInit,\theta_0)$.
  \item The trajectory $(\bu(t),\theta(t))$ is globally bounded, i.e., the set
  $\{(\bu(t),\theta(t)): t \in \nR\}$ is  bounded  in $H \times L^2$.
 \end{enumerate}
\end{definition}
Clearly, it follows from the definition that every bounded, global trajectory $\{(\bu(t),\theta(t)): t \in \nR\}$
lies entirely on the attractor. The justification for referring to $\cA$ as the weak-sigma attractor will be provided towards the end of this section. We will show that $\cA$ is a non-empty, proper subset of the phase space.

\comments{
\begin{remark}
 Note that in Definition \ref{def_attractor}, we only require the existence of a trajectory which is bounded backward and forward in time.  We do not require it to be unique, and therefore, we are not relying on backward uniqueness here.  However, we see that Theorem \ref{thm_backward_uniqueness} implies that $S(t)$ restricted to $\cA\cap(\h^3\times \h^3)$ forms a dynamical system evolving in $\nR$.
\end{remark}
}

We will now justify our terminology, namely that of the weak sigma-attractor, for $\cA$.  The global attractor for a dissipative system is often defined as the
$\omega$-limit set of the compact absorbing ball.  However, system \eqref{Bouss}
does not have a compact absorbing ball.  Yet, since $\|\theta(t)\|_{L^2}=\|\thetaInit\|_{L^2}$ is
time-invariant, equation \eqref{Bouss_mo}, considered alone, has a compact
absorbing ball in $H$ by the standard theory of the 2D Navier-Stokes equations
(see Appendix \ref{ssec_Grashof_Numbers}).  By varying the value of $\|\thetaInit\|_{L^2}$,
one obtains a family of absorbing balls.  This observation is exploited in Section \ref{sec_main_proof} to write $\cA$ as the union of $\omega$-limit sets.  

Given initial data $(\buInit,\thetaInit)\in H\times L^2$, define the dimensionless, time-independent 
Grashof-type number for \eqref{Bouss} to be
\begin{align}\label{def_L2_Grashof}
 G := 
 \frac{g\|\thetaInit\|_{L^2}}{\nu^2\kappa_0^2}.
\end{align}
Standard energy estimates yield
\begin{align}\label{L2_energy_estimate}
 \|\bu(t)\|_{L^2}^2
 \leq
 e^{-\nu\kappa_0^2t}\|\buInit\|_{L^2}^2
 +\nu^2G^2(1-e^{
-\nu\kappa_0^2t}).
\end{align}
Thus, there exists a time $t_*=t_*(\|\buInit\|_{L^2})$ such that, for $t>t_*$, 
$\|\bu(t)\|_{L^2}$ is in the ball of radius $2\nu G$ in $H$.  For example, $t_*$
can be taken as
\begin{gather}  \label{inittimeset}
 t_*(\|\buInit\|_{L^2}) = \frac{1}{\nu\kappa_0^2}
 \max\set{1,\log\frac{\|\buInit\|_{L^2}^2}{3\nu^2G^2}}.
\end{gather}

\begin{definition}\label{def_B_r}
Let us denote the following Cartesian product of balls: 
\begin{align*}
 B_r :=\set{(\buInit,\thetaInit)\in H\times L^2 :  \|\thetaInit\|_{L^2}\leq r, \|\buInit\|_{L^2}\leq R(r)},
 \qquad
 R(r):= 2\frac{gr}{\nu\kappa_0^2}.
\end{align*}
\end{definition}
Due to  \eqref{Lp_conserved} and \eqref{L2_energy_estimate}, 
for a fixed $r \geq 0$,  the set $B_r$ is semi-invariant for all positive times, i.e., 
\begin{align}
S(t)B_r\, \subset B_r\ \mbox{ for all}\  t>0, 
\end{align}
where $S(t)$ is the solution semigroup 
for \eqref{Bouss} defined in \eqref{def_semigroup_operator}. 
\begin{definition}  The \emph{local attractor at level $r$}, denoted by $\cA_r$,  is defined to be the $\omega$-limit set of $B_r$; that is,
\begin{align} \label{ardef}
 \cA_r
 :=
 \omega(B_r)
 :=
 \bigcap_{\tau\geq0}{}^{\text{wk}}\!\overline{\set{S(t)(\buInit,\thetaInit):t>\tau, (\buInit,\thetaInit)\in B_r}},
 \end{align}
 where the closure is taken in the weak topology of $H\times L^2$. Note that by the Banach-Alaoglu Theorem, $\cA_r$ is weakly compact in $H \times L^2$. Furthermore, by Proposition \ref{wkcontinuity} (in the next section), for each fixed $t$, the map   $S(t):B_r \lra B_r$ is weakly continuous (i.e., continuous with respect to the weak topology on $B_r$). Thus,
\begin{align}
 {}^{\text{wk}}\!\overline{\set{S(t)(\buInit,\thetaInit):t>\tau, (\buInit,\thetaInit)\in B_r}}=S(t)B_r.
\end{align}
\end{definition}
We will show later that the weak sigma-attractor $\cA$ (defined in Definition \ref{def_attractor} as the set of trajectories which are uniformly bounded in $H\times L^2$ for all $t\in\nR$) is indeed equal to the union of the sets $\cA_r$, i.e., 
\begin{align}
\cA = \bigcup_{r \geq0} \cA_r=  \bigcup_{r \geq0, r \in \nQ} \cA_r,
\end{align}
where the last equality follows from the fact that the sets $\cA_r$ are increasing in $r$.
Since $\cA_r$ is weakly compact  in $H \times L^2$, and $\cA$ is a countable union of  weakly compact sets $\cA_r$, this justifies our referring to $\cA$ as the weak sigma-attractor.  We will show later  that in fact
$\cA$ (weakly) attracts all bounded sets.

Let us recall that the weak topology on a separable Hilbert space is metrizable on bounded sets. In fact,  one can define a metric $d$ which is independent of the bounded set so that the corresponding metric space topology coincides with the weak topology on every bounded set. This metric can be defined as follows. 
 Fix $\{\phi_j\}_{j=1}^\infty$,  a countable dense subset of  the aforementioned Hilbert space, 
and let $X$ be a bounded set in it. 
Define the metric $d$ on $X$ by
\begin{gather}  \label{metricdef}
d(\phi, \psi):= \sum_{j=1}^{\infty} \frac{1}{2^j\|\phi_j\|} |\ip{\phi-\psi}{\phi_j}|,\text{ for all } \phi, \psi \in X.
\end{gather}
This metric space topology coincides with the weak topology on $X$. Henceforth, $d$ will represent this metric on any bounded set. Due to this fact, namely that on bounded sets, the weak topology coincides with the one given by the metric \eqref{metricdef}, we have
\begin{align}
& \cA_r= \{(\bu_a,\theta_a):\ \exists\ t_n \ra \infty\ \mbox{and}\ (\bu_{0,n},\theta_{0,n}) \in B_r \nn \\
&\qquad \qquad \qquad \qquad  
\ \text{ such that } S(t_n)(\bu_{0,n},\theta_{0,n}) \overset{\text{wk}}{\lra }(\bu_a,\theta_a)\}. \label{omegaset}
\end{align}

\begin{remark}
In view of the backward uniqueness of sufficiently smooth trajectories, 
it is natural to ask whether there exists a global attractor in a stronger space, such as in $V\times L^2$, or $V\times H^1$, or another space where global well-posedness holds (cf. Theorems \ref{thm_Well_Posedness1} and \ref{thm_Well_Posedness2}).  In the case of the 2D Navier-Stokes equations with time independent force in $H$, it is known that the global attractor in the phase spaces $V$ and $H$ coincide and moreover, it is contained in $D(A)$ \cite{Constantin_Foias_1988, Robinson_2001}.
However, unlike in the case of the 2D Navier-Stokes equations, it is not known whether one can bound $\bu$ in $\mathcal{D}(A)$ uniformly in time.  Without such a bound, we are not able to prove the existence of a stronger global attractor than the one described here. 
\end{remark}

\section{Structure of the Attractor }\label{sec_main_proof}
In this section, we will state and prove the main result describing the properties of the attractor. 
We will  need the following definition for the semi-distance between two bounded sets in the weak topology. For two bounded sets $A,B $, we define
the semi-distance
\begin{align}
\text{dist}(A,B):=\sup_{x \in A}\ \inf_{y \in B} d(x,y).
\end{align}
It is easy to check that for two bounded sets, $\text{dist}(A,B)=0$ if and only if $A \subset B$.

We are now ready to state our main theorem summarizing the properties of the weak-sigma attractor attractor defined in Subsection \ref{Weak Sigma-Attractor}.
\begin{theorem}  \label{main_att_thm}
The global attractor $\cA$ defined in Definition \ref{def_attractor} has the following properties.
\begin{itemize}
\item[(i)] The relation $\cA = \bigcup\limits_{r\geq0} \cA_r= \bigcup\limits_{r\geq0, r \in \nQ} \cA_r$ holds, where $\cA_r$ is defined by \eqref{ardef}.
\item[(ii)] $\cA$ contains all the steady states of \eqref{Bouss}.
\item[(iii)] $\cA$ has empty interior in the strong (and therefore weak) topology of $H \times L^2$.
\item[(iv)]  The weak sigma-attractor $\cA$ of the system \eqref{Bouss} is  a  non-empty, proper subset  of the phase space $X:=H\times L^2(\Omega)$ which moreover contains infinite-dimensional subspaces of the phase space.
\item[(v)] The set $\cA$ is $\sigma$-compact. More precisely, $\cA$  is a countable union of weakly compact sets $\cA_r, r \in \nQ$.
\item[(vi)] The set $\cA$ attracts all bounded sets. More precisely, if $X \subset H\times L^2$ is a bounded set, then there exists $r >0$ such that ${\it dist}(S(t)X,\cA_r) \ra 0$ as $t \ra \infty $.
\item[(vii)]  For each $r\geq0$, the set $\cA_r$, as well as the set $\cA$, is weakly connected.
\item[(viii)] The global attractor $\cA$ is invariant (i.e. $S(t)\cA =\cA$) for all $t \ge 0$. 
Moreover, it is the minimal set that attracts all bounded sets.
\item[(ix)] (Tracking property) Let $(\bu(t),\theta(t)), t \ge 0$ be a trajectory in $H \times L^2$. There exists a global trajectory $(\bu_\infty(s),\theta_\infty(s)), s \in \nR$ included in $\cA$, with the property that
given  $\epsilon , M>0$, there exists $T>0$ satisfying
\begin{align}
\sup_{s \in [-M,M]}\left\{\|\bu(s+T) - \bu_\infty(s)\|_{L^2} + \|\theta(s+T)-\theta_\infty(s)\|_{{\mathbb H}^{-1}}\right\} < \epsilon.
\end{align}
\end{itemize}
\end{theorem}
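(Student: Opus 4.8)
The plan is to establish the nine properties of $\cA$ by leveraging the structure of the local attractors $\cA_r$, each of which behaves like a (weakly compact) attractor for the 2D Navier-Stokes equations with a forcing term whose $L^2$ norm is controlled by $r$. The guiding principle is that the conservation law $\|\theta(t)\|_{L^2}=\|\thetaInit\|_{L^2}$ foliates the phase space into levels, and on each level the velocity equation \eqref{functmom} inherits the dissipative machinery of the 2D Navier-Stokes equations.

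First I would prove (i), the decomposition $\cA=\bigcup_{r\geq0}\cA_r$. For the inclusion $\cA\subset\bigcup_r\cA_r$, given a bounded global trajectory through $(\buInit,\thetaInit)$, set $r:=\|\thetaInit\|_{L^2}$; since the whole backward trajectory lies in $B_r$ (using that $\|\theta(t)\|_{L^2}$ is constant in $t$ and the energy bound \eqref{L2_energy_estimate} controls $\|\bu(t)\|_{L^2}$ as $t\to-\infty$), one extracts the point as a weak limit of $S(t_n)$ applied to points on the trajectory with $t_n\to\infty$, placing it in $\cA_r$ via \eqref{omegaset}. Conversely, any point of $\cA_r$ lies on a globally bounded trajectory by the standard construction of complete orbits in an $\omega$-limit set, using the weak continuity of $S(t)$ (Proposition \ref{wkcontinuity}) to pass limits; the rational-$r$ reduction follows because $\cA_r$ is monotone increasing in $r$. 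For (ii), each explicit steady state from Section \ref{steadystates} is itself a bounded global trajectory (constant in time), hence lies in $\cA$ by Definition \ref{def_attractor}; this also immediately gives (iv)'s infinite-dimensionality, since Remark \ref{rem:steadystates} exhibits infinite-dimensional families of steady states. Properties (v) and (vii) are then consequences of (i): each $\cA_r$ is weakly compact by Banach-Alaoglu, giving $\sigma$-compactness, and weak connectedness of $\cA_r$ follows because $B_r$ is convex (hence weakly connected) and $S(t)$ is weakly continuous, so each $S(t)B_r$ is weakly connected, and the $\omega$-limit set of a connected semi-invariant set is connected; the nested union $\cA=\bigcup_r\cA_r$ of connected sets sharing the common point (the origin, or any steady state in all levels) is connected.

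Next, for the attraction property (vi), given a bounded set $X$ I would choose $r$ so that $X\subset B_{r}$ after a finite time (using \eqref{L2_energy_estimate} and $\|\theta\|_{L^2}$-conservation to absorb $X$ into a product ball), and then invoke the standard argument that $\dist(S(t)X,\omega(B_r))\to0$ in the weak metric $d$ of \eqref{metricdef}; the proof is the usual contradiction argument, extracting a weakly convergent subsequence of $S(t_n)x_n$ whose limit must lie in $\omega(B_r)=\cA_r$ by definition. Property (viii), invariance $S(t)\cA=\cA$ and minimality, follows by combining invariance of each $\cA_r$ (a characteristic feature of $\omega$-limit sets of semi-invariant sets under weakly continuous semigroups) with the union structure, and minimality follows from (vi) together with the observation that any closed attracting set must contain all bounded complete trajectories, hence contains $\cA$. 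Property (iii), empty interior, I would obtain by exhibiting, near any point of $\cA$, initial data whose trajectory is \emph{unbounded} backward in time --- for instance perturbing $\theta$ while keeping the velocity equation's effective forcing growing --- or more directly by noting that $\cA$ is contained in a countable union of weakly compact (hence weakly nowhere-dense in the non-separable direction) sets and a Baire-category argument; the cleanest route is to show $\cA$ cannot contain any ball because the $\theta$-component on $\cA$ must satisfy the rigidity that $\|\theta\|_{L^2}$ is attained on a complete bounded velocity trajectory, which fails for generic directions.

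The main obstacle, and the heart of the theorem, is the tracking property (ix). Here I would take an arbitrary forward trajectory $(\bu(t),\theta(t))$, fix its level $r=\|\theta(0)\|_{L^2}$, and use the asymptotic compactness encoded in $\cA_r=\omega(B_r)$ to produce, along a sequence $T_n\to\infty$, weak limits of the time-shifted trajectories $(\bu(\cdot+T_n),\theta(\cdot+T_n))$ on compact time windows $[-M,M]$. The plan is a diagonal/Arzela-Ascoli argument in the weak metric: the shifts are uniformly bounded in $H\times L^2$, the velocity shifts are equicontinuous into $H$ (from the uniform bound on $d\bu/dt$ in $V'$ via Theorem \ref{thm_Danchin_Paicu}) and the temperature shifts equicontinuous into $\h^{-1}$ (from the $L^4([0,T],\h^{-1})$ bound on $\partial_t\theta$), so one extracts a limiting global trajectory $(\bu_\infty,\theta_\infty)$ lying in $\cA$, and the convergence is precisely the stated supremum estimate in the $L^2\times\h^{-1}$ metric. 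The delicate point will be verifying that the weak limit genuinely solves \eqref{Bouss} globally (passing the nonlinearity $\bu\otimes\bu$ and the transport term $\bu\theta$ to the limit), which requires the Aubin-Lions-type strong compactness of $\bu$ in $L^2([-M,M],H)$ --- available because $\bu\in L^2(V)$ with $\partial_t\bu\in L^2(V')$ and the embedding $V\hookrightarrow H$ is compact --- to upgrade the weak convergence enough to identify the quadratic terms; without this strong-in-$H$ velocity convergence the transport term $\int\bu\theta\cdot\nabla\phi$ cannot be passed to the limit, so that compactness step is where the real work lies.
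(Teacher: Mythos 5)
Your overall architecture coincides with the paper's: items (i), (ii), (iv)--(viii) are handled exactly as in the text (monotone union of the $\omega$-limit sets $\omega(B_r)$, Banach--Alaoglu plus weak continuity of $S(t)$ for compactness, connectedness of $S(t)B_r$ plus a common point for (vii), forward uniqueness for invariance, and the contradiction argument in the weak metric for attraction), and your treatment of (ix) is precisely the paper's key Lemma \ref{lem_infty_solutions} --- the diagonal Arzel\`a--Ascoli extraction of a complete limit trajectory from time-shifts, with strong $C(I_M;H)$ convergence of the velocity needed to pass the terms $\bu\otimes\bu$ and $\bu\theta$ to the limit. Your refinement in (i) that the complete bounded trajectory actually lies in $B_r$ with $r=\|\thetaInit\|_{L^2}$ (by running the energy inequality from $-T$ and letting $T\to\infty$) is correct and slightly sharper than what the paper states.

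The genuine gap is item (iii). None of your three proposed routes works as stated. The Baire-category route fails at the step ``weakly compact, hence nowhere dense'': the closed unit ball of $H\times L^2$ is itself weakly compact (Banach--Alaoglu in a reflexive space) and has nonempty norm interior, so weak compactness alone gives nothing. The ``perturb $\theta$ to make the backward trajectory unbounded'' and the ``rigidity of the $\theta$-component'' routes are not arguments --- and the second one looks in the wrong place, since $\theta$ is merely transported and gains no structure on $\cA$. The mechanism the paper uses is parabolic smoothing of the \emph{velocity} component: Lemma \ref{lem_infty_solutions} shows that every $(\bu_a,\theta_a)\in\cA$ has $\bu_a\in V$, with the uniform bound $\|\bu_\infty(t)\|_{\h^1}\le CR(r)\kappa_0$ of \eqref{timeinvdblbd} along the whole complete trajectory. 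Since $V$ is a proper dense subspace of $H$, every norm ball in $H\times L^2$ contains points whose first component is not in $V$, so $\cA$ contains no ball. Your construction in (ix) does not supply this ingredient: you only track $\bu\in L^2(I_M;V)$ with $\partial_t\bu\in L^2(I_M;V')$, which is enough for Aubin--Lions but not for the pointwise-in-time statement $\bu_a\in V$. To close the gap you need the $\h^1$ a priori estimate obtained after an $O(1/(\nu\kappa_0^2))$ transient (the bound \eqref{timeinvdblbd}, which uses $(B(\bu,\bu),A\bu)=0$ in the periodic 2D setting), and then the observation that this bound propagates to weak limits and holds at every time on a complete trajectory. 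Once that is in place, (iii) is one line; your Baire route can also be repaired, since a set whose velocity projection is bounded in $V$ is norm-precompact in $H$ and hence nowhere dense.
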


Before proceeding to prove the main theorem,  we need to establish the following crucial lemma. 
\subsection{Auxiliary Lemma} 
 \begin{lemma}\label{lem_infty_solutions}
 Let $(\bu_{0,n},\theta_{0,n})\in B_r$ and $t_n \ra \infty$ be such that 
 \begin{gather}   \label{omegaconv}
 S(t_n)(\bu_{0,n},\theta_{0,n}) \overset{\text{wk}}{\lra} (\bu_a,\theta_a) \ \in  H\times L^2.
 \end{gather}
 Then $(\bu_a,\theta_a)\in\cA$.  More precisely, there exists a complete bounded trajectory $(\bu_\infty(t),\theta_\infty(t))\in B_r$ for all $t\in\nR$, such that 
\begin{align}
 (\bu_\infty(0),\theta_\infty(0))=(\bu_a,\theta_a).
\end{align}
 Moreover, $\bu_a \in V$ and there exists a  dimensionless, absolute constant $C>0$ such that
\begin{align}
 \|\bu_\infty(t)\|_{\h^1} \le CR(r)\kappa_0\ \forall\ t \in \nR.
\end{align}
\comments{\begin{enumerate}[(i)]
\item 
   Suppose there is a solution $(\bu(t),\theta(t))\in B_r                          $ of \eqref{Bouss} and sequence of times $t_n$ satisfying $t_n\maps\infty$ as $n\maps \infty$, such that
\begin{align*}
 (\bu(t_n),\theta(t_n))\rightharpoonup (\bu_a,\theta_a) \text{ weakly in } H\times L^2
 \text{ as }n\maps\infty.
 \end{align*}

 \item Let $(\buInit,\theta_0)\in (V\cap \h^2)\times \h^2$.  Then there exists a point $(\bu_a,\theta_a)\in\cA$ such that $\theta_a$ has the same distribution function as $\theta_0$; that is, for all $\rho\in\nR$,
 begin{align}F_{\theta_a(\cdot,t)}(\rho) = F_{\theta_0}(\rho)\end{align}
 \end{enumerate}
 }

 \end{lemma}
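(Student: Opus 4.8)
The plan is to construct the complete bounded trajectory through $(\bu_a,\theta_a)$ by the standard backward-extraction technique for $\omega$-limit sets, carried out in the weak topology. Write $y_n := (\bu_{0,n},\theta_{0,n}) \in B_r$. First I would fix a countable dense set $D \subset \nR$ containing $0$ (say $D = \nQ$). For each $a \in D$ and each $n$ large enough that $t_n > a$, the semi-invariance $S(t)B_r \subset B_r$ guarantees $S(t_n - a) y_n \in B_r$, a set which is bounded in $H \times L^2$ and hence weakly sequentially precompact (Banach--Alaoglu, together with the metrizability of the weak topology on bounded sets recorded in \eqref{metricdef}). A diagonal argument over $a \in D$ then yields a single subsequence, not relabeled, such that $S(t_n - a) y_n \rightharpoonup z_a \in B_r$ for every $a \in D$; the hypothesis \eqref{omegaconv} forces $z_0 = (\bu_a,\theta_a)$.

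Next I would use the weak continuity of the solution map to glue the $z_a$ into a trajectory. For $a, b \in D$ with $a > b$, the semigroup identity $S(a-b) S(t_n - a) y_n = S(t_n - b) y_n$ combined with the weak continuity of $S(a-b)$ on $B_r$ (Proposition \ref{wkcontinuity}) gives, upon passing to the weak limit, $S(a-b) z_a = z_b$. This consistency lets me define, for arbitrary $t \in \nR$, $(\bu_\infty(t),\theta_\infty(t)) := S(t+a) z_a$ for any $a \in D$ with $a > -t$; the identity just derived shows the definition is independent of the choice of $a$, and since $z_a \in B_r$ and $S(t)B_r \subset B_r$, the whole trajectory stays in $B_r$ and is therefore bounded in $H \times L^2$. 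A further application of the semigroup property gives $S(\sigma)(\bu_\infty(t),\theta_\infty(t)) = (\bu_\infty(t+\sigma),\theta_\infty(t+\sigma))$ for all $\sigma \geq 0$ and $t \in \nR$, so on every interval $[-a,\infty)$ the curve coincides with the forward solution emanating from $z_a$ and is thus a global solution of \eqref{Bouss} in the sense of \eqref{Bouss_wk_global}; moreover $(\bu_\infty(0),\theta_\infty(0)) = S(a) z_a = z_0 = (\bu_a,\theta_a)$. Membership $(\bu_a,\theta_a) \in \cA$ is then immediate from Definition \ref{def_attractor}.

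It remains to upgrade the velocity to $V$ with the stated quantitative bound, and here I would exploit the fact that the temperature is merely transported: by Proposition \ref{prop_theta_dist_conserved}, $\|\theta_\infty(t)\|_{L^2}$ is conserved and bounded by $r$, so by \eqref{Riesz} the buoyancy force $P_\sigma(\theta_\infty \bg)$ driving the velocity equation \eqref{functmom} is bounded in $L^2$ by $g r$ uniformly in $t$. Thus $\bu_\infty$ solves a $2$D Navier--Stokes system with a uniformly $L^2$-bounded force, for which the classical enstrophy estimate (uniform Gr\"onwall; see Appendix \ref{ssec_Grashof_Numbers} and \cite{Constantin_Foias_1988}) furnishes an absorbing ball in $V$ of radius bounded by $C R(r) \kappa_0$, entered after a time $t_{**}$ depending only on $r$ and $\nu$. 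Since $(\bu_\infty(t),\theta_\infty(t)) = S(t+a) z_a$ with $z_a \in B_r$ and $a \in D$ may be taken arbitrarily large, the elapsed time $t + a$ exceeds $t_{**}$; hence $\|\bu_\infty(t)\|_{\h^1} \leq C R(r) \kappa_0$ for every $t \in \nR$, and in particular $\bu_a = \bu_\infty(0) \in V$.

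The main obstacle, and the step requiring the most care, is the interchange of the weak limit with the solution operator in the construction of the $z_a$: this is precisely what forces the use of the weak topology and rests entirely on the weak sequential continuity of $S(t)$ on the bounded semi-invariant set $B_r$ supplied by Proposition \ref{wkcontinuity}. Everything else is bookkeeping of times in the diagonal extraction together with an appeal to the well-understood $2$D Navier--Stokes smoothing; the only genuine twist is that the uniform control of the force comes not from any dissipation in the temperature equation but from the transport-conservation of $\|\theta\|_{L^2}$.
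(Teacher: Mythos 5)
Your argument is correct in substance, but it follows a genuinely different route from the paper's. The paper works directly with the time-shifted trajectories $(\bu_n(t),\theta_n(t)) = S(t+t_n)(\bu_{0,n},\theta_{0,n})$ on the intervals $I_M$, derives uniform bounds on $\bu_n$ in $L^\infty(I_M;V)\cap L^2(I_M;\h^2)$ and on the time derivatives $\tfrac{d}{dt}\bu_n$, $\tfrac{d}{dt}\theta_n$, and then extracts a limit by Arzela--Ascoli plus Cantor diagonalization in the \emph{strong} topology of $C(I_M;H)\times C(I_M;\h^{-1})$; the limit is shown to be a global solution by passing to the limit term by term in the weak formulation \eqref{Bouss_wk}, including the two nonlinear terms. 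You instead run the abstract $\omega$-limit backward-extraction: weak sequential compactness of $B_r$ gives the points $z_a$, and the gluing $S(a-b)z_a=z_b$ rests entirely on the weak sequential continuity of $S(t)$ on bounded sets, i.e.\ Proposition~\ref{wkcontinuity}. Your approach is cleaner and more modular, and your derivation of the $V$-bound (uniform $L^2$ control of the buoyancy force via transport-conservation of $\|\theta\|_{L^2}$, then the classical 2D Navier--Stokes enstrophy absorbing ball applied from $z_a$ with $a$ arbitrarily large) is a perfectly valid alternative to the paper's estimate \eqref{timeinvdblbd}; the constants match. What your route \emph{buys} is brevity; what it \emph{costs} is that all of the genuine PDE content --- the fact that a weak limit of solutions is again a solution --- is hidden inside Proposition~\ref{wkcontinuity}, which in the paper is proved \emph{after} Lemma~\ref{lem_infty_solutions} and by the very same compactness machinery (``arguments similar to the proof of Lemma~\ref{lem_infty_solutions}''). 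There is no circularity in substance, since the proof of weak continuity uses only the a priori bounds of Theorem~\ref{thm_Danchin_Paicu} and not the statement of the lemma, but if you present your argument you must either prove the weak continuity of $S(t)$ first and independently, or accept that you have merely relocated, not avoided, the Aubin--Lions/Arzela--Ascoli passage to the limit in the nonlinear terms. One small imprecision: the absorbing-ball computation you cite in Appendix~\ref{ssec_Grashof_Numbers} is the $H$-absorbing ball; the $V$-absorbing ball you actually need is the enstrophy estimate of \cite{Constantin_Foias_1988} used in \eqref{dblnormbd}--\eqref{timeinvdblbd}, which requires the 2D periodic identity $\ip{B(\bu,\bu)}{A\bu}=0$.
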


\begin{proof}
We will start by recalling certain {\it a priori} bounds; see \cite{Constantin_Foias_1988} and  \cite{Robinson_2001}. Let 
$(\bu(t),\theta(t)), t \ge 0$ be any trajectory starting in $B_r$ and observe that $(\bu(t),\theta(t)) \in B_r$ due to the semi-invariance of $B_r$. In particular, this means
\begin{gather}  \label{timeinv}
\|\bu(t)\|_{L^2} \le R(r) \ \mbox{and}\ \|\theta(t)\|_{\h^1} \le r\qquad (t \ge 0),
\end{gather}
\comments{Due to Proposition \ref{prop_theta_dist_conserved},
it follows that $\|\theta(t)\|_{L^2}$ as well as the distribution function $F_{\theta(t)}$ remains invariant for all $t\geq 0$.  
Moreover, 
using Poincar\'e and Young's inequalities and denoting $\kappa_0=\frac{2\pi}{L}$, 
the momentum equation \eqref{Bouss_mo} readily yields the time invariant $L^2$ bound on $\bu$, namely,
\begin{gather}  \label{timeinv}
\|\bu(t)\|_{L^2}^2 \le \|\buInit\|_{L^2}^2 + \frac{g^2\|\thetaInit\|_{L^2}^2}{\nu^2\kappa_0^4},
\end{gather}
}
Moreover, the momentum equation \eqref{Bouss_mo} readily yields the bound
\begin{gather} \label{energyineq}
\|\bu(t)\|_{L^2}^2 + \nu  \int_0^t \|\bu(s)\|_{\h^1}^2\, ds 
\leq
\|\buInit\|_{L^2}^2 + \frac{g^2\|\thetaInit\|_{L^2}^2}{\nu\kappa_0^2}T,
\ ( 0<t \le T),
\end{gather}
where the invariance of the $L^2$ norm of $\theta(\cdot)$ along a trajectory is also used.
Thus, there exists $t \in (0,T)$ such that 
\begin{gather}  \label{dblnormbd}
\|\bu(t)\|_{\h^1}^2 \le 2\left[ \frac{\|\buInit\|_{L^2}^2}{\nu T}+ \frac{g^2\|\thetaInit\|_{L^2}^2}{\nu^2\kappa_0^2}\right].
\end{gather}
Since the distribution function of $\theta$ remains invariant along a trajectory forward in time, by replacing the initial data if necessary,
we can (and henceforth will) without loss of generality assume that $\buInit \in V$ and moreover, setting $T= \frac{1}{\nu\kappa_0^2}$ in \eqref{dblnormbd}, we may also assume
\begin{gather}  \label{assumedblnormbd}
\|\buInit\|_{\h^1} \le 2R(r)\kappa_0,
\end{gather}
where $R(r)$ is as in Definition \eqref{def_B_r}.
With the assumption \eqref{assumedblnormbd}, noting that the periodic boundary condition and \eqref{Bouss_div} implies (in 2d) that $(B(\bu,\bu), A \bu)=0$, and consequently, following techniques in \cite{Constantin_Foias_1988} (Chapter XIII, pages 111-113), we additionally have the uniform bound
\begin{gather}  \label{timeinvdblbd}
\sup_{t \ge t_1}\|\bu(t)\|_{\h^1} \le C R(r)\kappa_0,
\end{gather}
where  $R(r)$ is as in Definition \ref{def_B_r}, $t_1=\frac{1}{\nu\kappa_0^2}$. 
 Additionally, from the first relation in \eqref{timeinvdblbd} and using again $(B(\bu,\bu),A\bu)=0$,  for any 
$\tau \ge 0$ and $T \ge 0$, we have
\begin{gather}  \label{vbound}
\|\bu(\tau+T)\|_{\h^1}^2 + 2\nu \int_\tau^{\tau+T} \|A \bu\|_{L^2}^2 
\leq 
C^2R(r)^2\kappa_0^2 + \frac{g^2\|\thetaInit\|_{L^2}^2}{\nu}T.
\end{gather}

Let now $(\bu_{0,n},\theta_{0,n}) \in B_r$ and the sequence of times $t_n \ra \infty$ be such that
\eqref{omegaconv} holds. 
Translating in time by $\tau_n\in[0,\frac{1}{\nu\kappa_0^2}]$ ,if necessary, we will assume 
without loss of generality
that $\bu_{0,n} \in V$, and that the bound in \eqref{assumedblnormbd} holds for all $\bu_{0,n}$.
Since $\bu_{0,n} \overset{\text{wk}}{\lra} \bu_a$, it immediately follows that 
\begin{align}
\bu_a \in V\ \mbox{and}\ 
\|\bu_a\|_{\h^1} \le 2R(r)\kappa_0.
\end{align}
\comments{We will first show that $\bu_a \in V$.
Let 
begin{align}
\bu^{(n)}:=\ \mbox{the}\, \bu\  \mbox{component of}\  S(t_n)(\bu_{0,n},\theta_{0,n}).
\end{align}
For sufficiently large $n$ (and consequently $t_n$), by the second bound in \eqref{timeinvdblbd}, $\|A\bu^{(n)}\|$ is bounded uniformly  in $n$. Consequently, 
by compactness of the embedding $D(A) \hookrightarrow V$, there exists a subsequence $\bu^{(n_j)}$ which converges in $V$. From \eqref{omegaconv}, $\bu^{(n)} \overset{\text{wk}}{\lra} \bu_a$ (thus, $\bu_a \in H$). By uniqueness of weak limits, the limit in $V$ of $\bu^{(n_j)}$ must be $\bu_a$  and thus 
$\bu_a \in V$.
We will now prove the remainder of the theorem.}  Set  
 \begin{align}\label{def_subseq_shift}
 ( \bu_n(t),\theta_n(t)):=S(t+t_n)(\bu_{0,n},\theta_{0,n}),\qquad t \in [-t_n, \infty).
\end{align}
Note that the trajectories $( \bu_n(t),\theta_n(t))$ are defined, for $n$ sufficiently large, 
on the intervals 
\begin{align}
 I_M:=\left(-\frac{M}{\kappa_0^2\nu},\frac{M}{\kappa_0^2\nu}\right), \qquad
M\in\nN.
\end{align}
By \eqref{timeinv} and \eqref{timeinvdblbd}, for $n$ sufficiently large, the pair
$(\bu_n,\theta_n)$ is bounded uniformly in $L^\infty(I_M,V)\times L^\infty(I_M,L^2)$,
and moreover due to \eqref{vbound}, $\bu_n$ is bounded uniformly in $L^2(I_M; {\mathbb H}^2)$. The above mentioned 
bounds are uniform with respect to $n$, although they may depend on $M$ in general.
From the functional equation \eqref{functmom}, the time independent $L^2$ bound \eqref{timeinv} on $\theta_n(t)$, the (uniform in $n$) bound on $\bu_n$  in 
$L^2(I_M; {\mathbb H}^2)$, and the Ladyzhenskaya inequality \eqref{ladyzhenskaya},
we readily obtain that $\frac{d}{dt} \bu_n $ is uniformly bounded in $L^2(I_M; H)$.
Consequently, $\{\bu_n\} $ is an equicontinuous family (in time) in $C(I_M; H)$. 
Moreover, from the Agmon inequality \eqref{agmon}, and the
uniform $L^2$ bound on $\bu_n$ and  \eqref{Bouss_den}, it also follows that $\frac{d}{dt} \theta_n$ is uniformly bounded in $L^4(I_M; {\mathbb H}^{-1})$ (in fact, one can show that the bound is uniform in
$L^p(I_M; {\mathbb H}^{-1})$ for any $p < \infty$). 
Thus, $\{\theta_n\}$ form an equicontinuous family in $C(I_M; {\mathbb H}^{-1})$. Now, by the
 Arzela-Ascoli Theorem and the Cantor diagonalization procedure, there exists a pair
$(\bu_\infty(\cdot),\theta_\infty(\cdot)) \in C((-\infty,\infty); H) \times  C((-\infty,\infty); {\mathbb H}^{-1})$
 and a subsequence (which we also denote by $(\bu_n,\theta_n)$) such that for all $M \in \nN$,
 \begin{gather}  \label{convergence}
(\bu_n(\cdot),\theta_n(\cdot))\overset{n \ra \infty}{\lra}(\bu_\infty(\cdot),\theta_\infty(\cdot))\  \mbox{in}\
 C(I_M; H) \times  C(I_M; {\mathbb H}^{-1}).
 \end{gather}
Note that since $\bu_n(0)$ is equal to the $\bu$ component of $S(t_n)(\bu_{0,n},\theta_{0,n})$, by  
the limits assured in \eqref{omegaconv} and  \eqref{convergence}, we have $\bu_\infty(0)=\bu_a$.
On the other hand, due to the hypothesis and \eqref{convergence}, 
\begin{align}
\theta(t_n)=\theta_n(0) \overset{\text{wk}}{\lra} \theta_a\ \mbox{and}\ 
\theta(t_n)=\theta_n(0) \overset{\text{in}\, \h^{-1}}{\lra} \theta_\infty(0)
\text{ as }n\ra\infty.
\end{align}
Thus, $\theta_\infty(0)=\theta_a$.

Next,  we will show that, by passing through a further subsequence if necessary, we can ensure that
   for all $\eta \in L^2$, the functions $\ip{\theta_\infty(\cdot)}{\eta} \in C((-\infty,\infty); \nR)$, and that 
\begin{gather}  \label{thetastrongweak}
\ip{\theta_n(\cdot)}{\eta} \ra \ip{\theta_\infty(\cdot)}{\eta}\ \mbox{ in}\  C(I_M; \nR),
\text{ as }n\ra\infty,
\end{gather}
and also that $(\bu_\infty(t),\theta_\infty(t)) \in B_r\ \forall\ t \in \nR$, i.e., 
\begin{gather} \label{capsheafuniform}
\|\theta_\infty (t)\|_{L^2}^2 \le r^2\ \mbox{and}\ \|\bu_\infty(t)\|_{L^2}^2 \le 4 \frac{g^2r^2}{\nu^2\kappa_0^4}\ \forall\ t \in \nR.
\end{gather}
To see this, note that for any $\zeta \in {\mathbb H}^3$ (which, in 2D,implies that $\nabla \zeta \in L^\infty$), we have from \eqref{Bouss_den},
\begin{align}
& \quad
|\ip{\theta_n(t_2)-\theta_n(t_1)}{\zeta}| \le \int_{t_1}^{t_2} |\ip{\theta_n(s)\bu_n(s)}{ \nabla \zeta }|ds\\
 &\leq\notag
 \|\theta_n(\cdot)\|_{L^{\infty}(I_M;L^2)}\|\bu_n(\cdot)\|_{L^{\infty}(I_M;H)}\|\nabla \zeta\|_{L^\infty}|t_2-t_1|.
\end{align}
Thus, for all $\zeta \in {\mathbb H}^3$, the functions $\{(\theta_n(\cdot), \zeta) \}$ are equicontinuous
on $I_M$. Due to the uniform bound on $\|\theta_n(t)\|_{L^2}$, we have a pre-compact family. Let
$\{\zeta_j\}_{j=1}^\infty$ be a set of functions in ${\mathbb H}^3$ which is dense in $L^2$. 
By the Arzela-Ascoli Theorem and the Cantor diagonalization process, we may also assume that
\begin{align}
\ip{\theta_n(\cdot)}{ \zeta_j} \ra \ip{\theta_\infty(\cdot)}{ \zeta_j} \ \mbox{in}\ 
C(I_M; \nR)\text{ for all } M, j  \in  \nN.
\end{align}
To complete the proof, we show that $\ip{\theta_n(\cdot)}{ \zeta} \ra \ip{\theta_\infty(\cdot)}{ \zeta}$
uniformly on $I_M$ for any $\zeta \in L^2$. Let $\epsilon>0$ and choose $j$ such that
$|\zeta-\zeta_j| \le \frac{\epsilon}{2\|\thetaInit\|_{L^2}}$. Then,
\begin{align}
& |\ip{\theta_n(t)-\theta_m(t)}{\zeta}| \le |\ip{\theta_n(t)-\theta_m(t)}{\zeta_j}|
+ \|\theta_n(t)-\theta_m(t)\|_{L^2}\|\zeta_j-\zeta\|_{L^2}
\\\notag
& \le  |\ip{\theta_n(t)-\theta_m(t)}{\zeta_j}|+2\|\thetaInit\|_{L^2}\frac{\epsilon}{2\|\thetaInit\|_{L^2}}.
\end{align}
This shows that $\set{\ip{\theta_n(\cdot)}{\zeta}}$ is uniformly Cauchy in $C(I_M)$. Since $\theta_n(\cdot)$ converges to $\theta_\infty(\cdot)$ in 
$C(I_M;{\mathbb H}^{-1})$, the uniform limit of $\ip{\theta_n(t)}{\zeta}$ is $\ip{\theta_\infty(t)}{\zeta}$.
Due to weak convergence of $\theta_n(t)$ and strong convergence of $\bu_n(t)$, the uniform bounds 
\eqref{capsheafuniform} readily follow. This finishes the proof of the claim.

Next,  we will show that $(\bu_\infty,\theta_\infty)$ is a global solution of \eqref{Bouss}, i.e., we need to show that  $(\bu_\infty,\theta_\infty)$ satisfies \eqref{Bouss_wk} for appropriate test functions $\Phi$ and $\phi$ (see Definition \ref{def:Bouss_wk}).
Note that $(\bu_n,\theta_n)$ satisfy the weak formulation \eqref{Bouss_wk}. Passing to the limits in the linear terms, due to \eqref{convergence}, we find
\begin{subequations}
\begin{align}
\int_0^T\ip{\bu_n(s)}{ \Phi'(s)}\,ds
  &\maps
  \int_0^T\ip{\bu_\infty(s)}{ \Phi(s)}\,ds
   ,\\
\nu\int_0^T\iip{\bu_n(s)}{ \Phi(s)}\,ds
&\maps
\nu\int_0^T\iip{\bu_\infty(s)}{\Phi(s)}\,ds
,\\
\int_0^T\ip{\theta_n(s)\bg}{\Phi(s)}\,ds
&\maps
\int_0^T\ip{\theta_\infty(s)\bg}{\Phi(s)}\,ds
,\\
\int_0^T\ip{\theta_n(s)}{\phi'(s)}\,ds
&\maps
\int_0^T\ip{\theta_\infty(s)}{\phi'(s)}\,ds.
\end{align}
\end{subequations}

It remains to show the convergence of the remaining non-linear terms.  Let
\begin{subequations}
\begin{align}
   I(n)&:=\int_0^T\ip{\bu_n\otimes\bu_n-\bu_\infty\otimes\bu_\infty}{\nabla \Phi(s) }\,ds,
   \\  
J(n)&:=\int_0^T\ip{\bu_n(s)\theta_n(s)-\bu_\infty(s)\theta_\infty(s)}{\nabla\phi(s)}\,ds.
\end{align}
\end{subequations}
The convergence $I(n)\maps 0$ as $n\maps\infty$ follows from the convergence of the $\bu_n$ component guaranteed in \eqref{convergence}.
To show $J(n)\maps0$ as $n\maps\infty$, we write
$J(n)=J_1(n)+J_2(n)$, the definitions of which are given below. 
We have
\begin{align}
   J_1(n)
   &:=
   \int_0^T\ip{( \bu_n(s)-\bu_\infty(s) ) \theta_n(s)}{\nabla \phi(s)}\,ds
     \maps0
\end{align}
as $n\maps\infty$, since $\bu_n\maps\bu$  in $C(I_M,H)$ and
$\theta_n$ is uniformly bounded in $L^\infty(I_M,H)$.  For $J_2$, we have
\begin{align}
   J_2(n)&:= \int_0^T\ip{\bu_\infty(s)(\theta_n(s)-\theta_\infty(s))}{\nabla
\phi(s)}\,ds
     \maps0,
\end{align}
due to the fact that $\phi$ is a smooth test function for each fixed $M$, $\sup_{s \in I_M}\|\bu_\infty(s)\|_{\h^1} < \infty$  and $\theta_n (\cdot) \ra \theta_\infty(\cdot)$ in $C(I_M; \h^{-1})$.

Since $(\bu_\infty(t),\theta_\infty(t))$ is a global trajectory with  
\begin{align}
(\bu_\infty(t),\theta_\infty(t)) \in B_r \ \mbox{ for all}\  t \in \nR,
\end{align}
 it readily follows from \eqref{timeinvdblbd} that $\|\bu_\infty(t)\|_{\h^1} \le CR(r)\kappa_0$ for all $t \in \nR$.
 \comments{
 We will prove part (ii) of the theorem. In particular, we need to
 show that $F_{\theta_{\infty} (t)}=F_{\thetaInit}$ for all $t \in \nR$. 
 Let $(\bu,\theta)$ be a  solution on $[0,\infty)$ with initial data $(\bu_a,\theta_a)$. 
 Construct $(\bu_n,\theta_n)$ on intervals $[-t_n,\infty)$ as above and as in the proof of part (i), we obtain  a global solution  $(\bu_\infty(t),\theta_\infty(t)), t \in \nR$ by passing through a sub-sequential limit. 
 Now fix an arbitrary interval $I_M$ as above. Observe that due to Theorem \ref{thm_Well_Posedness2},  
 $\sup_{t \in [0,T]}\|\theta(t)\|_{\h^{1}}$ is bounded. This implies that for a fixed $M$ and $n$ sufficiently large, $\sup_{t \in I_M}\|\theta_n(t)\|_{\h^{1}}$ is uniformly bounded in $n$. 
As stated above,  $\theta_n \ra \theta_\infty$ in $C(I_M; \h^{-1})$, which in particular implies that
$\theta_n(0) \ra \theta_\infty(0)$ in $\h^{-1}$. Moreover, for every $t \in I_M$, 
and for all $n \ge M$, we must have
\begin{gather}  \label{distfn}
F_{\theta_n(t)}=F_{\thetaInit}\ \mbox{for all}\  t\ \in I_M.
\end{gather}
Now, due to the uniform (in $n$ and $t \in I_M$) $\h^1$ bound on $\theta_n(t)$ on the time interval $I_M$ and the compact embedding of $\h^1$ in $L^2$, there exists a subsequence $\theta_{n_j}(0) \ra \theta_\infty(0)$ in $L^2$. Since  by passing through a further subsequence, the convergence can be assumed to be a.e. $x \in \Omega$, 
this means the corresponding distribution functions converge. However, due to
\eqref{distfn}, we then must have $F_{\theta_\infty(0)}= F_{\theta_0}$. Denoting 
$(\bu_a,\theta_a)=(\bu_\infty(0),\theta_\infty(0))$, we must have  $(\bu_a,\theta_a)$ belongs to $\cA$
and $F_{\theta_a} =F_{\theta_0}$.
}
\end{proof}
We will also need the next proposition in order to proceed.
\begin{proposition}  \label{wkcontinuity}
For each fixed $t$ and  a sequence $(\bu_{0,n},\theta_{0,n}) \overset{\text{wk}}{\lra} (\buInit,\theta_0)$, we have 
\begin{align}
S(t)(\bu_{0,n},\theta_{0,n}) \overset{\text{wk}}{\lra} S(t)(\buInit,\theta_0). 
\end{align}
In particular, for each fixed $t$, the map $S(t):X \lra H \times L^2$ is weakly continuous, where $X$ is a bounded subset of 
$H \times L^2$.
\end{proposition}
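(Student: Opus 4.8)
The plan is to run the standard compactness-plus-uniqueness argument, recycling the \emph{a priori} bounds and sub-sequential limit construction already established in Lemma~\ref{lem_infty_solutions}, but now on a finite time interval and anchored by uniqueness rather than by the $\omega$-limit structure. Fix $t$ and a horizon $T\ge t$, and write $(\bu_n(s),\theta_n(s)):=S(s)(\bu_{0,n},\theta_{0,n})$. A weakly convergent sequence is norm-bounded, so $(\bu_{0,n},\theta_{0,n})$ lies in a fixed ball $B_r$; hence by \eqref{Lp_conserved} the quantities $\|\theta_n(s)\|_{L^2}$ are bounded uniformly in $n$ and $s$, and the energy inequality \eqref{energyineq} bounds $\bu_n$ uniformly in $L^\infty([0,T];H)\cap L^2([0,T];V)$. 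Feeding these into the functional equation \eqref{functmom} through the Ladyzhenskaya inequality \eqref{ladyzhenskaya}, and into the transport equation \eqref{Bouss_den} through the Agmon inequality \eqref{agmon}, bounds $\tfrac{d}{dt}\bu_n$ uniformly in $L^2([0,T];V')$ and $\tfrac{d}{dt}\theta_n$ uniformly in $L^4([0,T];\h^{-3/2})$, in accordance with Theorem~\ref{thm_Danchin_Paicu}.

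With these bounds I would extract a subsequence and a limit $(\bu_\infty,\theta_\infty)$ as follows. By Aubin--Lions and the compact embedding $V\hookrightarrow H$, the velocities converge \emph{strongly} in $L^2([0,T];H)$, while the uniform $L^\infty([0,T];H)$ bound together with the $V'$-equicontinuity coming from the $\tfrac{d}{dt}\bu_n$ bound yields $\ip{\bu_n(\cdot)}{w}\to\ip{\bu_\infty(\cdot)}{w}$ in $C([0,T])$ for every $w\in H$; note that one cannot expect strong convergence pointwise in time, since at $s=0$ the data converge only weakly. For the temperatures, the same Arzel\`a--Ascoli and Cantor diagonalization argument used in Lemma~\ref{lem_infty_solutions} gives $\theta_n\to\theta_\infty$ in $C([0,T];\h^{-1})$ and, after a further extraction, the uniform-in-time weak convergence $\ip{\theta_n(\cdot)}{\eta}\to\ip{\theta_\infty(\cdot)}{\eta}$ in $C([0,T])$ for every $\eta\in L^2$.

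I would then pass to the limit in the weak formulation \eqref{Bouss_wk}. The linear terms converge exactly as displayed in the proof of Lemma~\ref{lem_infty_solutions}; the quadratic term $\bu_n\otimes\bu_n$ converges because $\bu_n\to\bu_\infty$ strongly in $L^2([0,T];H)$, and the transport flux $\bu_n\theta_n$ converges by pairing this strong convergence of $\bu_n$ against the weak convergence of $\theta_n$. Hence $(\bu_\infty,\theta_\infty)$ solves \eqref{Bouss_wk}. The initial datum is identified from the weak-in-time continuity: since $\bu_n(0)=\bu_{0,n}\overset{\text{wk}}{\lra}\buInit$ and $\theta_n(0)=\theta_{0,n}\overset{\text{wk}}{\lra}\theta_0$ by hypothesis, the $C([0,T])$ convergence of the pairings at $s=0$ forces $\bu_\infty(0)=\buInit$ and $\theta_\infty(0)=\theta_0$. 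By the uniqueness statement of Theorem~\ref{thm_Danchin_Paicu}, $(\bu_\infty,\theta_\infty)$ must be the trajectory $S(\cdot)(\buInit,\theta_0)$, independently of the chosen subsequence. A subsequence-of-subsequence argument then upgrades this to convergence of the full sequence, and evaluating the uniform-in-time weak limits at the fixed time $t$ gives $\bu_n(t)\overset{\text{wk}}{\lra}\bu_\infty(t)$ in $H$ and $\theta_n(t)\overset{\text{wk}}{\lra}\theta_\infty(t)$ in $L^2$, which is exactly $S(t)(\bu_{0,n},\theta_{0,n})\overset{\text{wk}}{\lra}S(t)(\buInit,\theta_0)$.

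The main obstacle is the passage to the limit in the nonlinearities, and concretely the strong $L^2([0,T];H)$ convergence of $\bu_n$ that makes it possible: this is where the parabolic character of \eqref{Bouss_mo}, supplying the $L^2([0,T];V)$ bound and hence Aubin--Lions compactness, is indispensable, whereas for $\theta$ only weak convergence is available. One must therefore arrange the transport pairing so that the compact (velocity) factor absorbs the merely weakly convergent (temperature) factor. The only other delicate point is the identification of the initial data in a weak-in-time rather than norm sense, for which the $C([0,T])$ convergence of the scalar pairings is the right tool.
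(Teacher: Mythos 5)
Your proposal is correct and follows essentially the same route as the paper: the paper's proof also runs the compactness machinery of Lemma~\ref{lem_infty_solutions} on a finite interval (obtaining convergence in $C([\epsilon,T];H\times\h^{-1})$ for every $\epsilon>0$) and then invokes forward uniqueness from Theorem~\ref{thm_Danchin_Paicu} to identify the limit as $S(\cdot)(\buInit,\thetaInit)$. The only difference is cosmetic: where the paper sidesteps the initial time by working on $[\epsilon,T]$, you identify $(\bu_\infty(0),\theta_\infty(0))=(\buInit,\thetaInit)$ via the uniform-in-time convergence of the scalar pairings, which is a slightly more explicit treatment of the same point.
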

\begin{proof}
Fix $T >0$ and let $(\bu_{0,n},\theta_{0,n})$ be the sequence in the hypothesis of the proposition.  Due to the weak convergence, the sequence is uniformly bounded in the phase space.
Let $(\bu_n(t),\theta_n(t)):=S(t)(\bu_{0,n},\theta_{0,n}), 0 \le t \le T$. By arguments similar to the proof of Lemma \ref{lem_infty_solutions}, and by the uniform (in $n$) bounds provided in Theorem \ref{thm_Danchin_Paicu}, $(\bu_n(\cdot),\theta_n(\cdot))$ converges (in $C[\epsilon,T];H \times \h^{-1})$
for all $\epsilon >0$) to a solution $(\bu(\cdot),\theta(\cdot))$ with initial data $(\buInit,\theta_0)$. By uniqueness of solutions, 
$(\bu(t),\theta(t))=S(t)(\buInit,\theta_0), 0 \le t \le T$. Since
 $(\bu_n(T),\theta_n(T)) \overset{\text{wk}}{\lra} (\bu(T),\theta(T))$ (see proof of Lemma \ref{lem_infty_solutions}), the proof is complete.
 The weak continuity of $S(T)$ on bounded subsets now follows from noting that the weak topology is metrizable on bounded sets.
\end{proof}
 
\subsection{Proof of Theorem \ref{main_att_thm}}
For the remainder of this section, we focus on proving Theorem \ref{main_att_thm}. In particular, we will examine to what extent  $\cA$ shares the properties \eqref{A_attracting}, \eqref{A_bounded}, and \eqref{A_invariant} with $\cA_{\text{NS}}$, the attractor for the 2D Navier-Stokes equations.

\comments{\begin{theorem}\label{thm_A_eq_U_Ar}
The relation $\cA = \bigcup\limits_{r\geq0} \cA_r$ holds, where $\cA_r$ is as defined in \eqref{ardef}.
\end{theorem}
}

\noindent
{\sc Proof of Theorem \ref{main_att_thm}} (i)
\begin{proof}
We first show the inclusion in the ``$\supset$'' direction. Let $(\bu_\infty,\theta_\infty)\in\cA_r$ for some $r\geq 0$.  Then, since $\cA_r=\omega(B_r)$, due to \eqref{omegaset} and Lemma \ref{lem_infty_solutions}, it immediately follows that
$(\bu_{\infty},\theta_{\infty}) \in \cA$.

Next, we show the inclusion in the ``$\subset$'' direction.
Choose any $(\buInit,\thetaInit)\in\cA$, and suppose that $(\buInit,\thetaInit)\not\in\cA_s$ for all $s\geq0$.  By the definition of $\cA$, there exists a trajectory $(\bu(t),\theta(t))\in \cA$ such that for some $t_0$, $(\bu(t_0),\theta(t_0))=(\buInit,\thetaInit)$.  Furthermore, $(\bu(t),\theta(t))$ is bounded in $H\times L^2$ uniformly for all $t\in\nR$, so there exists an $r>0$ such that $(\bu(t),\theta(t))\in B_r$ for all $t\in\nR$.  Since $(\bu(t),\theta(t))$ is defined for $t\in\nR$, we must have $(\buInit,\thetaInit)=S(\tau)(\bu(t_0-\tau),\theta(t_0-\tau))\in B_r$ for every $\tau\in\nR$.  But, $(\bu(t_0-\tau),\theta(t_0-\tau))\in B_r$, so $(\buInit,\thetaInit)\in \omega(B_r)=\cA_r$.  
\end{proof}

\noindent
{\sc Proof of Theorem  \ref{main_att_thm}} (ii)
\comments{
\begin{theorem}\label{prop_steady_states}
$\cA$ contains all the steady states of \eqref{Bouss}.
\end{theorem}
}
\begin{proof} Since any steady state is bounded for all times $t\in\nR$, the proposition follows immediately from the definition of $\cA$.
\end{proof}

\noindent
{\sc Proof of Theorem \ref{main_att_thm}} (iii)
\comments{
\begin{theorem}\label{thm_no_weak_interior}
$\cA$ has empty interior in the strong (and therefore weak) topology of $H \times L^2$.
\end{theorem}
}
\begin{proof}
If $(\bu_a,\theta_a) \in \cA$, then by Lemma \ref{lem_infty_solutions}, $\bu_a \in V$. Any open ball in $H \times L^2$ (in the norm topology) must contain points whose first component is not in $V$. Thus, there does not exist an open ball contained in $\cA$.
\end{proof}

\noindent
{\sc Proof of  Theorem \ref{main_att_thm}} (iv)
\comments{
\begin{theorem}\label{prop_special_solutions}
 The weak sigma-attractor $\cA$ corresponding to the system \eqref{Bouss} is  a  non-empty, proper subset  of the phase space $X:=H\times L^2(\Omega)$ which moreover contains infinite-dimensional subspaces of the phase space. 
\end{theorem}
}
\begin{proof}
Theorem \ref{main_att_thm} (iii)  immediately implies that $\cA \neq X$. To show it is nonempty,
consider a solution of the form $(u^H_1,u^H_2,\theta^H,p^H)$, defined in Subsection \ref{steadystates}.  Let us also impose that $\partial_t u^H_2\equiv0$, so that $u_2^H=u_2$, where $u_2=u_2(x_1)$ is the unique, mean-free and periodic solution of the equation
\begin{equation}
 \nu\frac{d^2}{dx_1^2}u_2(x_1)=-g\theta^H(x_1).
\end{equation}
Then $(u^H_1,u^H_2,\theta^H,p^H)$ is a steady state solution of \eqref{Bouss}, so $(u^H_1,u^H_2,\theta^H)\in\cA$. Since $\theta^H\in L^2$ can be chosen arbitrarily, $\cA$ contains an infinite dimensional subspace.
\end{proof}


\noindent
{\sc Proof of Theorem \ref{main_att_thm}} (v)
\comments{
\begin{theorem}\label{thm_Ar_weakly_compact}
For every $r\geq0$, $\cA_r$ is weakly compact in $H\times L^2$.
\end{theorem}
}
\begin{proof}
Since $\cA_{r_1} \subset \cA_{r_2}$ if $r_1 < r_2$, it follows that $\cA = \bigcup\limits_{r\geq0,r \in \nQ} \cA_r$.
To complete the proof, 
it is enough to show that for every $r\geq0$, $\cA_r$ is weakly compact in $H\times L^2$.
By the Banach-Alaoglu Theorem, $B_r$ is weakly compact in $H\times L^2$.  Since $S(t)$ is weakly continuous, $S(t)B_r$ is weakly compact.  Thus, $A_r:=\omega(B_r)$ is the intersection of weakly compact sets, so it is weakly compact.
\end{proof}

\comments{

\begin{theorem}\label{thm_weak_compactness_of_bounded_subsets}
If $X$ is any bounded subset of $ \cA$, then there exists $r>0$ such that $X \subset \cA_r$.
Furthermore,
$X$ is weakly relatively compact in $H\times L^2$
and moreover, the weak closure of $X$ lies in $\cA $.
\end{theorem}
\begin{proof}
Choose $r>0$ such that $X \subset B_r$. Additionally, due to \red{Theorem \ref{thm_infty_solutions}
and \eqref{capsheafuniform}}, by choosing $r$ sufficiently large, we may assume that all the global trajectories passing through  points in $X$ are also contained in $B_r$. Clearly, all these global trajectories then are entirely contained in $\cA_r=\omega(B_r)$. Since $\cA_r$ is weakly compact, the claim follows.
%
%
\end{proof}

\begin{theorem}\label{thm_A_weakly_closed}
$\cA$ is weakly sequentially closed in $H\times L^2$.
\end{theorem}
\begin{proof}
This immediately follows from the theorem above by noting that a weakly convergent sequence 
$U_{0,n} \in \cA$ is bounded and its weak closure closure contains its limit.
\comments{Let $U_{0,n}=(\bu_{0,n},\theta_{0,n}) \overset{\text{weak}}{\rightharpoonup} U_0=(\bu_a,\theta_a)$
with $U_{0,n} \in \cA$. Thus, $U_{0,n} \in B_r$ for sufficiently large $r$. Moreover, since $U_{0,n} \in \cA$, by \red{Theorem \ref{thm_infty_solutions}}, there exists a globally bounded trajectory $U_n(t), t \in \nR$ satisfying uniform bounds given in \eqref{capsheafuniform}. Thus, $r$ may be chosen so large that the entire trajectories $U_n(\cdot)$ lie in $B_r$  for all $n \in \nN$. Moreover, we can label time in such a manner that
$U_n(t_n)=U_{0,n}$ with $t_n \ra \infty $. We conclude  that $U_0 \in \cA_r \subset \cA$ thus establishing the claim.
}
\end{proof}



\comments{Let $I$ be a directed set, and choose a net $\set{(\bu_\alpha,\theta_\alpha)}_{\alpha\in I}\subset\cA$, which converges weakly in $H\times L^2$, say\todo{change back to sequence}
\begin{align*}
 \bu_\alpha\overset{\text{weak}}{\rightharpoonup}\bu,
 \qquad 
 \theta_\alpha\overset{\text{weak}}{\rightharpoonup}\theta.
\end{align*}
Since $\set{(\bu_\alpha,\theta_\alpha)}_{\alpha\in I}$ is weakly convergent, it is bounded in $H\times L^2$, and therefore it is contained in $B_r$ for some fixed, sufficiently large $r$ (see Definition \ref{def_B_r}).
By the definition of $\cA$, for each $\alpha\in I$, there exists a complete trajectory, bounded uniformly in $H\times L^2$ for all $t\in\nR$, which passes through $(\bu_\alpha,\theta_\alpha)$.  By shifting the time as in the proof of Theorem \ref{thm_infty_solutions}, we can find a point $(\bv_\alpha,\phi_\alpha)\in B_r$ such that $S(t_\alpha)(\bv_\alpha,\phi_\alpha) = (\bu_\alpha,\theta_\alpha)$, and furthermore, we can arrange for $t_\alpha$ to converge to $\infty$ for sufficiently large $\alpha$.  Thus, by the definition of $\cA_r$, we have $(\bu_\alpha,\theta_\alpha)\in\cA_r$ for every $\alpha$.  By Theorem \ref{thm_Ar_weakly_compact}, we can extract a weakly convergent subnet $(\bu_\beta,\theta_\beta)\rightharpoonup(\widetilde{\bu},\widetilde{\theta})$, such that $(\widetilde{\bu},\widetilde{\theta})\in\cA_r$.  But since  $\set{(\bu_\alpha,\theta_\alpha)}_{\alpha\in I}$ is convergent, we must have 
$(\bu,\theta)=(\widetilde{\bu},\widetilde{\theta})\in\cA_r\subset\cA$.
}

}

\comments{
\begin{theorem}\label{thm_A_attracts_bdd_sets}
  Let $X \subset H\times L^2$ be bounded. Then there exists
$r >0$ such that ${\it dist}(S(t)X,\cA_r) \ra 0$ as $t \ra \infty $.
\end{theorem}
}

\noindent
{\sc Proof of Theorem \ref{main_att_thm}} (vi)
\begin{proof}
Suppose $X\subset H\times L^2$ is bounded.  Then there exists $r>0$ such that 
$X \subset B_r$, where $B_r$ is the absorbing set as defined in Definition \ref{def_B_r}.
Thus, due to semi-invariance of $B_r$ under $S(t)$, we have $S(t)X \subset B_r$ for all $t>0$. 
The proof now proceeds by contradiction. Assume that the conclusion of the theorem is false.
Then there exists a sequence $(\bu_n,\theta_n) \in X$, $t_n \ra \infty$ and $\epsilon>0$ such that
\begin{gather}  \label{positivedist}
d(S(t_n)(\bu_n,\theta_n), \by) > \epsilon\  \mbox{for all}\  \by \in \omega(B_r)=\cA_r \subset \cA. 
\end{gather}
However, since 
$S(t_n)(\bu_n,\theta_n) \in B_r$, by the weak compactness of $B_r$, there exists $\by_0 \in B_r$
and a subsequence $n_j$ such that $S(t_{n_j})(\bu_{n_j},\theta_{n_j}) \overset{\text{wk}} {\lra}\by_0$, or equivalently, $d(S(t_{n_j})(\bu_{n_j},\theta_{n_j}),\by_0) \ra 0$. By the definition of $\omega(B_r)$, the point $\by_0 \in \omega(B_r)$ and this contradicts \eqref{positivedist}.
\end{proof}

\comments{
\begin{theorem}\label{thm_weakly_connected}
For each $r\geq0$, $\cA_r$ and $\cA$ are weakly connected.
\end{theorem}
}

\noindent
{\sc Proof of Theorem \ref{main_att_thm}} (vii)
\begin{proof}
Suppose $\cA_r$ is not weakly connected.  Then there exist open sets $O_1$ and $O_2$ in the weak topology of $H\times L^2$ such that $\cA_r\subset O_1\cup O_2$, $O_1\cap O_2=\varnothing$, $\cA_r\cap O_1\neq\varnothing$, and $\cA_r\cap O_2\neq\varnothing$.   Recall that $\cA_r:=\omega(B_r)$.  Since $B_r$ is weakly connected and $S(t)$ is weakly continuous, $S(t)B_r$ is also weakly connected.  Thus, for each $n\in\nN$, there exists $(\bu_n,\theta_n)\in (S(n)B_r)\setminus (O_1 \cup O_2)$.  

Since $B_{r}$ is invariant under $S(t)$, $(\bu_n,\theta_n)$ is bounded.  
Thus, by the Banach-Alaoglu Theorem, there exists a point $(\bu,\theta)\in H\times L^2$ and a subsequence $(\bu_{n_i},\theta_{n_i})\rightharpoonup(\bu,\theta)$ in the weak topology of $H\times L^2$.  Since $O_1\cup O_2$ is weakly open, $(\bu,\theta)\not\in O_1\cup O_2$.  But, by Theorem \ref{main_att_thm} (vi), $\cA_r$  attracts all bounded sets in the weak topology, so that the weak limit of $(\bu_n,\theta_n)$ must lie in $A_r\subset O_1\cup O_2$; a contradiction.  Therefore, $\cA_r$ is connected.

To show that $\cA$ is weakly connected, note that by Theorem \ref{main_att_thm}(i), $\cA=\cup_{r\geq0}\cA_r$.  Furthermore, each $\cA_r$ is connected and contains the zero element.  Therefore, if $O_1$ and $O_2$ are weakly open sets that separate $\cA$, all $\cA_r$ must be contained in the one containing the zero element, so $\cA$ must be contained entirely in either $O_1$ or $O_2$, so that $\cA$ is connected.
\end{proof}
\comments{
\begin{theorem}\label{thm_invariance}
The global attractor $\cA$ is invariant (i.e. $S(t)\cA =\cA$) for all $t \ge 0$. 
Moreover, it is the minimal set that attracts all bounded sets.
\end{theorem}
}

\noindent
{\sc Proof of Theorem \ref{main_att_thm}} (viii)
\begin{proof}
Let $(\buInit,\theta_0) \in \cA$. Then there exists a bounded, global trajectory $(\bu(t),\theta(t)), t \in \nR$ passing through $(\buInit,\theta_0)$. It is clear from the definition of $\cA$ that all points on this trajectory also lie on $\cA$.  In fact, since this trajectory is bounded, it lies entirely in some $B_r$ and consequently, in $\omega(B_r)=\cA_r$.
We may assume, by translating time if necessary,  that 
$(\bu(0),\theta(0))=(\buInit,\theta_0)$.
Then, by (forward in time) uniqueness of solutions, 
\begin{align}
S(t)(\bu(-t),\theta(-t)) = (\bu(0),\theta(0))=(\buInit,\theta_0).
\end{align}
As remarked above, $(\bu(-t),\theta(-t)) \in \cA$. Thus, $\cA \subset S(t)\cA$ for all $t \ge 0$. 
For the reverse inclusion, simply note that the (forward in time) uniqueness of solutions guarantees that
$(\bu(t),\theta(t))=S(t)(\buInit,\theta_0)$ for all $t \ge 0$,  where $(\bu(t),\theta(t)), t \in \nR$ is the above mentioned bounded global trajectory. Thus $S(t)\cA=\cA$.

The fact that $\cA$ attracts all bounded sets was proven in Theorem \ref{main_att_thm} (vi). 
We will now prove that it is minimal. Let $\cA'$ be a set  which has the property that it attracts all bounded sets and let $(\buInit,\theta_0) \in \cA$. Then there exists a bounded, global trajectory passing through it. In view of the previous paragraph, we see that the set in the phase space
\begin{align}
{\mathcal G}=\{(\bu(t),\theta(t)): t \in \nR\} \subset H \times L^2,
\end{align}
is invariant under the semigroup $S(t)$ for all $t \ge 0$, i.e., $S(t){\mathcal G} ={\mathcal G}$ for all $t \ge 0$. On the other hand, ${\mathcal G}$ is a bounded set in $H \times L^2$. Thus, by  the assumption that $\cA'$ attracts all bounded sets,  
\begin{align}
\limsup_{t \ra \infty} \text{dist}(S(t){\mathcal G}, \cA')=0. 
\end{align}
Thus, ${\mathcal G} \subset \cA'$, and consequently, $(\buInit,\theta_0) \in \cA'$.
\end{proof}

\noindent
{\sc Proof of Theorem \ref{main_att_thm}}(ix)
\begin{proof}
Note first that since $(\bu(t),\theta(t)), t \ge 0$ is bounded in $H \times L^2$, there exists $(\bu_a,\theta_a)$ and a sequence $t_n$ such that $(\bu(t_n),\theta(t_n)) \overset{\text{wk}}{\lra} (\bu_a,\theta_a)$. 
By Lemma \ref{lem_infty_solutions}, we immediately infer that $(\bu_a,\theta_a) \in \cA$. Furthermore, 
the global trajectory $(\bu_\infty(\cdot),\theta_\infty(\cdot))$ constructed in Lemma \ref{lem_infty_solutions} lies in $\cA$ and
\begin{align}
(\bu_n(t),\theta_n(t)):=(\bu(t+t_n),\theta(t+t_n), t \in [-t_n,\infty),
\end{align}
 converges to  $(\bu_\infty(\cdot),\theta_\infty(\cdot))$ in
 $C([-M,M];H) \times C([-M,M];{\mathbb H}^{-1})$ for all $M>0$. This concludes the proof.
\end{proof}




\comments{
\subsection{Further Analysis of the Attractor}.
Suppose for some $r>0$ and some $(\bu,\theta)\in\cA_r$ that there exists a $t_0\in\nR$ such that $\bu(t_0)=\vect{0}$.  Then, from \eqref{Bouss}, we have
\begin{subequations}
 \begin{align}
 \partial_t\bu\big|_{t=t_0} &= P_\sigma(\theta(t_0) \bg),
 \\
  \partial_t\theta\big|_{t=t_0} &= 0,
\end{align}
\end{subequations}
Differentiating \eqref{Bouss_mo} with respect to $t$ yields
\begin{align}
\bu_{tt} + B(\bu_t,\bu) + B(\bu,\bu_t) &= \nu\triangle\bu_t+ \theta_t \bg
\end{align}
so that
\begin{align*}
 \bu_{tt}(t_0) &= \nu\triangle\bu_t(t_0)
\end{align*}
}

 \section{The Presence of 2D Turbulence}  \label{sec:2dturb}
The Batchelor-Kraichnan-Leith theory \cite{Bat, Kr} of 2D turbulence (inspired by that of Kolmogorov in 3D \cite{K41_3,K41_1, K41_2}) asserts that, on average, the behavior of eddies in turbulent flows is determined by their length scales. In a relatively large range of scales $[\kappa_{*},\kappa^*]$ (called the {\it inertial range}) 
viscous effects are negligible and enstrophy is transferred at a nearly constant rate $\eta$ from one length scale to the next smaller one (termed the {\it enstrophy cascade}). The {\it dissipation range} consists of the very small length scales where the viscosity annihilates the enstrophy. Heuristic arguments by Batchelor and Krachnan \cite{Bat, Kr} place the dissipation range beyond a wave number $\kappa_\eta = \left(\frac{\eta}{\nu^3}\right)^{1/6}$ where $\eta$ is the average rate of enstrophy dissipation per unit mass. The main tenets of this empirical theory can thus be summarized as (see, e.g., \cite{Balci_Foias_Jolly_2010,Foias_Manley_Rosa_Temam_2001,foias5,foias4}),
\begin{itemize}
\item[(i)] a significant amount of enstrophy is in the inertial range $[\kappa_*,\kappa^*]$;
\item[(ii)] this range is wide, i.e. $\kappa_* << \kappa^* \sim \kappa_\eta$;
\item[(iii)] the direct cascade of enstrophy (to smaller scales) holds over this range;
\item[(iv)] the power law $e_{\kappa,2\kappa} \sim \frac{\eta^{2/3}}{\kappa^2}$ holds for the amount of energy $e_{\kappa,2\kappa}$  contained in the length scales $\kappa $ to $2\kappa$ for $\kappa \in [\kappa_*,\kappa^*]$.
\end{itemize}

Rigorous justification for parts of the theory has been obtained in the series of works 
\cite{foias2, foias3,Foias_Manley_Rosa_Temam_2001,foias5,foias4} among others. In \cite{foias2,foias1}, it is shown that many of the ubiquitous averages in empirical turbulence theory, upon the application of which patterns are observable, can be taken to be finite time averages, albeit on sufficiently long periods of time.  Furthermore in \cite{Balci_Foias_Jolly_2010} extension of the above mentioned works has been obtained for forcing in all scales. Thus, in view of the results in \cite{Balci_Foias_Jolly_2010, foias2, foias1}, the universal features of turbulence already hold for finite time averages of the form 
\begin{align}
\langle \Phi \rangle := \frac{1}{t_2-t_1}\int_{t_1}^{t_2} \Phi(\bu(s))ds,
\end{align}
where $\Phi$'s are the relevant physical functionals   
on the phase space $H $ and $t_2> t_1>0$ satisfies 
\[
\max\left\{ t_1, t_2 -t_1\right\}>>\frac{G}{\nu\kappa_0^2}.
\]
Here $G$ is the Grashoff number which is a non-dimensionalized version of the $L^2$ norm of the  driving force in the Navier-Stokes equations.

In principle, the results in \cite{Balci_Foias_Jolly_2010} apply in our setting to the velocity equation written in the functional form in \eqref{functmom}. However,  for the inertial range to be sufficiently large for turbulent patterns to emerge, it is necessary (though not sufficient; see Remark \ref{rmk:nonturbulent} below) for the magnitude of the driving force, as measured by the {\it Grashof number}, to be large. Indeed, it is well-known that the 2D Navier-Stokes equations converge to a steady state if the force is time independent and the Grashof number is sufficiently small  \cite{Temam_1997_IDDS, Temam_1995_Fun_Anal, Temam_2001_Th_Num}. Accordingly, let us define the dimensionless (time-dependent) number $G_\sigma$ by
\begin{align}
 G_\sigma\equiv G_\sigma(t) := 
 \frac{\|P_\sigma(\bg\theta)(t)\|_{L^2}}{\nu^2\kappa_0^2}
=
 \frac{g}{\nu^2\kappa_0^2}
\pnt{\sqrt{\|\thetaInit\|_{L^2}^2-\|R_2\theta(t)\|_{L^2}^2}}.
\end{align}
The $\sigma$ here is used to denote the fact that the norm of the solenoidal projection of the force is taken, rather than the norm of the force itself.  Note that, as in \eqref{def_L2_Grashof},one may also consider the somewhat simpler (though potentially larger) time-invariant, dimensionless number
\begin{align}
 G := 
 \frac{g\|\thetaInit\|_{L^2}}{\nu^2\kappa_0^2},
\end{align}
which obviously satisfies  $0 \le G_{\sigma}(t) \le G$. The {\it effective Grashof number} governing the dynamics of \eqref{functmom} is defined to be 
\begin{gather}  \label{effectivegrashoff}
G^*_\sigma = \limsup_{t \ra \infty}  G_\sigma(t).
\end{gather}

 Thus, the complexity of the flow, at least in regard to the statistical features, is expected to be determined by 
$G^*_\sigma $. This is borne out by the following result which is  analogous to the 2D Navier-Stokes equations.   Before stating this result, let us observe that since  $\theta \in L^2$, the distributional derivatives $\partial_{x_i}\theta, i=1,2$ belong to $\h^{-1}$ and moreover,
\begin{align}
\|\partial_{x_i}\theta(t)\|_{\h^{-1}} \lesssim \|\theta(t)\|_{L^2} = \|\theta_0\|_{L^2}, i=1,2.
\end{align}
\begin{theorem} \label{thm:steadystate}
Assume that $G_\sigma^* < \epsilon$ where $\epsilon >0$ is sufficiently small and that 
$P_\sigma\bg\theta (t)$ converges weakly to $\bbf\ \in\ H$ as $t \ra \infty$. Then $\bu(t)$ approaches a steady state of the 2D Navier-Stokes equations with time-independent driving force $\bbf$. Consequently,  if $\partial_{x_1}\theta (t)$ converges to zero in $\h^{-1}$ as $t \ra \infty$, or equivalently, if  $\|P_\sigma(\bg\theta)(t)\|_{L^2}$ converges to zero, then $\bu(t)$ converges to zero in $H$.
\end{theorem}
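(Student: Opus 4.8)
The plan is to treat the velocity equation \eqref{functmom} as a two-dimensional Navier--Stokes system driven by the time-dependent, solenoidal force $\bbf(t):=P_\sigma(\theta(t)\bg)$, and to compare $\bu(t)$ against the steady state of the limiting autonomous problem. Since $\|\bbf(t)\|_{L^2}\le g\|\theta(t)\|_{L^2}=g\|\thetaInit\|_{L^2}$, the family $\{\bbf(t)\}$ is bounded in $H$, and $\limsup_{t\to\infty}\|\bbf(t)\|_{L^2}=\nu^2\kappa_0^2 G_\sigma^*$. First I would record that, by weak lower semicontinuity of the $L^2$-norm, the weak limit satisfies $\|\bbf\|_{L^2}\le\nu^2\kappa_0^2 G_\sigma^*<\nu^2\kappa_0^2\epsilon$, so the limiting stationary Navier--Stokes equation $\nu A\bar\bu+B(\bar\bu,\bar\bu)=\bbf$ has, for $\epsilon$ small, a unique solution $\bar\bu$ by the standard small-Grashof-number theory (see Appendix \ref{ssec_Grashof_Numbers}); pairing this equation with $\bar\bu$ and using \eqref{poincare} yields the bound $\|\bar\bu\|_{\h^1}\le\|\bbf\|_{L^2}/(\nu\kappa_0)<\nu\kappa_0\epsilon$.

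Next I would set $\bw:=\bu-\bar\bu$ and subtract the two equations to obtain
\[
\frac{d\bw}{dt}+\nu A\bw+B(\bu,\bw)+B(\bw,\bar\bu)=\bbf(t)-\bbf ,
\]
then take the $H$ inner product with $\bw$. The term $\ip{B(\bu,\bw)}{\bw}$ vanishes by \eqref{Bzero}, so the only nonlinear contribution is $\ip{B(\bw,\bar\bu)}{\bw}=-\ip{B(\bw,\bw)}{\bar\bu}$, which by \eqref{B442} and \eqref{poincare} is bounded by $C\nu\epsilon\|\bw\|_{\h^1}^2$. Choosing $\epsilon$ small enough to absorb this into the viscous term, and estimating the forcing by $\ip{\bbf(t)-\bbf}{\bw}\le\|\bbf(t)-\bbf\|_{\h^{-1}}\|\bw\|_{\h^1}$ together with Young's inequality, leaves a differential inequality of the form
\[
\tfrac12\tfrac{d}{dt}\|\bw\|_{L^2}^2+\tfrac{\nu}{4}\|\bw\|_{\h^1}^2\le\tfrac1\nu\|\bbf(t)-\bbf\|_{\h^{-1}}^2 .
\]
Crucially, because $\ip{B(\bu,\bw)}{\bw}=0$, this estimate is self-contained and needs no uniform $\h^1$ bound on $\bu$ itself.

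The hard part will be converting the merely \emph{weak} convergence $\bbf(t)\rightharpoonup\bbf$ into usable decay of the forcing. The key observation is that the embedding $H=L^2\hookrightarrow\h^{-1}$ is compact (by duality with the Rellich--Kondrachov embedding $\h^1\hookrightarrow L^2$), so a bounded family converging weakly in $L^2$ converges \emph{strongly} in $\h^{-1}$; a short contradiction argument along any sequence $t_n\to\infty$ upgrades $\bbf(t)\rightharpoonup\bbf$ to $\|\bbf(t)-\bbf\|_{\h^{-1}}\to0$. With this in hand, \eqref{poincare} gives $\tfrac{d}{dt}\|\bw\|_{L^2}^2+\tfrac{\nu\kappa_0^2}{2}\|\bw\|_{L^2}^2\le\tfrac2\nu\|\bbf(t)-\bbf\|_{\h^{-1}}^2$, and a Gr\"onwall estimate with a vanishing forcing term (split the time integral at a large time where $\|\bbf(s)-\bbf\|_{\h^{-1}}$ is small, and let the remaining exponential factor decay) yields $\|\bw(t)\|_{L^2}\to0$; that is, $\bu(t)$ approaches the steady state $\bar\bu$ of the Navier--Stokes equations with force $\bbf$.

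Finally, for the closing assertion I would note that $\|P_\sigma(\bg\theta)(t)\|_{L^2}\to0$ forces $\bbf=0$, whence $G_\sigma^*=0$ and the unique small-data steady state is $\bar\bu=\vect{0}$, so $\bu(t)\to0$ in $H$; the equivalence with $\partial_{x_1}\theta(t)\to0$ in $\h^{-1}$ is exactly the identity $\|P_\sigma(\bg\theta)\|_{L^2}=g\|R_1\theta\|_{L^2}=g\|\partial_{x_1}\theta\|_{\h^{-1}}$ recorded in \eqref{Riesz}.
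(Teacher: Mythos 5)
Your proposal is correct, and it reaches the conclusion by a genuinely different (though closely related) decomposition than the paper's. The paper compares $\bu(t)$ with the \emph{time-dependent} solution $\bv(t)$ of the 2D Navier--Stokes equations driven by the constant force $\bbf$ with zero initial data; the surviving nonlinear term after pairing with $\bw=\bu-\bv$ is $\ip{B(\bw,\bu)}{\bw}$, which is controlled via \eqref{B424} using the uniform-in-time bound $\sup_t\|\bu(t)\|_{\h^1}\le C\epsilon\nu\kappa_0$ inherited from \eqref{timeinvdblbd}, and the argument then closes by citing the classical fact that $\bv(t)$ converges to the small-Grashof fixed point. You instead compare $\bu(t)$ directly with the stationary solution $\bar\bu$ of $\nu A\bar\bu+B(\bar\bu,\bar\bu)=\bbf$; since the surviving term is $\ip{B(\bw,\bar\bu)}{\bw}$, the absorption into the viscous term needs only the bound $\|\bar\bu\|_{\h^1}\le\|\bbf\|_{L^2}/(\nu\kappa_0)$, which comes for free from the stationary equation, so no uniform $\h^1$ bound on the trajectory $\bu(t)$ is required. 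What each route buys: the paper's version reuses estimates already established in the proof of Lemma \ref{lem_infty_solutions} and defers the stationary theory to the very end, whereas yours is more self-contained at the energy level but must invoke existence and uniqueness of the small-force steady state at the outset (a classical fact, though your pointer to Appendix \ref{ssec_Grashof_Numbers} is not quite the right citation for it --- that appendix concerns the absorbing ball). The remaining ingredients --- weak lower semicontinuity to bound $\|\bbf\|_{L^2}$, the upgrade of weak $L^2$ convergence of a bounded family to strong $\h^{-1}$ convergence, the Gr\"onwall argument with vanishing forcing, and the use of \eqref{Riesz} for the final assertion --- coincide with the paper's in substance.
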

\begin{proof}
For notational simplicity, denote $\mathbf{F}(t) = P_\sigma\bg\theta (t) $ and observe that $\mathbf{F}(t) \in H$ and $
\|\mathbf{F}(t)\|_{L^2} \le g \|\theta_0\|_{L^2}$. Moreover, since $\mathbf{F}(t) \ra \bbf$ weakly,  we also have 
\begin{align}
\|\bbf\|_{L^2} \le \limsup_{t \ra \infty} \|\mathbf{F}(t)\|_{L^2} \le \epsilon \nu^2\kappa_0^2.
\end{align}
Thus, by shifting time if necessary, we may assume without loss of generality, that
\begin{gather}
\max\{\|\bbf\|_{L^2}, \sup_{t\ge 0} \|\mathbf{F}(t)\|_{L^2}\} \le 2\epsilon \nu^2 \kappa_0^2.
\end{gather}
Furthermore, due to \eqref{timeinvdblbd} in the proof of Lemma \ref{lem_infty_solutions}, and by shifting time again if necessary, we also have the uniform bound
\begin{gather}  \label{uniform}
\sup_{t \ge 0} \|\bu(t)\|_{\h^1} \le C \epsilon \nu \kappa_0,
\end{gather}
where $C$ is a non-dimensional, absolute constant.
Recall also that on bounded subsets of $H$, the weak topology is metrizable and a bounded sequence
$\bv_n \ra \bv$ weakly if and only if $\|A^{-1/2}(\bv_n-\bv)\|_{L^2} \ra 0$.
 Thus, 
\begin{gather}  \label{fconverge}
\lim_{t \ra \infty} \|A^{-1/2}(\mathbf{F}(t) -\bbf)\|=0.
\end{gather}
Let $\bv(t)$ be the solution of the Navier-Stokes equation with initial data $\bu_0\equiv\mathbf{0}$, and the force given by $\bbf$. Denote $\bw=\bu - \bv$. Then $\bw$ solves
\begin{align}
\frac{d\bw}{dt} + \nu A\bw +B(\bu,\bw)+B(\bw,\bu)=\mathbf{F}(t)-\bbf,\quad \nabla \cdot \bw =0,\quad \bw(0)=\bu_0.
\end{align}
Taking inner product of the above equation with $\bw$ and by applying Young's inequality, we readily obtain
\begin{align}
\frac12 \frac{d}{dt} \|\bw\|_{L^2}^2 + \frac{\nu}{2} \|A^{\frac12}\bw\|_{L^2}^2
\le \frac{1}{2\nu}\|A^{-\frac12}(\mathbf{F}(t)-\bbf)\|^2_{L^2} + |\ip{B(\bw,\bu)}{\bw}|.
\end{align}
From \eqref{B424}, we get  $|\ip{B(\bw,\bu)}{\bw}| \le \|A^{\frac12}\bu\|_{L^2} \|\bw\|_{L^2}\|A^{\frac12}\bw\|_{L^2}$. 
Thus by Young's inequality, we have
\begin{align}
\frac12 \frac{d}{dt} \|\bw\|_{L^2}^2 + \frac{\nu}{4} \|A^{\frac12}\bw\|_{L^2}^2
& \le \frac{1}{2\nu}\|A^{-\frac12}(\mathbf{F}(t)-\bbf)\|^2_{L^2} + \frac{1}{\nu}\|A^{\frac12}\bu\|_{L^2}^2\|\bw\|_{L^2}^2
\\\notag
& \le \frac{1}{2\nu}\|A^{-\frac12}(\mathbf{F}(t)-\bbf)\|^2_{L^2} + C^2\epsilon^2 \nu\kappa_0^2\|\bw\|^2,
\end{align}
where the inequality in the second line is obtained using \eqref{uniform}. If $C^2\epsilon^2 \le \frac18$, by Poincar\'{e} inequality, we obtain
\begin{align}
 \frac{d}{dt} \|\bw\|_{L^2}^2 + \frac{\nu\kappa_0^2}{4} \|\bw\|_{L^2}^2 \le \frac{1}{\nu}\|A^{-\frac12}(\mathbf{F}(t)-\bbf)\|^2_{L^2}.
 \end{align}
 By Gronwall inequality and \eqref{fconverge}, it immediately follows that $\lim_{t \ra \infty} \|\bw (t)\|=0$. On the other hand, it is well-known   that
 if $\|\bbf\|_{L^2}\le \epsilon \nu^2\kappa_0^2$  and $\epsilon $ is sufficiently small, then the solution to the Navier-Stokes equations corresponding to the time-independent force   $\bbf$, converges to a fixed point
 \cite{Temam_1995_Fun_Anal, Temam_1997_IDDS, Temam_2001_Th_Num}. Thus, if $\epsilon$ is sufficiently small, $\bv$ converges to a fixed point $\bv_0 \in H$ which satisfies $\nu A\bv_0+ B(\bv_0,\bv_0)=\bbf$. Since $\bw(t)$ converges to zero in $H$, it follows that $\bu(t)$ converges to $\bv_0$ in $H$.
 
 For the second part of the theorem, observe that by \eqref{Riesz},
 \begin{align}
\|P_\sigma\bg\theta (t)\|_{L^2}= \|R_1\theta(t)\|_{L^2}= \|(\Delta)^{-1/2}\partial_{x_1}\theta (t)\|_{L^2}\sim 
 \|\partial_{x_1}\theta (t)\|_{\h^{-1}}.
 \end{align}
It  follows  that if   $\partial_{x_1}\theta (t)$ converges to zero in $\h^{-1}$ as $t \ra \infty$ if and only if 
$\|P_\sigma\bg\theta (t)\|_{L^2}$ converges to zero. Thus one can apply the first part of the theorem with $\epsilon =0$ to conclude that $\bu(t) \ra {\mathbf 0}$ in $H$ as $t \ra \infty$.
\end{proof}

\begin{remark}  \label{rmk:nonturbulent}
It is useful to examine Theorem \ref{thm:steadystate} in light of the examples presented in Subsection \ref{steadystates}. In the ``vertical solutions" presented in (i), due to \eqref{Riesz}, 
$P_\sigma\bg\theta (t) \equiv 0$ and in conformity to theorem \ref{thm:steadystate}, $\bu(t) \ra 0$. On the other hand, for the  ``horizontal solutions" presented in Subsection \ref{steadystates} (ii), due again to \eqref{Riesz}, $\|P_\sigma\bg\theta (t)\|_{L^2} \equiv \|\theta_0\|_{L^2}$. Consequently, $G_\sigma^*$ can be made arbitrarily large, yet the flow is not turbulent as it converges to a (laminar) steady state. Incidentally, in addition to the example provided by Marchioro \cite{Marchioro_1986_Ab_Turb1,Marchioro_1987_Ab_Turb2} with forcing in the first eigenmode of the Stokes operator, this provides another example of solution to the 2D Navier-Stokes equations with arbitrarily large Grashof number for which the solution converges to a steady state.

\end{remark}

\begin{remark}
If one considers equations \eqref{Bouss_mo} and \eqref{Bouss_div}, this
is  the usual 2D Navier-Stokes system with forcing $\bg\theta$, and so the
framework developed recently in \cite{Balci_Foias_Jolly_2010} for 2D turbulence
with forcing at all scales applies. Moreover, unlike the usual 2D Navier-Stokes equations, the forcing here is not {\it ad hoc}, but rather intrinsic to the system.
Noting that $\|\bg\theta\|_{L^2}=\|\bg\theta_0\|_{L^2}$ for all $t \ge 0$, it seemingly allows arbitrarily large Grashof number as well. However, due to the divergence free condition on the velocity field, the {\it effective driving force} is given by  $P_\sigma\bg\theta (t)$. The quantity 
$\|P_\sigma\bg\theta (t)\|_{L^2}$, although bounded above  by $\|\bg\theta_0\|_{L^2}$, can potentially approach zero for large time. Thus, if one can show that
 $G_{* \sigma}=\liminf_{t\ra \infty} \|P_\sigma\bg\theta (t)\|_{L^2}$ can be made arbitrarily large, then one can assert that  \eqref{Bouss_mo} and \eqref{Bouss_div}, or equivalently \eqref{functmom} contains the entirety of Grashof numbers involved in 2D turbulent dynamics. Whether or not arbitrarily large $G_{* \sigma}$ and a truly turbulent flow can be achieved, remains an open question at this point. As observed in Remark \ref{rmk:nonturbulent}, 
 one can indeed achieve arbitrarily large $G_{* \sigma}$; the corresponding flow however approaches  a steady state and is therefore not turbulent. The velocity profile is also laminar; thus it does not yield a turbulent Lagrangian dynamics either.

\end{remark}

\comments{Let us for a moment consider equations \eqref{Bouss_mo} and \eqref{Bouss_div} 
alone, that is, allowing $\theta$ to be independent, and not coupled to $\bu$
via \eqref{Bouss_den}.  However, let us impose that $\|\theta\|_{L^2}$ is
time-independent (although we do not require $\theta$ itself to be time
independent), as in the case of the full system \eqref{Bouss}.  Of course, this
is just the usual 2D Navier-Stokes system with forcing $\bg\theta$, and so the
framework developed recently in \cite{Balci_Foias_Jolly_2010} for 2D turbulence
with forcing at all scales, applies.\footnote{The framework in
\cite{Balci_Foias_Jolly_2010} allows for time-dependent forcing as well;
consequently, several adequate notions of a generalized Grashof number are
defined.  However, in our case, all of these notions of generalized Grashof
number coincide, even though the forcing is time-dependent.}  In particular, the
behavior of the 2D turbulence should depend upon $\|\theta\|_{L^2}$, via a
generalized Grashof number.  However, in \cite{Balci_Foias_Jolly_2010}, such an
external force was used for analytical purposes and considered to be an abstract
quantity, whereas in the present work, we see that this kind of forcing arises
naturally.  Moreover, the full system \eqref{Bouss} encompasses a very
wide catalog of 2D turbulent behavior as $\|\thetaInit\|_{L^2}$ varies, at least
up to any constraints imposed upon $\bu$ by the coupling with \eqref{Bouss_den}.}

 \section{The Energy-Enstrophy Plane}\label{sec_Energy_Enstrophy}
\noindent
In this section, we investigate the relationship between the energy and the
enstrophy in the conservative set.  Here, we follow many of the ideas of
\cite{Dascaliuc_Foias_Jolly_2005,Dascaliuc_Foias_Jolly_2007,
Dascaliuc_Foias_Jolly_2008,Dascaliuc_Foias_Jolly_2010} which investigated the
relationship between energy, enstrophy, and palenstrophy in the attractor of the
2D Navier-Stokes equations.  

For $(\bu,\theta)\in \cA$, let us define
\begin{align}
 \chi(t) := \frac{\|\bu(t)\|_{\h^1}^2}{\|\bu(t)\|_{L^2}}, 
 \quad\text{and}\quad
 \lambda(t) := \frac{\|\bu(t)\|_{\h^1}^2}{\|\bu(t)\|_{L^2}^2}.
\end{align}
After some computation, which was carried out in the context of the 2D Navier-Stokes equations in \cite{Balci_Foias_Jolly_2010}, it can be shown that
\begin{align}
 \frac{d\chi}{dt}
 \label{chi_equation}&=
 \frac{\nu}{2\|\bu\|_{L^2}}\pnt{\frac{\|\bg\theta\|_{L^2}^2}{\nu^2}-\chi^2-\left\|2A\bu-\frac{
\|\bu\|_{\h^1}^2}{\|\bu\|_{L^2}^2}\bu-\frac{\bg\theta}{\nu}\right\|_{L^2}^2}
 \\\notag&=
 \frac{\nu}{2\|\bu\|_{L^2}}\pnt{\frac{g^2\|\thetaInit\|_{L^2}^2}{\nu^2}-\chi^2-\left\|\bw-\frac{
\bg\theta}{\nu}\right\|_{L^2}^2}
\end{align}
where $\bw := \pnt{2A-\frac{\|\bu\|_{\h^1}^2}{\|\bu\|_{L^2}^2}}\bu$.  Let us consider the quantities 
$\|\bu\|_{L^2}$ and $\|\bu\|_{\h^1}$ as variables.  Level sets of $\chi$ correspond to curves along which $\|\bu\|_{L^2}$ and $\|\bu\|_{\h^1}$ are parabolically related.  Due to \eqref{chi_equation}, if $\chi(t)>g\|\thetaInit\|_{L^2}/\nu$, then $\chi$ must decrease.  Furthermore, due to the Poincar\'e inequality, we must always have $\lambda(t)\geq\lambda_1$.  Therefore, trajectories flow towards the set 
\begin{align}
 \Lambda := \set{(\bu,\theta): \|\bu\|_{\h^1}^2\leq \frac{g\|\thetaInit\|_{L^2}}{\nu}\|\bu\|_{L^2}, \quad
 \|\bu\|_{\h^1}^2\geq \lambda_1\|\bu\|_{L^2}^2 ,\quad
 \|\theta\|_{L^2}=\|\thetaInit\|_{L^2}}.
\end{align}
It follows that $\cA\subset\Lambda$.  
See Figure \ref{parabola_pic} for a depiction of the projection of $\Lambda$ onto the energy-enstrophy plane.  

\begin{figure}[htp]
\begin{tikzpicture}[line cap=round,line join=round,>=triangle 45,x=8.0cm,y=4.0cm]
\draw[->,color=black] (-0.2,0) -- (1.17,0);
\foreach \x in {-0.2,0.2,0.4,0.6,0.8,1}
\draw[shift={(\x,0)},color=black] (0pt,-2pt);
\draw[->,color=black] (0,-0.22) -- (0,1.2);
\foreach \y in {-0.2,0.2,0.4,0.6,0.8,1}
\draw[shift={(0,\y)},color=black] ;
\clip(-0.2,-0.22) rectangle (1.17,1.2);
\draw [fill=gray,fill opacity=0.5] plot [smooth,samples=100,domain=0:1](\x,{\x}) -- plot [smooth,samples=100,domain=0:1] (\x,{sqrt(\x)});
\draw (-0.15,1.19) node[anchor=north west] {$\frac{\|\mathbf{u}\|_{\h^1}^2}{\nu^2\kappa_0^2 G^2}$};
\draw (1.05,0.005) node[anchor=north west] {$\frac{\|\mathbf{u}\|_{L^2}^2}{\nu^2G^2}$};
\draw[smooth,samples=100,domain=0:1.17] plot(\x,{sqrt((\x))});
\draw [domain=-0.0:1.17] plot(\x,{(-0--1*\x)/1});
\draw (0.19,1.0) node[anchor=north west] {$\|\mathbf{u}\|_{\h^1}^2 = \frac{g\|\thetaInit\|_{L^2}}{\nu}\|\mathbf{u}\|_{L^2}$};
\draw (0.57,0.59) node[anchor=north west] {$\|\mathbf{u}\|_{\h^1} =\kappa_0\|\mathbf{u}\|_{L^2}^2$};
\end{tikzpicture}
\caption{The projection of the attractor lies inside the shaded region.}
\label{parabola_pic}
\end{figure}
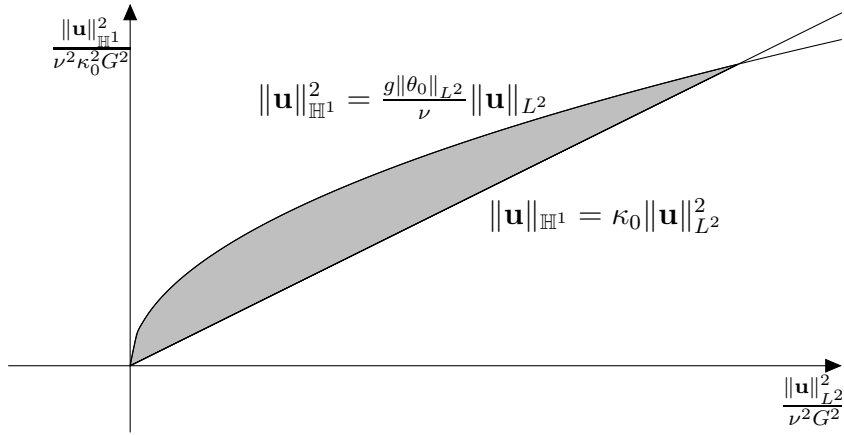

Let us consider the upper-boundary of this set; that is, the parabola $\|\mathbf{u}\|_{\h^1}^2 = \frac{g\|\thetaInit\|_{L^2}}{\nu}\|\mathbf{u}\|_{L^2}$. Suppose there is a point of $(\bu,
\theta)$ which lies on the parabola, and moreover that $\bu\equiv\buInit$ is independent of time (\textit{a priori}, $\theta$ may be balanced by a time-dependent pressure).  Then, since $\bu(t) = \buInit$, we have $\lambda(t) = \lambda(0)$.  On the parabola, equation \eqref{chi_equation} yields
\begin{align}\label{A_lambda_eqn}
0=
2A\bu-\frac{\|\bu\|_{\h^1}^2}{\|\bu\|_{L^2}^2}\bu-\frac{\bg\theta}{\nu}
=
2A\bu-\lambda(0)\bu-\frac{\bg\theta}{\nu}
\end{align}
so that, in fact, $\theta(t)=\theta_0$.
Taking the inner product with $\bu$ gives
\begin{align}\label{A_lambda_eqn_ip}
 2\|\bu\|_{\h^1}^2=\lambda(0)\|\bu\|_{L^2}^2+\frac{\ip{\bg\theta}{\bu}}{\nu}
  \quad\Rightarrow\quad
  \nu\lambda(0) 
  =
 \frac{\ip{\bg\theta}{\bu}}{\|\bu\|_{L^2}^2}.
\end{align}
Owing to \eqref{A_lambda_eqn_ip}, and the fact the $\bu$ is on the parabola, 
it follows that
\begin{align}
\frac{\ip{\bg\theta}{\bu}}{\nu\|\bu\|_{L^2}^2}
=
\lambda(0) 
=\frac{\|\bu\|_{\h^1}^2}{\|\bu\|_{L^2}^2}
=\frac{g\|\theta\|_{L^2}}{\nu\|\bu\|_{L^2}}.
\end{align}
Thus, 
\begin{align}
\ip{\bg\theta}{\bu}
=\|\bg\theta\|_{L^2}\|\bu\|_{L^2}.
\end{align}
By the Cauchy-Schwarz Theorem, we must have that $\bg\theta$ is a scalar multiple of $\bu$, say $\bg\theta=c\bu$ where $c=\frac{\ip{\bg\theta}{\bu}}{\|\bu\|_{L^2}^2}$.  
Applying this to \eqref{A_lambda_eqn}, we have
\begin{align}\label{A_lambda_eigenvalue}
2A\bu
=
\lambda(0)\bu+\frac{\bg\theta}{\nu}
=
\pnt{\lambda(0)+\frac{c}{\nu}}\bu
=\pnt{\frac{\|\bu\|_{\h^1}^2}{\|\bu\|_{L^2}^2}+\frac{\ip{\bg\theta}{\bu}}{\|\bu\|_{L^2}^2}}\bu
\end{align}
so that $\bu$ is an eigenfunction of $A$ with eigenvalue $\frac{1}{2}\pnt{\frac{\|\bu\|_{\h^1}^2}{\|\bu\|_{L^2}^2}+\frac{\ip{\bg\theta}{\bu}}{\|\bu\|_{L^2}^2}}$.

Going back to the equation $(0,g\theta)^T = \bg\theta=c\bu\equiv c(u_1,u_2)^T$, we observe that we must have $u_1\equiv0$.
The divergence-free condition then implies that $\partial_{x_2}u_2=0$, so that $\bu\cdot\nabla\bu \equiv0$.  Moreover, since $g\theta = c u_2$, we have $\partial_{x_2}\theta=0$, so that $\partial_t\theta = -\bu\cdot\theta =0$.  Thus, $\theta$ depends only on $x_1$, and again since $g\theta = c u_2$, so does $u_2$.  The entire system \eqref{Bouss} therefore reduces to the following relation:
\begin{align}
 \partial_{x_2}p-\nu\partial_{x_1}^2u_2 = g\theta.
\end{align}
Applying $\partial_{x_2}$ yields $\partial^2_{x_2}p=0$, so that the periodic boundary conditions  imply that $p$ is a constant.  The relation now becomes
\begin{align}
 -\nu\partial_{x_1}^2u_2 = g\theta= c u_2.
\end{align}
The periodic boundary conditions further constrain $c$ to be of the form  
\begin{align}
 c = c_n := \nu\frac{n^2\pi^2}{L^2},\quad \text{ for some } n\in\nN.
\end{align}
Summarizing these observations, we have the following proposition, which essentially appeared in \cite{Dascaliuc_Foias_Jolly_2005} in the context of the Navier-Stokes equations.
\begin{proposition}
 Suppose $(\bu,\theta)$ is a smooth, steady state solution of \eqref{Bouss} which lies on the parabola $\|\mathbf{u}\|_{\h^1}^2 = \frac{g\|\thetaInit\|_{L^2}}{\nu}\|\mathbf{u}\|_{L^2}$
Then $u_2$ and $\theta$ depend only on $x_1$, and 
 \begin{align}
  u_1\equiv0, \quad  u_2 = c_n\theta,\quad c_n:=\frac{gL^2}{\nu n^2\pi^2}\;\text{ for some } n\in\nN
 \end{align}
 and $u_2=u_2(x_1)$ is an eigenfunction of the operator $-\partial_{x_1}^2$ (respecting the periodic boundary conditions and mean-free condition).  
\end{proposition}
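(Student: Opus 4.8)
The plan is to run the steady-state hypothesis through the evolution law \eqref{chi_equation} for $\chi$. Since $(\bu,\theta)$ is a steady state, $\bu$ is time-independent and so $d\chi/dt=0$; and since the point lies \emph{exactly} on the parabola, $\chi = g\|\thetaInit\|_{L^2}/\nu$, which makes the first two terms inside the parentheses of \eqref{chi_equation} cancel. Discarding the trivial case $\bu\equiv\mathbf{0}$, the prefactor $\nu/(2\|\bu\|_{L^2})$ is nonzero, so the remaining nonnegative squared-norm term must vanish: this is precisely the pointwise relation \eqref{A_lambda_eqn}, and it simultaneously forces $\theta$ to be time-independent, so no generality is lost by the a priori possibility of a time-dependent pressure balancing $\theta$.

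Next I would test \eqref{A_lambda_eqn} against $\bu$. Using $\lambda = \|\bu\|_{\h^1}^2/\|\bu\|_{L^2}^2$ and then the parabola relation to rewrite $\|\bu\|_{\h^1}^2$, the inner product collapses (cf. \eqref{A_lambda_eqn_ip}) to $\ip{\bg\theta}{\bu} = \|\bg\theta\|_{L^2}\|\bu\|_{L^2}$, which is the equality case of Cauchy--Schwarz. Hence $\bg\theta$ is a scalar multiple of $\bu$, say $\bg\theta = c\bu$ with $c = \ip{\bg\theta}{\bu}/\|\bu\|_{L^2}^2$. Because $\bg = (0,g)^T$, the first component of this identity reads $0 = c u_1$, forcing $u_1\equiv0$; the degenerate possibility $c=0$ would give $\theta\equiv0$ and then $\bu\equiv\mathbf{0}$, which is already excluded.

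With $u_1\equiv0$ in hand, the divergence-free condition reduces to $\partial_{x_2}u_2=0$, so $u_2=u_2(x_1)$, and the relation $g\theta=cu_2$ then forces $\theta=\theta(x_1)$ as well. In particular $(\bu\cdot\nabla)\bu\equiv0$ and $(\bu\cdot\nabla)\theta\equiv0$, so the steady version of \eqref{Bouss_mo} collapses to $\partial_{x_2}p - \nu\partial_{x_1}^2 u_2 = g\theta$, its first component giving $\partial_{x_1}p=0$. Applying $\partial_{x_2}$ and using periodicity in $x_2$ yields $p$ constant, leaving the one-dimensional eigenvalue problem $-\nu\partial_{x_1}^2 u_2 = g\theta = c u_2$. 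Thus $u_2$ is an eigenfunction of $-\partial_{x_1}^2$ subject to the periodic, mean-free conditions on $[0,L]$; these conditions quantize $c$ to the form $c = \nu n^2\pi^2/L^2$ for some $n\in\nN$, and relabeling through $u_2=(g/c)\theta$ produces the constant $c_n = gL^2/(\nu n^2\pi^2)$ together with the eigenfunction conclusion of the statement.

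The argument is almost entirely mechanical once \eqref{chi_equation} is available, so the one step demanding genuine care is the Cauchy--Schwarz equality: I would want to verify that lying exactly on the boundary parabola (rather than merely inside the region $\Lambda$) is what forces equality, and to dispose of the degenerate configurations $\bu\equiv\mathbf{0}$ and $\theta\equiv0$ explicitly. Everything downstream---extracting $u_1\equiv0$, the purely $x_1$-dependence of $u_2$ and $\theta$, the constancy of $p$, and the quantization of $c$---follows directly from the divergence-free constraint and the periodic boundary conditions.
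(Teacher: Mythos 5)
Your proposal is correct and follows essentially the same route as the paper: setting $d\chi/dt=0$ on the parabola in \eqref{chi_equation} to force the vanishing of the squared-norm term (hence \eqref{A_lambda_eqn}), extracting the Cauchy--Schwarz equality $\ip{\bg\theta}{\bu}=\|\bg\theta\|_{L^2}\|\bu\|_{L^2}$ to get $\bg\theta=c\bu$ and $u_1\equiv0$, and then reducing via the divergence-free and periodicity constraints to the quantized one-dimensional eigenvalue problem $-\nu\partial_{x_1}^2u_2=cu_2$. Your explicit disposal of the degenerate cases $\bu\equiv\mathbf{0}$ and $c=0$ is a small tidiness improvement over the paper's exposition, but not a different argument.
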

\begin{remark}
 The above proposition implies that steady states on the upper boundary of $\Lambda$ are discrete points corresponding to geophysical steady states of stationary columnular flows.
 Somewhat more generally, it can be shown that if $(\bu,\theta)$ is a smooth flow for which $\theta$ depends upon $x_1$ alone, and $\theta_{x_1}\neq0 \text{ a.e.}$, then $(\bu,\theta)$ is one of the  horizontal flows described by \eqref{horizontal_heat_eqn} and \eqref{horizontal_ic}.
\end{remark}

\section{Open Questions}\label{sec_open_questions}

In this paper, we have introduced a notion of an attractor for a semi-dissipative system, i.e.,  
a system which is a hybrid of  parabolic and  hyperbolic equations. While in some respects, it shares certain properties with the attractor of a dissipative system, there are several
 open questions concerning the attractor $\cA$. 
\begin{enumerate}[(i)]
\item We showed that the attractor $\cA$ is in some respects, a thin set, i.e., it has empty interior. 
This proof uses the fact that the velocity component of any point on $\cA$ is regular (belongs to $V$). On the other hand, we saw in 
subsection \ref{steadystates} that $\cA$ contains infinite dimensional subspaces of steady states. In particular, the projection of the attractor on the temperature component contains all functions of the type
$\theta=\theta(x_1)$ or $\theta=\theta(x_2)$ or  functions of the form $\theta(\bx)=h(\bk \cdot \bx)$, for an arbitrary function of one variable $h$. 
The question of how rich this set can be remains open.
In particular, if one takes the projection of $\cA$ onto the temperature component (say $P_\theta(\bu,\theta):=\theta$), the following questions remain open
\begin{enumerate}[(a)]
\item Does $P_\theta\cA$ have an empty interior in $L^2$?
 \item Is $P_\theta\cA$ a proper subset of $L^2$?
 \item Is $P_\theta\cA$ dense in $L^2$? 
 \end{enumerate}
 It is curious to note that in all the examples of steady states obtained here, the temperature $\theta$ is a function of only one variable.

\item We showed that although $\cA$ is unbounded and infinite dimensional, it is $\sigma$-compact in the weak topology. In particular, it can be written as a countable union of weakly compact omega limit sets $\cA_r=\omega(B_r)$ where $B_r$ are absorbing, invariant balls  as defined in \eqref{def_B_r}. The question is whether the attractor
has a pancake-like structure. In other words, is it true that $\cA = 
\bigcup_{r \geq 0}\omega(\tilde{B}_r)$, where 
\begin{gather}
 \tilde{B}_r:= \left\{(\buInit,\theta_0): \|\thetaInit\|_{L^2}=r, \|\buInit\|_{\h^1} \le R(r)=2\frac{gr}{\nu\kappa_0^2}\right\}?
\end{gather}
If this is the case, the $\bu$-component of $\omega(\tilde{B}_r)$ can be regarded as an attractor at {\it level } $r$ for the 2D NSE with time varying force $\theta$ whose $L^2$-norm remain fixed at $r$. The temperature component $\theta$ on the attractor is then a {\it rearranged} version of the initial temperature $\theta_0$. The problem with this picture is that an element of the omega limit set is obtained as a weak  limit of points $S(t)(\bu_{0,n},\theta_{0,n})$  with $(\bu_{0,n},\theta_{0,n}) \in \tilde{B}_r$. In the weak limit, the norms might decrease, i.e., it is possible that 
\begin{align}
\lim_{n \ra \infty} \|S(t)(\bu_{0,n},\theta_{0,n})\|_{L^2} >
\|\lim_{n \ra \infty} S(t)(\bu_{0,n},\theta_{0,n})\|_{L^2}. 
\end{align}
This might destroy the simplistic pancake structure described above. Whether or not this happens, and if it does, what  its implication is for the asymptotic dynamics, are questions that remain open.   Furthermore, although the whole attractor is infinite dimensional, is there any kind of finite dimensionality in its constituent pieces 
$\cA_r$?

\comments{
\item Given two solutions with the same velocity, say $(\bu,\theta_1)$ and $(\bu,\theta_2)$, the momentum equation immediately forces $\theta_1=\theta_2$.  However, the reverse question remains open; namely, given two solutions with the same temperature, say $(\bu_1,\theta)$ and $(\bu_2,\theta)$, does it follow that $\bu_1=\bu_2$?  This question is not determined in an obvious way from the momentum equation, due to the two-way nonlinear coupling between $\bu$ and $\theta$ in the transport equation.   If the answer is positive, then all the information of $\cA$ is contained in $P_\theta\cA$, which could have profound implications, e.g., for numerical simulations, due to the large reduction in the number of degrees of freedom.  On the other hand, if it is false, it would imply that the dynamics of the Boussinesq system are remarkably distinct from those of the Navier-Stokes equations.
}

\end{enumerate}

\comments{
In this paper, we have introduced a notion of an attractor for a semi-dissipative system, i.e. 
a system which is a hybrid of  parabolic and  hyperbolic equations. While  it shares certain properties with the attractor of a dissipative system, there are several
 open questions concerning this attractor $\cA$. 
\begin{itemize}
\item[(i)] We showed that the attractor $\cA$ is in some respects, a thin set, i.e., it has empty interior. 
This proof uses the fact that the velocity component of any point on $\cA$ is regular (belongs to $V$). On the other hand, we saw in 
Subsection \ref{steadystates} that $\cA$ contains infinite dimensional subspaces of steady states. In particular, the projection of the attractor on the temperature component contains all functions of the type
$\theta=\theta(x_1)$ or $\theta=\theta(x_2)$ or  functions of the form $\theta(\bx)=h(\bk \cdot \bx)$, for an arbitrary function of one variable $h$. 
The question of how rich this set can be remains open.
In particular, it is not clear  that if one takes the projection of $\cA$ onto the temperature component, whether (a) that set has empty interior in $L^2$?  (b) Is it a proper subset of $L^2$? (c) Is it dense in $L^2$? It is curious to note that in all the examples of steady states obtained here, the temperature 
$\theta$ is a function of only one variable.
\item[(ii)] We showed that  although unbounded and infinite dimensional, our attractor $\cA$ is $\sigma$-compact. In particular, it can be written as countable union of omega limit sets $\cA_r=\omega(B_r)$ where $B_r$ are absorbing, invariant balls  as defined in \eqref{def_B_r}. The question is whether the attractor
has a pancake structure. In other words, is it true that 
\begin{gather*}
\qquad \cA = 
\bigcup_{r >0} \omega(\tilde{B}_r), \tilde{B}_r= \left\{(\buInit,\theta_0): \|\thetaInit\|_{L^2}=r, \|\buInit\|_{L^2} \le R(r)=2\frac{gr}{\nu\kappa_0^2}\right\}.
\end{gather*}
If this is the case, the $\bu$-component of $\omega(\tilde{B}_r)$ can be regarded as an attractor at {\it level } $r$ for the 2D NSE with time varying force $\theta$ whose $L^2$-norm remain fixed at $r$. The temperature component $\theta$ on the attractor is then a {\it rearranged} version of the initial temperature $\theta_0$. The problem with this picture is that an element of the omega limit set is obtained as a weak  limit of points $S(t)(\bu_{0,n},\theta_{0,n})$  with $(\bu_{0,n},\theta_{0,n}) \in \tilde{B}_r$. In the weak limit, the norms might decrease, i.e., it is possible that 
\begin{align}
\lim_{n \ra \infty} \|S(t)(\bu_{0,n},\theta_{0,n})\| >
\|\lim_{n \ra \infty} S(t)(\bu_{0,n},\theta_{0,n})\|. 
\end{align}
This potentially might destroy the simplistic pancake structure described above. Whether or not this happens, and if it does, what  its implication is for the (asymptotic dynamics), are questions that remain open at this point.   Furthermore, although the whole attractor is infinite dimensional, is there any kind of finite dimensionality in its constituent pieces 
$\cA_r$?

\end{itemize}

}

\section{Appendix}\label{sec_appendix}

\subsection{Absorbing Ball}\label{ssec_Grashof_Numbers}
For the sake of completeness, we include here a proof of the existence of an absorbing ball.
By taking the inner product of \eqref{Bouss_mo}, we obtain the energy estimate
\begin{align}
 \frac{1}{2}\frac{d}{dt}\|\bu\|_{L^2}^2 +\nu\|\bu\|_{\h^1}^2 
 &= \ip{P_\sigma(\theta\bg)}{\bu}
 \leq 
 \|P_\sigma(\bg\theta)\|_{L^2}\|\bu\|_{L^2}
 \\&\leq \notag
 \frac{\|P_\sigma(\bg\theta)\|_{L^2}^2}{2\nu\kappa_0^2} +
\frac{\nu\kappa_0^2}{2}\|\bu\|_{L^2}^2.
\end{align}
Using the Poincar\'e inequality \eqref{poincare}, \eqref{Riesz}, and
\eqref{Lp_conserved}, we obtain
\begin{align}\label{L2_after_Poincare}
 \frac{d}{dt}\|\bu\|_{L^2}^2 +\nu\kappa_0^2\|\bu\|_{L^2}^2 
 &
 \leq \frac{\|P_\sigma(\bg\theta)\|_{L^2}^2}{\nu\kappa_0^2}
  =
g^2\frac{\|\thetaInit\|_{L^2}^2-\|R_2\thetaInit\|_{L^2}^2}{\nu\kappa_0^2},
\end{align}
so that
\begin{align}
 \|\bu(t)\|_{L^2}^2
 \leq
 e^{-\nu\kappa_0^2t}\|\buInit\|_{L^2}^2
 +\frac{g^2}{\nu^2\kappa_0^4}\pnt{\|\thetaInit\|_{L^2}^2-\|R_2\thetaInit\|_{L^2}^2}(1-e^{
-\nu\kappa_0^2t}).
\end{align}
Thus,
\begin{align}\label{G_def}
\limsup_{t\maps\infty}\|\bu(t)\|_{L^2}^2
\leq 
\frac{g^2}{\nu^2\kappa_0^4}
\pnt{\|\thetaInit\|_{L^2}^2-\|R_2\theta(t)\|_{L^2}^2}\le\nu^2 G_\sigma^{*2}.
\end{align}
 
In particular, there exists a time $t_*=t_*(\|\buInit\|_{L^2})$ such that, for $t>t_*$, 
$\bu(t)\in B_{2\nu G_\sigma^*}$, the ball of radius $2\nu G_\sigma^*$ in $H$.  For example, $t_*$
can be taken as
\begin{align}
 t_*(\|\buInit\|_{L^2}) = \frac{1}{\nu\kappa_0^2}
 \max\set{1,\log\frac{\|\buInit\|_{L^2}^2}{3\nu^2G_\sigma^{*2}}}.
\end{align}

\subsection{Mean Preservation}\label{ssec_Galilean_Invariance}
We show that the mean is preserved by the flow.
Since the equation is globally well-posed and possesses
higher-order regularity, we may assume our solutions are smooth enough for the
operations below to be justified rigorously.  We give the argument only in two
dimensions, but it extends without difficulty to the higher dimensional case.

 Integrating \eqref{Bouss_den} over $\Omega$, we find 
\begin{align}
\frac{d}{dt}\int_{\Omega}\theta\,dx =- \int_{\Omega}\nabla\cdot(\bu\theta)\,dx=0
\end{align}
due to the periodic boundary conditions.  
Thus, 
\begin{align}
\int_{\Omega}\theta\,dx =c_1, 
\end{align}
for some dimensional constant $c_1$.     
Integrating \eqref{Bouss_mo} over $\Omega$, we find 
\begin{align}
&\quad
\frac{d}{dt}\int_{\Omega}\bu\,dx
+ \int_{\Omega}\nabla\cdot(\bu\otimes\bu) \,dx
+\int_{\Omega}\nabla p\,dx
= \int_{\Omega}\nu\triangle\bu\,dx
+ \int_{\Omega}\bg\theta \,dx,
\end{align}
so that 
\begin{align}
\label{ud_int}
\frac{d}{dt}\int_{\Omega}\bu\,dx
= \int_{\Omega}\bg\theta \,dx,
\end{align}
Writing this in component form, this becomes
\begin{subequations}
\begin{align}
\label{int_zero_pre_i}
 \frac{d}{dt}\int_{\Omega}u_1\,dx &=0,
 \\\label{int_zero_pre_d}
 \frac{d}{dt}\int_{\Omega}u_2\,dx &= g\int_{\Omega}\theta\,dx=gc_1.
\end{align}
\end{subequations}
Thus, 
\begin{align}
\int_{\Omega}u_2\,dx = gc_1t+c_0.
\end{align}
Taking the $L^2$ inner product of \eqref{Bouss_mo}
with $\bu$, and of \eqref{Bouss_den}, using \eqref{symmetry}, and adding the
results, we obtain
\begin{align}
 \tfrac{1}{2}\tfrac{d}{dt}\|\bu\|_{L^2}^2+\nu\|\bu\|_{\h^1}^2 
 = \ip{\bg\theta}{\bu}
 \leq g\|\theta\|_{L^2}|\bu\|_{L^2}
\end{align}
Applying Poincar\'e's inequality and the Cauchy-Schwarz and Young's
inequalities, we find
\begin{align}
 \tfrac{1}{2}\tfrac{d}{dt}\|\bu\|_{L^2}^2+\nu\lambda_1\|\bu\|_{L^2}^2 
 \leq  \tfrac{g^2}{2\nu\lambda_1}\|\theta\|_{L^2}^2+ \tfrac{\nu\lambda_1}{2}\|\bu\|_{L^2}^2.
\end{align}
Thus,
\begin{align}
\tfrac{d}{dt}\|\bu\|_{L^2}^2+\nu\lambda_1\|\bu\|_{L^2}^2 
 \leq \tfrac{g^2}{\nu\lambda_1}\|\theta\|_{L^2}^2
 = \tfrac{1}{\nu\lambda_1}\|\thetaInit\|_{L^2}^2.
\end{align}
Integrating this equation, we have
\begin{align}
\|\bu(t)\|_{L^2}^2
&\leq 
e^{-\nu\lambda_1t}\|\buInit\|_{L^2}^2+\int_0^te^{-\nu\lambda_1(t-s)}\tfrac{g^2}{
2\nu\lambda_1}\|\thetaInit\|_{L^2}^2\,ds
 \\&= \notag
e^{-\nu\lambda_1t}\|\buInit\|_{L^2}^2+\frac{1-e^{-\nu\lambda_1t}}{\nu^2\lambda_1^2}
g^2\|\thetaInit\|_{L^2}^2.
\end{align}
In particular, $\bu\in L^\infty(0,T;H)$ for all $T>0$.  This implies
\begin{align}
 |gc_1t+c_0|
 &= \abs{ \int_{\Omega}u_2\,dx}
 \leq  \int_{\Omega}|u_2|\,dx
 \leq |\Omega|^{1/2}\|\bu\|_{L^2}
 \\&\notag
 \leq
|\Omega|^{1/2}\sqrt{e^{-\nu\lambda_1t}\|\buInit\|_{L^2}^2+\frac{1-e^{-\nu\lambda_1t}}{
2\nu^2\lambda_1^2}g^2\|\thetaInit\|_{L^2}^2}
\end{align}
Sending $t\maps\infty$, we see that we must have $c_1=0$.  Thus, 
\eqref{int_zero_pre_i} and \eqref{int_zero_pre_d} become
\begin{align}
 \frac{d}{dt}\int_{\Omega}\bu\,dx=\mathbf{0}, \qquad
\frac{d}{dt}\int_{\Omega}\theta\,dx=0.
\end{align}
Therefore, the flow preserves the spatial mean.

 \section*{Acknowledgement}
 \noindent
 This research was partially supported by NSF grant DMS14-25877 and the CNMS grant at UMBC (A. Biswas) and DMS11-09784 (C. Foias).  
 
\begin{scriptsize}
\bibliographystyle{abbrv}
\bibliography{BFL}
\end{scriptsize}

\end{document}